\newcommand{\mitSloan}{Massachusetts Institute of Technology, Sloan School of Management, \texttt{bonatti@mit.edu}}
\newcommand{\mitIDSSd}{Massachusetts Institute of Technology, Institute for Data, Systems, and Society, \texttt{dahleh@mit.edu}}
\newcommand{\mitIDSSh}{Massachusetts Institute of Technology, Institute for Data, Systems, and Society, \texttt{thibauth@mit.edu}}
\begin{document}


\title{Information Design with Elicitation\\and Strategic Coordination\thanks{We would like to acknowledge the initial contributions that Amir Nouripour made to the paper in its early stages. We are appreciative of some of the calculations and simulations he provided using the model. We thank seminar audiences at ASSA 2025, Bristol, LSE, MIT, Oxford, Toronto, TSE, and Yale, as well as Ian Ball, Dirk Bergemann, Gabriel Carroll, Roberto Corrao, Bob Gibbons, Stephen Morris, Jacopo Perego, Jean Tirole, Mike Whinston, Alex Wolitzky, as well as our discussant, Ilya Segal, for helpful comments and suggestions, and Tianyi Zhang for excellent research assistance. Bonatti acknowledges financial support from NSF grant SES 2519401. Dahleh and Horel acknowledge support from OCP Group and NIH grant R01AG058063.}}

\author{
Alessandro Bonatti\thanks{\mitSloan} \and
Munther A.\ Dahleh\thanks{\mitIDSSd} \and
Thibaut Horel\thanks{\mitIDSSh}
}

\maketitle


\begin{abstract}
We study linear–quadratic games of incomplete information with Gaussian uncertainty, where each player's payoff depends on a privately observed type and a common state. The designer observes the state, elicits types, and sells action recommendations. We characterize all implementable mechanisms with Gaussian joint distributions of actions and fundamentals, and identify the players-optimal, consumer-optimal, and revenue-maximizing designs. In games of strategic complements (substitutes), these optimal mechanisms maximally correlate (anticorrelate) players’ actions. When type uncertainty is large, recommendations become deterministic linear functions of the state and reports, but remain only partially revealing.
\bigskip

    \noindent\textbf{Keywords:}  information design, mechanism design, private information, data markets, algorithmic pricing, digital platforms.\bigskip

\noindent\textbf{JEL Codes:} C72, D82, D83.\bigskip

\end{abstract}\newpage



\section{Introduction}\label{sec:intro}

In a  growing class of competitive environments, centralized digital platforms pool data from market participants, observe the prevailing fundamentals, and sell back action recommendations against subscription or per-transaction fees. The recent antitrust action against RealPage in the multifamily-rental yield-management market is a salient instance, with parallel arrangements operating across markets for property insurance, B2B procurement, and inter-dealer brokerage.  Each platform performs two roles. First, it improves individual firms' decisions with superior information about market fundamentals. Second, it facilitates coordination among competing firms by issuing (paid) action recommendations that steer the industry toward more profitable (though potentially less competitive) equilibria.\footnote{A similar problem is faced by industry information aggregators such as Nielsen and Circana, who sell recommendations based on consumer-provided data rather than proprietary signals. We discuss all these examples at greater length in the \emph{Applications} subsection below.}

However, conveying aggregate demand information alone is insufficient to issue optimal action recommendations; even if all participants share a common prior over the underlying market fundamentals, successfully implementing coordinated  strategies also requires insights into each firm's private characteristics, such as quality or marginal cost. Thus, each of these platforms faces the same core problem: how to elicit the participants' private information by jointly designing the action recommendations and the appropriate fees that ensure truth-telling and compliance, in addition to raising revenue.

In this paper, we study platforms that act as information sellers  in markets where coordinated actions are critical. We consider a joint problem of information and mechanism design in games with quadratic utilities, Gaussian uncertainty, and privately observed idiosyncratic preference types. 
The designer’s tasks are  threefold: (i) to elicit the private types, (ii) to design signals about an unknown state, and (iii) to collect payments. We  provide a flexible framework to answer  key questions such as: (a) Which joint distributions of actions, states, and types can the designer implement with and without monetary transfers? (b) What are the optimal mechanisms for the data buyers' profits and for the platform's revenue? (c) How does the nature of competition influence the solution? (d) What are the implications for downstream consumers?


\paragraph*{Summary of the Results.} 
Our first contribution is to characterize the joint distributions of state, types, and actions that are implementable with and without transfers. We show that the designer's type-elicitation and action-recommendation problems are inseparable: 
incentive compatibility requires preventing double deviations, where a player misreports their type and deviates from the recommended action (\cref{prop:obedience-char-sym} and \cref{prop:ic-char}). In order to deter such deviations,  the joint distribution of state, types, and actions must (in our examples) exhibit a non-negative covariance between a player's reported type and their competitors' actions. Furthermore, implementing any distribution with non-zero covariance between reported types and competitors' actions requires discriminatory transfers; without transfers, only information about the state  can be provided (\cref{prop:ic-char}).


Our second contribution is to  exhibit the structural properties of the optimal mechanisms for arbitrary objective functions that are linear in the covariance parameters of the action-state-type distribution. We show that for any such objective, the players' actions are maximally correlated, positively or negatively, conditional on the fundamentals (\cref{prop:linear}).

The ex ante welfare of the participating players is an example of such an objective. We show that a designer who maximizes the players' profits and has access to discriminatory transfers can elicit the players' types without distorting the equilibrium action distribution---the players-optimal mechanism maximizes their expected profits under the obedience constraints only (\cref{prop:opt-welfare}).  
 Furthermore, if  the prior variance of the private types is large,
the players-optimal action recommendations are linear combinations of the fundamentals, yet they are not fully revealing (\cref{prop:welfare-random}).\footnote{The slackness of the incentive constraints at the players-optimal mechanism extends to several environments but fails in games of strategic complements when the designer wishes to anticorrelate the players' actions.}

The players-optimal mechanism has immediate implications for our motivating examples: under the consortium structure that characterizes industry-benchmarking and pooled-data platforms (insurance ratemaking organizations, compensation-survey providers, source-to-pay vendors), the joint owners can implement the players-optimal action distribution by combining a Gaussian recommendation rule with discriminatory data-submission fees, without distorting actions away from the obedient optimum.\footnote{Since the IC-implementing transfers are pinned down only up to a player-specific constant, the consortium can balance the budget in expectation by rebating the average collected payment to each participating member, leaving the implemented action distribution unchanged.}

 Within the Gaussian class, we also characterize the revenue maximizing mechanism (\cref{prop:opt-revenue}). Both the players-  and the revenue-optimal action distributions systematically differ from that of a complete information game (\cref{prop:comparison}). For both strategic complements and strategic substitutes, the players-optimal design places greater emphasis on  private types, relative to the common state. Under strategic complements, this reliance on private types enhances coordination; under strategic substitutes, it fosters beneficial anti-coordination (e.g., in Cournot, it amplifies the differences in  output levels for firms with different costs).

 The comparison between the  players- and revenue-optimal mechanisms  is more nuanced, because a monopolist information seller wants to limit the players' rents, in addition to generating surplus. When the market profitability (as measured by an exogenous function of the prior means of state and types) is low, the monopolist understates the reliance of the mechanism on the private types, but qualitatively modifies the complete-information signal in the same direction as the players-optimum. When market  profitability is high, however, the opposite result holds---the revenue-maximizing mechanism is even more reliant on private types, relative to the common state.

 We contrast these results with the case of a designer aligned with downstream consumers. For Bertrand competition with private demand shifters (\cref{prop:bertrand-opt}), the consumer-optimal mechanism discloses no information about the common state or about competitors' types: the IC monotonicity binds and the optimum collapses to the prior Bayes--Nash equilibrium, implementable without discriminatory transfers. We trace out the Pareto frontier (i.e., the set of mechanisms that maximize a convex combination of players and consumer welfare) and use it to derive the implications of platforms' market power on the welfare of downstream consumers. Leveraging the comparison of the players- and revenue-optimal mechanisms, we obtain conditions under which, surprisingly, a revenue-maximizing monopolist can benefit consumers, relative to the case of (perfect) competition among platforms.
 
 In the Supplementary Appendix, we examine variations of our model that reflect the range of business models adopted by digital  platforms. In particular, \Cref{sec:participation} extends the model to settings where platforms act as pure 
information intermediaries and cannot exclude non-participating players, and 
\Cref{sec:delegation} examines the case in which players fully delegate their actions
to the platform. 

\paragraph*{Applications.} 
Our model captures several real-world information markets in which a platform collects granular participant data and returns decision guidance. The examples below differ in institutional detail, but share the same economic structure: recommendations are useful because they combine aggregate market information with firm-level inputs that remain privately observed absent data submission.

For example, in industrial benchmarking and pricing intelligence, a statistical agent or trade intermediary pools transaction-level data and sells back industry-specific benchmarks or recommended inputs. The canonical case is property and casualty insurance: Verisk's ISO produces advisory loss costs, rules, and rating content used in carrier ratemaking.\footnote{See National Association of Insurance Commissioners, ``NAIC Loss Cost Bulletins--Rates and Forms Filing,'' \url{https://content.naic.org/industry_rates_forms_loss_cost.htm}.} Closely related benchmarking products appear in consumer packaged goods, where NielsenIQ and Circana aggregate scanner and panel data. These are price-setting environments with strategic complements: higher recommended prices or benchmarked terms for one firm raise the attractiveness of higher prices or terms for its rivals.

In yield management and dynamic pricing, revenue-management vendors such as RealPage observe market-level demand indicators and property-level inputs, then sell recurring price recommendations through subscription or performance-linked contracts.\footnote{See U.S. Department of Justice, Antitrust Division, ``U.S. and Plaintiff States v. RealPage, Inc.,'' \url{https://www.justice.gov/atr/case/us-and-plaintiff-states-v-realpage-inc}.} The downstream game is differentiated price competition with strategic complements.

In wholesale procurement and B2B sourcing, source-to-pay platforms such as SAP Ariba and Coupa collect bids, supplier attributes, contract terms, and buyer requirements, then generate sourcing, bid-comparison, and award recommendations.\footnote{See Gartner Peer Insights, ``Source-to-Pay Suites Reviews and Ratings,'' \url{https://www.gartner.com/reviews/market/source-to-pay-suites}.} Related data intermediaries operate in freight and commodity procurement, where capacity and cost inputs also matter. The downstream game is split-award contracting with strategic substitutes: assigning more volume or capacity to one participant reduces the marginal value of assigning the same volume or capacity to another.

Finally, in inter-dealer brokerage and fixed-income data services, MarketAxess aggregates dealer quotes, trades, and inventory information to produce composite reference prices and execution analytics.\footnote{See LSEG, ``MarketAxess,'' \url{https://www.lseg.com/en/data-analytics/financial-data/pricing-and-market-data/fixed-income-pricing-data/government-and-corporate-bonds/marketaxess}.} The downstream game is inventory unwinding under price impact, again with strategic substitutes.

Each class shares our four main modeling assumptions. A third-party designer with central access to pooled data commits to a recommendation rule and a payment schedule. Each player's best response depends on three variables: the common fundamental, the participant's own private type, and the actions expected from other participants. Each player has a private type--cost, demand-side position, inventory imbalance, or capacity--that the platform recovers only through data submission. Discriminatory transfers take the form of subscription tiers, data-submission fees, transaction charges, or rebates. Both regimes of strategic interaction recur across the examples: complements in benchmarking and yield-management environments, substitutes in procurement and inter-dealer brokerage.

\paragraph*{Related Literature.}

Our paper is closely related to several strands of literature. First, we build on the work of  \citet{BM13}, \citet{bhm15}, \citet{AB19}, and \citet{mapeta20} on information design in games.\footnote{See also \citet{BM19} and \citet{KA19} for surveys of work in this area.} While a special case of our framework (without monetary transfers and private information) coincides with \cite{BM13}, the introduction of privately known types and discriminatory transfers substantially broadens both the scope of the model and its range of applications.

In recent contributions, \cite{smya22} adopt a duality approach to study  information design for competing players with a continuum of actions in linear-quadratic games when the designer's objective is also  quadratic. \cite{mu2024} show that information design reduces to a finite-dimensional semidefinite programming and also characterize optimality via dual certificates.

Second,  our paper contributes to the literature on mechanism design with externalities pioneered by  \citet{segal99} and \citet{jemo00,jemo06}. Relative to all these papers, designing information for competing buyers involves both private types and hidden actions.

Third, viewed as a model of designing and pricing information for a privately informed receiver, our paper adds competing buyers to the mechanism design approach of \citet{BA12}, \citet{komzl17}, \citet{kolo18}, \citet{BBM15}, \citet{BBS18}, \citet{lishxu21}, and \citet{yang22}, with \citet{BBS18} being the single-buyer analog of our setting (albeit with finite actions and states); privately known types to the approach of \cite{adpf86,adpf90}, \citet{BCT19}, and \cite{elgkw24}; and a coordination motive to the setting of \citet{ADHR20}. Relative to \citet{RO21} and \citet{BDHN22}, who consider dominant-strategy games with binary states and actions, our model introduces a  coordination motive for selling information. Our setting is also one of partial and mediated information sharing, in contrast to the complete and voluntary information sharing in the literature of \citet{kirby88}, \citet{vives90}, and \citet{raith96}. 

Finally, our paper contributes to the literature that studies the role of public and private information in determining a firm's ability to exercise market power \citep{vives2002,vives2011,mywa15,behm21,royo19,royo25}, the social value of information \citep{mosh02,anpa07,heve09,yang15}, and the trade-off between adaptation and coordination in multi-division organizations \citep{aldema08,rant08}. The focus on information design with elicitation of agents' private information differentiates our setting from these important papers.


\paragraph*{Notation.} For a vector space $V$ and a subset of vectors
$S\subseteq V$, $\spn(S)$ denotes the linear span of $S$. For $n,m\geq 1$,
$\M_{n,m}(\R)$ denotes the vector space of $n\times m$ matrices with real
entries. For convenience, we write $\M_n(\R)$ when $m=n$ and implicitly
identify $\M_{n,1}(\R)$ with $\R^n$. The identity and all-ones matrices of
$\M_n(\R)$ are denoted respectively by $I_n$ and $J_n$. Finally, $1_n$ denotes
the all-ones vector in $\R^n$, $\S_n^+(\R)$ denotes the cone of positive semidefinite matrices,  and we write $[n]\eqdef\set{1,\dots,n}$.

Unless stated otherwise, all random variables in this paper are assumed to be
defined on the same sample space $(\Omega, \mathcal{F}, \P)$. For random
variables $X\in\R$ and $Y\in\R$,
$\cov(X,Y)\eqdef\E\big[(X-\E[X])(Y-\E[Y])\big]$ denotes the covariance between
$X$ and $Y$ and $\var(X)\eqdef\cov(X,X)$ is the variance of $X$. We also
alternatively write $\sigma_X^2$ for $\var(X)$ and $\sigma_{XY}$ for
$\cov(X,Y)$. By extension, for random vectors $X\in\R^n$ and $Y\in\R^m$,
$\cov(X,Y)\eqdef\E\big[(X-\E[X])\tr{(Y-E[Y])}\big]$ denotes the
cross-covariance matrix of $X$ and $Y$, that is, the matrix in $\M_{n,m}(\R)$
whose entry $(i,j)$ is $\cov(X_i,Y_j)$. Finally, $\var(X)\eqdef \cov(X,X)$
is the covariance matrix of $X\in\R^n$.

\section{Model}\label{sec:model}

\subsection{Basic Game}\label{sec:game}

\paragraph*{Actions and Payoffs.} We consider $n$ players who compete in a game of
incomplete information. In this game, each player $i\in[n]$ has a private-value type $\theta_i$ and faces an unknown (common) payoff-relevant state $\omega$. We write $u_i(a; \theta_i, \omega)$ for the payoff of player $i$ given action profile $a\in\R^n$, type $\theta_i\in\R$ and state $\omega\in\R$.

We restrict ourselves to symmetric games with quadratic
payoffs. 
As in \citet{BM13}, we assume that each player $i$ has a linear best response to $a_{-i}$  given $\omega$ and $\theta_i$:
\begin{equation}\label{eq:best-response}
	a_i = r\sum_{j\neq i}a_j + s\omega + t\theta_i,
\end{equation}
where  $t\neq 0$ and $r\in(-1,\frac{1}{n-1})$.\footnote{This open-interval restriction ensures two properties used throughout: the matrix $J_n(1,-r)$ governing the complete-information best-reply system is invertible (\cref{prop:matrix}), so the Nash equilibrium of the complete-information game exists and is unique (cf.~\textsection\ref{Sec:bench}); and the set of symmetric obedient mechanisms is compact (\cref{prop:obedience-char-sym}), hence the firms-, revenue-, and consumer-optimal problems are bounded. In the exterior of this interval, these optimization problems become unbounded (\cref{rem:compactness}), and at the endpoints, obedient mechanisms (including the Nash equilibrium) fail to exist generically.} The sign of the coefficient $r$  determines whether actions are strategic complements or substitutes. 

To compute payoffs, we assume that the  best response function \eqref{eq:best-response} for each player $i$ is generated by the following utility function:
\begin{equation}\label{eq:utility}
	u_i(a;\theta_i,\omega) = -\frac 1 2 a_i^2 + ra_i\sum_{j\neq i}a_j
	+(s\omega + t\theta_i)a_i.
\end{equation}
Equation~\eqref{eq:utility} is a canonical quadratic payoff that generates best response~\eqref{eq:best-response}. The restriction we impose lies in the term $r a_i\sum_{j\neq i}a_j$: cross-effects enter only through the bilinear interaction between own and others' actions, with no further quadratic terms.

We now provide two classic examples of this framework.
\begin{example}[Bertrand Competition]\label{example:bertrand}
	Firms produce differentiated goods and compete in prices $p_i$. The demand curve of good $i$ is
	\begin{displaymath}
		Q_i(p;\theta_i,\omega) = \omega + r\sum_{j\neq i} p_j - p_i/2 + \theta_i,
	\end{displaymath}
	with $r>0$, where $\omega$ is the common demand intercept and $\theta_i$ is firm $i$'s idiosyncratic demand-side parameter (e.g., resulting from hidden investment in quality, advertising, or marketing).
    Marginal costs are commonly known and normalized to zero. 
    Hence, firm $i$'s profit is 
	\begin{displaymath}
		u_i(p;\theta_i,\omega) = p_i\, Q_i(p;\theta_i,\omega),
	\end{displaymath}
	which is the canonical payoff \eqref{eq:utility} exactly, and the best response is $p_i = \omega + \theta_i + r\sum_{j\neq i} p_j$, satisfying \eqref{eq:best-response} with $s=t=1$. 
\end{example}

\begin{example}[Cournot Competition]\label{example:cournot}
	Firms produce goods that are (imperfect) substitutes. Let $q_i$ denote the
	quantity of good $i$ produced by firm $i$ and $\theta_i$ its marginal cost. Assuming a linear demand curve with
	symmetric substitution patterns,  $P_i(q)=\omega + r\sum_{j\neq i} q_j- q_i/2$, with $r<0$, denotes good $i$'s  inverse demand curve. The profit of firm $i$ is then given by
	\begin{displaymath}
		u_i(q) = q_i P_i(q) - \theta_i q_i\,,
	\end{displaymath}
	and its best response is $q_i  = \omega + r\sum_{j\neq i}q_j - \theta_i$,
	which is of the form \eqref{eq:best-response} with $s=-t=1$. 
\end{example}

\subsection{Information Structure and Mechanism Design}

We assume that the vector $(\theta,\omega)$ is drawn from an independent Gaussian prior distribution with means $\mu_\theta\eqdef \E[\theta] = (\mu_{\theta_i})_{i\in[n]}\in\R^n$ and $\mu_\omega\eqdef \E[\omega]\in\R$, and variances $\var(\theta)=\diag(\sigma_{\theta_1}^2,\dots,\sigma_{\theta_n}^2)$ and $\sigma_\omega^2$, respectively. Each player $i$ observes their type $\theta_i$, while the designer observes the state of nature $\omega$.

We allow bidirectional preplay communication between the players and the designer after they respectively observe their type and the state of nature. By the revelation principle for communication games \citep{M82}, given any communication system and any Bayesian equilibrium of the induced communication game, there is an equivalent \emph{direct} and \emph{incentive-compatible} mechanism in which each player gets the same utility as in the given Bayesian equilibrium at every type. Thus, it is without loss to assume that the designer first asks the players to report their type and then issues (possibly correlated) action recommendations to each player that are potentially informative of both the state and the competitors' types.




Thus, a direct mechanism consists of two functions. 
\begin{itemize}
    \item An information policy $\tau:\R^{n}\times\R\to\Delta(\R^{n})$ that maps each player's type report and state of nature to a distribution over the actions of the $n$ players. We assume that for each measurable set $B\subseteq\R^n$, the map $(\theta,\omega)\mapsto\tau(\theta,\omega)(B)$ from $\R^n\times\R$ to $[0,1]$ is measurable. Thus, $\tau$ defines a Markov kernel from $\R^n\times\R$ to $\R^n$, or equivalently, a conditional distribution of action profiles given $(\theta,\omega)$.

     	
	 \item For each  player $i\in [n]$, a function $p_i:\R\to\R$ mapping their reported type to the payment they are being charged in exchange for an action recommendation.\footnote{Payments occur at the interim stage and hence need not condition on the other players'  reports or on the state. \Cref{lem:payment-interim} in \cref{sec:app-payments} justifies that this restriction is without loss of generality.}
\end{itemize}

Each player $i\in[n]$ chooses a pair of measurable functions $(\delta_{i,1},\delta_{i,2})$ with $\delta_{i,1}:\R\to\R$ and $\delta_{i,2}:\R\times\R\to\R$, such that $\delta_{i,1}(\theta_i)$ is player $i$'s type report when their true type is $\theta_i$, and $\delta_{i,2}(a_i,\theta_i)$ is player $i$'s final action in the game after receiving action recommendation $a_i$, when their true type is $\theta_i$. Thus, the ex ante expected utility of player $i\in[n]$ given an $n$-tuple $\big((\delta_{1,1},\delta_{1,2}),\dots,(\delta_{n,1},\delta_{n,2})\big)$ of such pairs is
\begin{displaymath}
    \E\left[u_i\big(\delta_{1,2}(a_1,\theta_1),\dots,\delta_{n,2}(a_n,\theta_n);\theta_i,\omega\big)-p_i\big(\delta_{i,1}(\theta_i)\big)\right]
\end{displaymath}
where the expectation is with respect to the prior distribution of $(\theta,\omega)$ and $(a_1,\dots,a_n)$ distributed as $\tau\big(\delta_{1,1}(\theta_1),\dots,\delta_{n,1}(\theta_n),\omega\big)$.
The direct mechanism $(\tau,p)$ is \emph{incentive-compatible} if it is an equilibrium for each player to choose the pair $(\delta_1^\star,\delta_2^\star)$ where $\delta_1^\star$ is the identity function (truthful reporting) and $\delta_2^\star:(a_i,\theta_i)\mapsto a_i$. Following \citet{M82}, the following definition gives an equivalent formulation of incentive-compatibility at the interim stage.

\begin{definition}[Incentive Compatibility]\label{def:ic}
    A mechanism $(\tau, p)$ is \emph{incentive compatible} if for each $i\in[n]$, $(\theta_i,\theta'_i)\in\R^2$, and all deviation functions $\delta:\R\to\R$,
\begin{align*}
    \E\big[u_i(a_i,a_{-i};\theta_i,\omega)\given \theta_i\big]-p_i(\theta_i)
 \geq
 \E\big[u_i(\delta(a'_i), a'_{-i};\theta_i,\omega)\given \theta_i\big]-p_i(\theta'_i),
\end{align*}
    where $a$ is distributed as $\tau(\theta,\omega)$ and $a'$ as $\tau(\theta_i',\theta_{-i},\omega)$.
\end{definition}


\paragraph*{Gaussian mechanisms.}
Throughout this paper, we restrict ourselves to Gaussian mechanisms.\footnote{In Sections \ref{sec:char} and  \ref{sec:opt}, we  highlight  which of our results  hold for arbitrary signal distributions.} In these mechanisms,  the information policy induces a joint Gaussian distribution of $(a,\theta,\omega)$. 
This joint distribution is characterized by the mean vector $\mu=\E[a,\theta,\omega]\in\R^{2n+1}$ and the covariance matrix $\K=\var(a,\theta,\omega)\in\M_{2n+1}(\R)$. Note that $\mu$ and $\K$ have the following block structure
\begin{equation}\label{eq:params}
	\mu = \begin{bmatrix}\mu_{a}\\\mu_{\theta}\\\mu_\omega\end{bmatrix}
	\quad\text{and}\quad
	\K = \begin{bmatrix}
		\K_{aa}&\K_{a\theta}&\K_{a\omega}\\[0.5ex]
		\tr{\K_{a\theta}}&\K_{\theta\theta}&0\\[0.5ex]
		\tr{\K_{a\omega}}&0&\sigma_{\omega}^2\\
	\end{bmatrix}\,.
\end{equation}
The means $\mu_\theta\eqdef \E[\theta] = (\mu_{\theta_i})_{i\in[n]}\in\R^n$ and $\mu_\omega\eqdef \E[\omega]\in\R$ are given by the prior distribution and the vector $\mu_a\eqdef\E[a]=(\mu_{a_i})_{i\in[n]}\in\R^n$ is chosen by the designer. Similarly, $\K_{\theta\theta} = \var(\theta)=\diag(\sigma_{\theta_1}^2,\dots,\sigma_{\theta_n}^2)$ and $\sigma_\omega^2$ are given by the prior distribution, whereas $\K_{aa} \eqdef \var(a)\in\M_n(\R)$, $\K_{a\theta}\eqdef\cov(a,\theta)\in\M_n(\R)$ and $\K_{a\omega}\eqdef\cov(a,\omega)\in\R^n$ are chosen by the designer.

A standard property of multivariate normals is that their conditional expectations are linear, hence the information policy $\tau$ is a linear Gaussian kernel: action recommendations are affine functions of the fundamentals $(\theta,\omega)$ to which zero-mean (but possibly correlated) noise is added:
\begin{equation}\label{eq:params-bis}
	a_i = \alpha_i + \beta_i(\omega - \mu_\omega)
	+ \sum_{j\in[n]}\gamma_{ij}(\theta_j-\mu_{\theta_j}) + \eps_i,
\end{equation}
for all $i\in[n]$ and where $\eps=(\eps_i)_{i\in[n]}$ is a zero-mean multivariate normal $\mathcal{N}(0,\K_\eps)$ independent of $(\theta,\omega)$. Equations~\eqref{eq:params} and \eqref{eq:params-bis} give two different parametrizations of the joint distribution of $(a,\theta,\omega)$. Writing $\alpha=(\alpha_i)_{i\in[n]}$, $\beta=(\beta_i)_{i\in[n]}$ and $\Gamma=(\gamma_{ij})_{(i,j)\in[n]^2}$, the two parametrizations identify the same distribution iff the following hold:
\begin{equation}\label{eq:params-dict}
	\mu_{a}=\alpha,
	\quad\K_{a\omega}=\sigma_{\omega}^2\beta,
	\quad \K_{a\theta}=\Gamma\K_{\theta\theta},
	\quad \K_{aa} = \sigma_\omega^2 \beta\tr{\beta}
	+ \Gamma\K_{\theta\theta}\tr{\Gamma} + \K_\eps.
\end{equation}

\paragraph*{Participation.} In our baseline model, we assume that all players participate in the mechanism (participation is ``forced''). In lieu of explicitly specifying the players' outside options, we only impose a nonnegativity constraint on each player's interim utility. In \cref{sec:participation}, we let players simultaneously choose whether to participate in the designer's mechanism, and any nonparticipating player still chooses an action in the downstream game. We consequently expand the definition of a mechanism to specify a distribution of action recommendations for each subset of participating players. We show that the set of symmetric, obedient, incentive-compatible Gaussian mechanisms coincides under the two interpretations: for any mechanism in our baseline setting, the designer can choose off-path action recommendations under endogenous participation that sustain full participation as an equilibrium, implementing the same joint distribution of $(a, \theta, \omega)$ and the same payments.

\section{Benchmarks}\label{Sec:bench}
We begin by covering three different benchmark settings that illustrate our designer's problem. We discuss these settings informally, and defer all formal results to  \cref{sec:char,sec:opt}.

\paragraph*{Single Agent Benchmark.} With a single agent $(n=1)$, the data buyer only wishes to learn the realization of the state $\omega$. In this case, the designer can achieve the first-best allocation of information and extract the entire surplus. The welfare- and revenue-optimal mechanism consists of  recommending action $a_i=s\omega+t\theta_i$ with probability one in each state, and charging a  price $p_i(\theta_i)$. Note, however, that the buyer's willingness for this information policy is given by $s^2\sigma^2_{\omega}/2$ for all types $\theta_i$. Therefore,  a constant price for the complete information structure yields truthful reporting of the agent's type and extracts the entire social surplus. 

\paragraph*{Complete Information Benchmark.} Now consider the case of $n>1$ and assume players have complete information about the state and about all types $\theta\in\mathbb{R}^n$. This game admits a unique Nash equilibrium. Collecting the best responses
\eqref{eq:best-response} for $i\in[n]$ yields the linear system
\begin{displaymath}
	J_n(1,-r)a = s\omega 1_n + t\theta,
\end{displaymath}
where $\theta=(\theta_1,\dots,\theta_n)\in\R^n$ is the vector of types and $J_n(1,-r)\eqdef I_n -r(J_n-I_n)$ is the $n\times n$ matrix with $1$ on the
diagonal and $-r$ off the diagonal. This matrix is invertible whenever
$r\notin\set{-1,\frac 1 {n-1}}$, in which case the
solution to the linear system is the unique Nash equilibrium of the complete information game:\footnote{We establish this property in \cref{prop:matrix} in \cref{sec:app-la}.}
\begin{equation}\label{eq:ne}
	a_i = \frac{s\omega+t\theta_i}{1-(n-1)r} + \frac{r\cdot t\sum_{j\neq
	i}(\theta_j-\theta_i)}{(1+r)(1-(n-1)r)}.
\end{equation}
\looseness=-1
This action profile plays an important role in the analysis that follows. 
First,  revealing a linear combination of the state and of the other players' types is sufficient for each player to play the complete-information equilibrium action. Second, if the designer knew the players' true types, it could recommend the actions in \eqref{eq:ne} and players would follow these recommendations by definition of an equilibrium. Third, each player could have an incentive to misreport their type, so to influence the action profile the designer recommends to their competitors. (For example, in a Cournot game, each firm wants to understate their cost and induce the designer to recommend lower quantities to its competitors.) In \cref{sec:ic}, we confirm that  recommending the complete information Nash equilibrium actions does, in fact, induce obedience but requires discriminatory transfers to induce  truthful reporting of the players' types.

\paragraph*{First-Best Benchmark.} If  $n>1$ players could coordinate on the jointly optimal action profile for every state and type vector, they would take  the first-best actions\footnote{This benchmark requires a stronger restriction on the range of $r$, namely $r\in\big(-\frac12,\frac 1{2(n-1)}\big)$. Indeed, outside this range the objective ceases to be concave and the first-best outcome is unbounded.}
\begin{equation}\label{eq:collusive_a}
    a_i^{\rm FB}:=\frac{s\omega+t\theta_i}{1-(n-1)2r}
	+\frac {2rt\sum_{j\neq i}(\theta_j-\theta_i)}{(1+2r)(1-(n-1)2r)}.
    \end{equation}
However, these actions are not best replies to one another. Thus, even if the designer could obtain truthful reports from the players, each one of them would have an incentive to deviate from the recommended course of action.

\section{Implementability}\label{sec:char}

In this section, we provide a complete characterization of incentive-compatible mechanisms (\cref{def:ic}). Any such mechanisms must, in particular, induce obedience at the second stage: conditional on truthful reporting in the first stage, each player must find it optimal to follow their recommendation. 
\Cref{prop:obedience-char-sym} characterizes the joint distributions of $(a,\theta,\omega)$ that arise from obedient recommendations; \cref{prop:ic-char} then characterizes the additional restrictions imposed by incentive compatibility, where types remain private and truth-telling is an equilibrium property rather than a primitive.

\subsection{Obedience Constraints}\label{sec:obedience}

A mechanism incentivizes \emph{obedience} if the recommended action is a best response for each player, conditioned on their type and their recommendation, i.e., if
\begin{displaymath}
    a_i\in\argmax_{a_i'\in\R}
    \E[u_i(a_i', a_{-i};\theta_i,\omega)\given a_i, \theta_i]
\end{displaymath}
almost surely for each player $i\in[n]$. For the  game in \cref{sec:game}, we have
\begin{displaymath}
    \E[u_i(a_i', a_{-i}; \theta_i,\omega)\given a_i, \theta_i]
    = -\frac 1 2 (a_i')^2 + r a_i'\sum_{j\neq i}\E[a_j\given a_i,\theta_i]
    + (s\E[\omega\given a_i,\theta_i]+t\theta_i)a_i',
\end{displaymath}
which is concave in $a_i'$. Under any mechanism, obedience of player $i\in[n]$ is thus equivalent to the first-order condition
\begin{equation}\label{eq:obedience}
    a_i = r\sum_{j\neq i}\E[a_j\given a_i, \theta_i]
    + s\E[\omega\given a_i,\theta_i] + t \theta_i\,,
\end{equation}
which yields a useful expression for player $i$'s expected payoff:
\begin{equation}\label{eq:obedience-utility-cond}
    \E[u_i(a_i; \theta_i, \omega)\given a_i, \theta_i] = \frac 1 2 a_i^2.
\end{equation}


We now focus on \emph{Gaussian} mechanisms, which are fully determined by the mean vector $\mu\in\R^{2n+1}$ and covariance matrix $\K\in\M_{2n+1}(\R)$ as described in \eqref{eq:params}, and we further restrict attention to symmetric mechanisms under a symmetric prior.\footnote{\cref{sec:symmetry-supp} shows that when the prior is symmetric, this restriction is without loss of generality for our optimality criteria of \cref{sec:opt}, because  the downstream game (\cref{sec:game}) is symmetric.} In this case,  there are only 6 degrees of freedom: any of the coordinates of $\mu_a$ and $\K_{a\omega}$, and the on- and off-diagonal entries of $\K_{aa}$ and $\K_{a\theta}$ (see \cref{lemma:sym-char} in \cref{sec:char-app} for details). In what follows, we choose an arbitrary $i\in[n]$ and $j\neq i$ and write these parameters as $\mu_{a_i}$, $\sigma_{a_i\omega}$, $\sigma_{a_i}^2$, $\sigma_{a_ia_j}$, $\sigma_{a_i\theta_i}$ and $\sigma_{a_i\theta_j}$. 

\begin{proposition}\label{prop:obedience-char-sym}
    Assume that $r\in\big(-1,\frac 1 {n-1}\big)$. Then, $\mu$ and $\K$ are the mean vector and covariance matrix of a \emph{symmetric} and \emph{obedient} mechanism iff
    \begin{enumerate}
	\item The mean action $\mu_{a_i}$ of each player $i\in[n]$ is determined by the prior's mean:
	    \begin{equation}\label{eq:obedience-mean-sym}
    \mu_{a_i} = \frac{s\mu_\omega+t\mu_{\theta_i}}{1-(n-1)r}.
	    \end{equation}
	\item The covariance matrix $\K$ satisfies the following linear constraints for each $i\in[n]$
	    \begin{equation}\label{eq:obedience-var-sym}
    \begin{cases}
	\sigma_{a_i}^2 = (n-1)r\sigma_{a_ia_j} + s\sigma_{a_i\omega}
	+ t\sigma_{a_i\theta_i},\\
	\sigma_{a_i\theta_i} = (n-1)r\sigma_{a_i\theta_j}
	+ t \sigma_{\theta_i}^2.
    \end{cases}
	    \end{equation}
    \end{enumerate}
\end{proposition}

The proof of \cref{prop:obedience-char-sym} in \cref{sec:obedience-app} 
 shows that the mean action recommendation $\mu_{a_i}$ is entirely determined by the mean of the prior \eqref{eq:obedience-mean-sym} and that the covariance coefficients must satisfy the two linear equality constraints \eqref{eq:obedience-var-sym}, effectively reducing the degrees of freedom of the mechanism to only 3 covariance parameters: $\sigma_{a_i\omega}$, $\sigma_{a_i\theta_j}$, and $\sigma_{a_ia_j}$. The remaining parameters $\sigma_{a_i}^2$ and $\sigma_{a_i\theta_i}$ are determined according to \eqref{eq:obedience-var-sym}.\footnote{The proof also shows that conditions \eqref{eq:obedience-mean-sym} and~\eqref{eq:obedience-var-sym} are still \emph{necessary} for obedience under general (not necessarily Gaussian) mechanisms. For such mechanisms, the conditional distribution $a\given \theta,\omega$ is not entirely determined by $\mu$ and $\K$. Hence, obedience imposes further constraints on the higher order moments of the conditional action distribution.}

Finally, as obedience constraints pin down the mean action for each player, equation \eqref{eq:obedience-utility-cond} implies that a player's expected payoff is measured by the \emph{variance} of their actions.

\Cref{prop:obedience-char-sym} assumed that $\K$ is a proper covariance matrix, that is, it must be positive semi-definite, which translates into two inequality constraints on the covariance parameters (see \cref{lemma:psd} in  \cref{sec:symmetry}). We can however obtain a joint characterization of obedience and positive semi-definiteness by using the alternative parametrization provided in \eqref{eq:params-bis}. Using \eqref{eq:params-dict}, we can write the action recommendation of a symmetric mechanism as
\begin{equation}\label{eq:params-sym}
    a_i = \mu_{a_i}
    + \frac{\sigma_{a_i\omega}}{\sigma_{\omega}^2}(\omega-\mu_\omega)
    + \frac{\sigma_{a_i\theta_i}}{\sigma_{\theta_i}^2}(\theta_i-\mu_{\theta_i})
    + \frac{\sigma_{a_i\theta_j}}{\sigma_{\theta_i}^2}\sum_{j\neq i}(\theta_j-\mu_{\theta_j})
    + \delta\eps_i,
\end{equation}
for some $\delta\in\R$ and where the variables $(\eps_i)_{i\in[n]}$ are
$\N(0,1)$ variables with $\cov(\eps_i,\eps_j) = \rho$ for some $\rho\in\R$.
We can then  describe the set of all symmetric and obedient mechanisms in terms of the three parameters $(\sigma_{a_i\omega},\sigma_{a_i\theta_j},\rho)$, for which the constraints take a  simple form.

\begin{proposition}\label{cor:obedience-char-bis}
    For each $r\in(-1,\frac 1 {n-1})$, the action recommendations of symmetric obedient mechanisms are of the form \eqref{eq:params-sym} and are exhaustively described by parameters $\sigma_{a_i\omega}$,
    $\sigma_{a_i\theta_j}$, and $\rho\eqdef\cov(\eps_i,\eps_j)$ subject to:
    \begin{enumerate}
	\item  the covariance $\rho$ satisfies $-\frac 1{n-1}\leq
	    \rho\leq 1$.
	\item $\sigma_{a_i\theta_j}/rt\sigma_{\theta_i}^2$ and
	    $\sigma_{a_i\omega}/s\sigma_{\omega}^2$ satisfy
	    the ellipse constraint
	    \begin{equation}\label{eq:ellipse}\tag{$\mathcal{E}$}
		t^2\sigma_{\theta_i}^2\frac{r^2y_0}{fx_0}\big(x-x_0\big)^2
		+s^2\sigma_{\omega}^2\big(y - y_0\big)^2
		\leq t^2\sigma_{\theta_i}^2\frac{r^2y_0}{f}x_0
		+s^2\sigma_\omega^2y_0^2
	    \end{equation}
	    in the variables $(x,y)\in\R^2$, with $f\eqdef\frac 1 {n-1}$ and center
	    \begin{displaymath}
		x_0 = \frac{f}{2(f-r)(1+r)}
		\quad\mathrm{and}\quad
		y_0 = \frac f {2(f-r)}.
	    \end{displaymath}
    \end{enumerate}
    Moreover, $a_i$ is deterministic conditioned on $(\theta,\omega)$, that is, $\delta=0$ iff the constraint \eqref{eq:ellipse} is binding. Denoting by $\xi$ the (positive) slack of the constraint, we have
    \begin{equation}\label{eq:var-from-slack}
	\delta^2 = \xi\frac{f-r}{f-\rho r}.
    \end{equation}
\end{proposition}

By \cref{cor:obedience-char-bis}, a symmetric obedient mechanism is fully determined by the choice of a point inside the ellipse \eqref{eq:ellipse}, and of the covariance $\rho\in\big[-\frac 1{n-1},1\big]$. Given  these choices, the remaining parameters in \eqref{eq:params-sym} are obtained as follows: the variance of the noise $\delta$ is given by \eqref{eq:var-from-slack}, the mean action $\mu_{a_i}$ by \eqref{eq:obedience-mean-sym} and the covariance $\sigma_{a_i\theta_i}$ by the second equality in \eqref{eq:obedience-var-sym}.

\begin{figure}[t]
\begin{center}
    \includegraphics[scale=1.1]{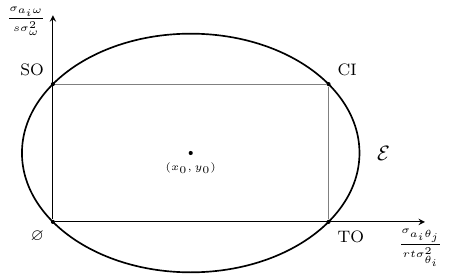}
\end{center}
\caption{Symmetric mechanisms satisfying obedience are parametrized by points inside the ellipse \ref{eq:ellipse} and by the noise correlation $\rho\in\big[-\frac 1{n-1},1\big]$. The mechanism is deterministic conditioned on $(\theta,\omega)$ iff it lies on \ref{eq:ellipse}. The ellipse circumscribes a rectangle formed by four deterministic mechanisms corresponding to “extremal” information structures (see the text).}
\label{fig:obedience}
\end{figure}

\looseness=-1
\Cref{fig:obedience} illustrates the  ellipse \eqref{eq:ellipse} in the coordinate system $(\sigma_{a_i\theta_j}/rt\sigma_{\theta_i}^2,\sigma_{a_i\omega}/s\sigma_{\omega}^2)$ where moving to the right (resp.\ up) corresponds to increasing the weight on the competitors' types (resp.\ state) in the action recommendation.\footnote{The reader is invited to explore the interactive visualization available at \url{http://thibaut.horel.org/info-coord-viz/}, which shows how varying the environment parameters changes the constraints and optimal mechanisms depicted in this figure and other figures in the paper.} Importantly, the origin $\varnothing=(0,0)$ lies on the ellipse. Since $\sigma_{a_i\theta_j}=0$ implies $\sigma_{a_i\theta_i}=t\sigma_{\theta_i}^2$, the action recommendation at $\varnothing$ is
\begin{equation}\label{eq:bne}
    a_i = \mu_{a_i} + t(\theta_i-\mu_{\theta_i}),
\end{equation}
and induces the Bayes–Nash equilibrium of the interim game in which each player observes only their own type. In other words, sending this recommendation is equivalent to revealing no information to the players. Then, starting from $\varnothing$ and moving parallel to the coordinate axes, we encounter three deterministic obedient mechanisms that form a rectangle inscribed in \eqref{eq:ellipse}:
\begin{itemize}
    \item the point $\text{SO}=(0,2y_0)$, \emph{State Only}, for which
    the players (who know their own type) learn $\omega$ exactly upon receiving their recommendation. Hence it induces the Bayes–Nash equilibrium of the game in which the players know their own type as well as the state.
    \item the point $\text{TO}=(2x_0,0)$, \emph{Types Only}, 
    which induces the \emph{information sharing} equilibrium, where players know each other's private types but have no information about $\omega$.
     \item the point $\text{CI}=(2x_0,2y_0)$, \emph{Complete Information}, 
	which is outcome-equivalent to the Nash equilibrium of the complete information game seen in \eqref{eq:ne}.

\end{itemize}

The segment $[\varnothing,\mathrm{SO}]$ characterized by $\sigma_{a_i\theta_j}=0$ is the set of obedient mechanisms whose recommendation to player $i$ conditions only on $\theta_i$ and $\omega$, that is, individual-level algorithms that do not reveal information about competitors' submissions. By \cref{prop:ic-char}, this segment coincides with the set of mechanisms implementable without discriminatory transfers.

Because the rectangle \(\{\varnothing,\mathrm{SO},\mathrm{TO},\mathrm{CI}\}\) is inscribed in the ellipse \eqref{eq:ellipse}, Proposition \ref{cor:obedience-char-bis} establishes that  it is possible to \emph{obediently} induce stronger covariance between a player's action and their competitors' types than under complete information (i.e., \(x>2x_0\)). This, however, requires reducing the covariance with the state (i.e., choose \(y<2y_0\)). Likewise, \emph{negative} covariance with the state (\(y<0\), e.g., recommending prices that fall in high-demand states) can be obedient, but only to the extent that the joint choice \((x,y)\) remains inside the ellipse---intuitively, one can tilt recommendations against the state \emph{unless} they also tilt ``the wrong way'' with competitors’ types so much that \((x,y)\) exits the ellipse.

\subsection{Incentive Compatibility}\label{sec:ic}

Incentive compatibility (\cref{def:ic}) requires preventing both first-stage deviations (misreports of the type) and double deviations, in which a player misreports their type and then deviates from the recommended action.

Because an incentive compatible mechanism must in particular incentivize obedience, we fix an obedient mechanism and wish to characterize when it additionally rules out double deviations.
We begin by computing the optimal second-stage action after a misreport. Suppose player $i$ has true type $\theta_i$, reports $\theta_i'$ in the first stage, and observes recommendation $a_i$ in the second. The first-order condition for the second-stage action is
\begin{displaymath}
a_i' = s\,\E[\omega\given a_i,\theta_i,\theta_i'] + r\sum_{j\neq i}\E[a_j\given a_i,\theta_i,\theta_i'] + t\theta_i.
\end{displaymath}
Conditional on $(\theta',\omega)$, the recommendations' distribution depends on $\theta_i$ only through the report $\theta_i'$, hence $\omega$ and $a_{-i}$ are independent of $\theta_i$ given $(a_i,\theta_i')$. The conditional expectations therefore simplify to $\E[\omega\given a_i,\theta_i']$ and $\E[a_j\given a_i,\theta_i']$. Moreover, obedience at the reported type $\theta_i'$ gives, by \eqref{eq:obedience},
\begin{displaymath}
    s\E[\omega\given a_i,\theta_i'] + r\sum_{j\neq i}\E[a_j\given a_i,\theta_i'] = a_i - t\theta_i'.
\end{displaymath}
    Substituting this into the first-order condition yields a simple formula for the optimal deviation,
\begin{equation}\label{eq:deviation}
a_i' = a_i + t(\theta_i - \theta_i').
\end{equation}
Note that the effect of a misreport (from $\theta_i$ to $\theta_i'$) on the distribution of opponents' actions is fully reflected in the recommended action $a_i$, and does not affect the difference between the player's optimal and recommended actions $a_i' - a_i$.

The interim utility of player $i$ with true type $\theta_i$ who reports $\theta_i'$ and  best responds at the action stage is then given by
\begin{equation}\label{eq:tu-misreport}
\tilde u_i(\theta_i';\theta_i) = \frac{1}{2}\E[a_i\given\theta_i']^2 + t(\theta_i - \theta_i')\E[a_i\given\theta_i'] + \frac{t^2}{2}(\theta_i - \theta_i')^2 + \frac{1}{2}\var(a_i\given\theta_i').
\end{equation}
Setting $\theta_i'=\theta_i$ in \eqref{eq:tu-misreport} yields the on-path interim utility of an obedient mechanism,
\begin{equation}\label{eq:interim-utility-ob}
\tilde u_i(\theta_i) = \frac{1}{2}\E[a_i\given\theta_i]^2 + \frac{1}{2}\var(a_i\given\theta_i).
\end{equation}

\begin{proposition}\label{prop:ic-char}
    The symmetric mechanism $(\mu,\K,p)$ is incentive compatible if and only if it is obedient and for each player $i\in[n]$:
    \begin{enumerate}
	\item The derivative of the payment function is given by
	    \begin{displaymath}
		p_i'(\theta_i)
		=\prn[\bigg]{\frac{\sigma_{a_i\theta_i}}{\sigma_{\theta_i}^2} -t}
		\E[a_i\given \theta_i]
		= \prn[\bigg]{\frac{\sigma_{a_i\theta_i}}{\sigma_{\theta_i}^2} -t}
		\prn[\Big]{\mu_{a_i} + \frac{\sigma_{a_i\theta_i}}{\sigma_{\theta_i}^2}
		(\theta_i-\mu_{\theta_i})}.
	    \end{displaymath}
	\item The covariance satisfies $t\sigma_{a_i\theta_i}\geq t^2\sigma_{\theta_i}^2$; equivalently, by obedience \eqref{eq:obedience-var-sym}, $rt\sigma_{a_j\theta_i}\geq 0$.\footnote{The second-order condition $t\sigma_{a_i\theta_i}\geq t^2\sigma_{\theta_i}^2$ is strictly stronger than the monotonicity $t\sigma_{a_i\theta_i}\geq 0$ that would suffice if the player were committed to following the recommendation at the action stage. The additional $t^2\sigma_{\theta_i}^2$ comes from the quadratic correction $(t^2/2)(\theta_i-\theta_i')^2$ in \eqref{eq:tu-misreport}, which is generated by the optimal action-stage deviation $a_i'=a_i+t(\theta_i-\theta_i')$. Without this correction, only the trivial allocation monotonicity would be required. In contrast, our earlier work \citep{BDHN22} showed that, for multiplicatively decomposable utilities of the form $u_i(a;\theta_i,\omega) = \theta_i\cdot\pi(a;\omega)$, incentive compatibility is equivalent to requiring truthfulness and obedience separately.}
    \end{enumerate}
\end{proposition}

The proof, given in \cref{sec:app-tic}, shows that for an arbitrary mechanism (not necessarily Gaussian), the requirement that truth-telling $\theta_i'=\theta_i$ maximize $\tilde u_i(\theta_i';\theta_i)-p_i(\theta_i')$ is equivalent to requiring the rent $\tilde u_i(\theta_i)-p_i(\theta_i)$ in \eqref{eq:interim-utility-ob} to be $t^2$-strongly convex with subderivative $t\E[a_i\given\theta_i]$. For a Gaussian mechanism, $t\E[a_i\given\theta_i]$ is affine in $\theta_i$ with slope $t\sigma_{a_i\theta_i}/\sigma_{\theta_i}^2$, which allows us to write the first-order condition $p_i'=\tu_i'-t\E[a_i\given\theta_i]$ as in part~1. The $t^2$-strong convexity condition is equivalent to $t^2$-strong monotonicity of $t\E[a_i\given\theta_i]$. This reduces to $t\sigma_{a_i\theta_i}\geq t^2\sigma_{\theta_i}^2$ in the Gaussian case.

For example, in the Bertrand setting  of \cref{example:bertrand} (where $t>0$ and $r>0$), a firm that understates its own demand-shifter ($\theta_i'<\theta_i$) receives a lower recommended price and then deviates upward to capture its true demand. Incentive compatibility must discourage such double deviations; \cref{prop:ic-char} shows that this requires others' actions to move in the right direction with each player's report, $rt\sigma_{a_j\theta_i}\geq 0$, which under Bertrand parameters means that understating one's demand also leads to lower prices by competitors and, combined with the firm's own lower recommendation, makes the upward deviation unprofitable. In Cournot competition the signs of $t$ and $r$ both flip, so $rt>0$ is preserved; the directions of the deviation and of the disciplining response reverse, but the same anti-misreport logic applies.

\begin{figure}[t]
\begin{center}
    \includegraphics[scale=1.1]{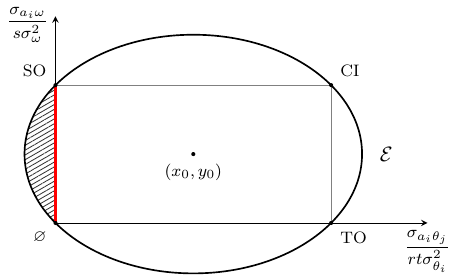}
\end{center}
    \caption{Symmetric and incentive compatible mechanisms must satisfy $rt\sigma_{a_i\theta_j}\geq 0$. The hatched area contains the obedient mechanisms that are not incentive compatible: these admit profitable double deviations for the players. At the boundary $\sigma_{a_i\theta_j}=0$, the red line segment $[\varnothing,\text{SO}]$ contains the obedient mechanisms that are implementable without transfers.\protect\footnotemark}
    \label{fig:ic}
\end{figure}
\footnotetext{See also the interactive visualization at \url{http://thibaut.horel.org/info-coord-viz}.}

Finally, \cref{prop:ic-char} shows that $p_i'\equiv 0$ if and only if $\sigma_{a_i\theta_i}=t\sigma_{\theta_i}^2$. The obedient mechanisms for which $\sigma_{a_i\theta_i}=t\sigma_{\theta_i}^2$ are precisely the mechanisms implementable without discriminatory transfers. By the second covariance constraint required for obedience \eqref{eq:obedience-var-sym}, this condition is equivalent to $\sigma_{a_j\theta_i}=0$, which means agent $i$'s report does not influence agent $j$'s actions. Thus, the mechanisms implementable without transfers are those that provide information about the common state only.\footnote{This result is consistent with Theorem~1 in \cite{komzl17} on the equivalence between private and public persuasion. Relative to their work, our setting introduces competing receivers but specializes to quadratic preferences.} 

\Cref{fig:ic} illustrates how incentive compatibility refines the set of obedient mechanisms.\footnote{In \cref{sec:participation}, we also show that it is possible to construct off-path action recommendations that implement the same set of mechanisms, with and without transfers, when the designer cannot control the players' access to the downstream game.} Intuitively, a mechanism that only sends signals about $\omega$ does not require transfers: information about $\omega$ does not change the agent's expected action, so an agent's preferences over information about the state are independent of their type $\theta_i$. It is more surprising that the designer must resort to transfers to implement any distribution that correlates one player's actions to another player's report. When $\sigma_{a_j\theta_i}\neq 0$, agent $i$'s report influences agent $j$'s expected action, and hence agent $i$'s own expected action. Because different types $\theta_i$ bear heterogeneous costs of taking a higher or lower average action, the agent's preferences over type reports depend on the agent's true type. It then follows that the designer must use discriminatory payments to align the different types' incentives.

\section{Optimal Obedient Mechanisms}\label{sec:opt}

We begin this section by deriving the structural properties of mechanisms that maximize an arbitrary linear objective. We then move to the firms-, revenue-, and consumer-optimal mechanisms. For each of  these problems, our approach consists of characterizing the optimal \emph{obedient} mechanism, constructing the associated transfers (\cref{prop:ic-char}), and then verifying that the mechanism satisfies the incentive compatibility constraint $r t\sigma_{a_j\theta_i}\geq 0$.


\subsection{Structural Properties}\label{sec:structure}

We have so far restricted attention to symmetric mechanisms. As it turns out,  such restriction is without loss of generality when maximizing a symmetric and concave objective function over the class of Gaussian and obedient mechanisms. \Cref{sec:symmetry-supp} shows that this follows from a symmetrization argument, which uses the fact that the obedience constraints on the moments of a Gaussian obedient mechanism are stable under a relabeling of the players.

Because all the objective functions we consider below are symmetric, this result allows us to focus on optimization problems of the form
\begin{displaymath}
    \sup_{\K\in\cO_s} F(\K),
\end{displaymath}
where $F$ is concave and $\cO_s$ is the set of covariance matrices of symmetric and obedient mechanisms characterized by the two linear covariance constraints (\cref{prop:obedience-char-sym}) and positive semi-definiteness (see \cref{lemma:psd} in  \cref{sec:symmetry}). Note that $\K$ is the only optimization variable since the mean $\mu$ of the action recommendations is determined by obedience.

\looseness=-1
As discussed below \cref{prop:obedience-char-sym}, the linear covariance constraints reduce the degrees of freedom to three variables, e.g.\ $(\sigma_{a_ia_j},\sigma_{a_i\theta_j},\sigma_{a_i\omega})$. Let $\tilde F$ be the “reduced” objective function obtained by substituting $\sigma_{a_i}^2$ and $\sigma_{a_i\theta_i}$ in the definition of $F$ using the obedience constraints \eqref{eq:obedience-var-sym}. Since the substitutions are linear, $\tilde F$ remains concave and the optimization problem becomes
\begin{equation}\label{eq:reduced}
    \begin{aligned}
	\max&\;\; \tilde F(\sigma_{a_ia_j}, \sigma_{a_i\omega},\sigma_{a_i\theta_j})\\
	\text{s.t.}&\;\;
	\frac{1}{\sigma_{\theta_i}^2}\bigg[\frac{f-r}{f}\sigma_{a_i\theta_j}
	-t\sigma_{\theta_i}^2\bigg]^2
	\leq s\sigma_{a_i\omega} + \frac{rt}{f}\sigma_{a_i\theta_j}
	-\frac{f-r}{f}\sigma_{a_ia_j}+t^2\sigma_{\theta_i}^2,\\
	&\;\;\frac{1}{\sigma_{\theta_i}^2}\bigg[\frac{1+r}{f}\sigma_{a_i\theta_j}
	+t\sigma_{\theta_i}^2\bigg]^2
	+\frac{n}{\sigma_\omega^2}\sigma_{a_i\omega}^2
	\leq s\sigma_{a_i\omega} + \frac{rt}{f}\sigma_{a_i\theta_j} +
	\frac{1+r}{f}\sigma_{a_ia_j}
	+t^2\sigma_{\theta_i}^2.
    \end{aligned}
\end{equation}
The two inequality constraints come from the positive semi-definiteness of the covariance matrix $\K$ (cf.\ \cref{lemma:psd}), after substituting $\sigma_{a_i}^2$ and $\sigma_{a_i\theta_i}$ using \eqref{eq:obedience-var-sym}.

For the firms- and consumer-optimal mechanisms considered below, the reduced objective $\tilde F$ is linear. In this case, the following proposition, holding for arbitrary linear objectives, describes the optimal mechanism.

\begin{proposition}\label{prop:linear}
    For $(\alpha,\beta,\gamma)\in\R^3$, consider problem \eqref{eq:reduced} with
    \begin{displaymath}
	\tilde F(\sigma_{a_ia_j}, \sigma_{a_i\omega},\sigma_{a_i\theta_j}) = (n-1)\alpha r \sigma_{a_ia_j} + \beta s\sigma_{a_i\omega} + (n-1)\gamma rt\sigma_{a_i\theta_j}.
    \end{displaymath}
    For $r\in\big(-1,\frac 1 {n-1}\big)$, there exists a unique optimal mechanism. Its action recommendations take the form \eqref{eq:params-sym} with noise correlation $\rho=1$  when $\alpha r>0$ and $\rho=-\frac 1{n-1}$ when $\alpha r<0$. The optimal values for $\sigma_{a_i\theta_j}$ and $\sigma_{a_i\omega}$ are given by 
	$\sigma_{a_i\theta_j} = x(\lambda^\star)rt\sigma_{\theta_i}^2$ and
	$\sigma_{a_i\omega} = y(\lambda^\star)s\sigma_\omega^2$ with
    \begin{displaymath}
	x(\lambda)\eqdef
	\frac{f}{2(1+r)}
	\frac{\lambda nf + \alpha(r+2)+\gamma(r+1)}
	{\big[\lambda nf(f-r) -\alpha r(r+1)\big]},
	\quad y(\lambda)\eqdef
	\frac{1}{2n}\frac{\lambda nf-\alpha r + \beta(r+1)}
	{\lambda(f-r)-\alpha r},
    \end{displaymath}
    $f\eqdef \frac 1{n-1}$ and $\lambda^\star\eqdef \min\set[\big]{\lambda\geq \max\set[\big]{0, \frac {\alpha r} {f-r}} \;\big|\; \big(x(\lambda), y(\lambda)\big) \text{ satisfies \eqref{eq:ellipse}}}$. The remaining mechanism parameters $(\mu_{a_i}, \sigma_{a_i\theta_i}, \delta)$ are given by \eqref{eq:obedience-mean-sym}, \eqref{eq:obedience-var-sym} and \eqref{eq:var-from-slack} respectively.
\end{proposition}

As seen in \cref{prop:linear}, proved in \cref{sec:structure-app}, the optimal action recommendations are always maximally correlated or anticorrelated conditioned on $(\theta,\omega)$ depending on the sign of $\alpha r$.\footnote{This fact holds more generally for an arbitrary concave objective $\tilde F$ that is strictly monotone in $\sigma_{a_ia_j}$ (see \cref{prop:structural-main} in \Cref{sec:opt-app}).} As we will see, when maximizing the firms' profits (\cref{sec:welfare}), we have $\alpha=1$, and the correlation is positive or negative according to whether actions are strategic complements or substitutes. In contrast, when maximizing \emph{consumer} welfare (\cref{sec:consumer}), we have $\alpha=-1$ and the optimal correlation structure is reversed.

The following geometric interpretation provides additional intuition about the structure of the optimal mechanism.
Since the noise correlation $\rho$ is determined by the sign of $\alpha r$, there are only two remaining degrees of freedom: $\sigma_{a_i\theta_j}$ and $\sigma_{a_i\omega}$. By stationarity, these two variables are constrained to lie on the curve defined by $\big(x(\lambda),y(\lambda)\big)$, namely, a portion of hyperbola. This curve is parametrized by the Lagrange multiplier $\lambda$ associated with the ellipse constraint \eqref{eq:ellipse}, constrained to $\lambda\geq \lambda_{\rm min} \eqdef\max\set{0, \alpha r/(f-r)}$ by dual feasibility. Moreover, it is easy to verify that the functions $x$ and $y$ are monotone and converge respectively to $x_0$ and $y_0$, the coordinates of \ref{eq:ellipse}'s center.
Complementary slackness implies that either $\lambda=0$ or \eqref{eq:ellipse} is binding (in particular, \eqref{eq:ellipse} always binds when $\lambda_{\rm min}>0$). We thus have two mutually exclusive cases:
\begin{enumerate}
    \item The point $\big(x(\lambda_{\rm min}),y(\lambda_{\rm min})\big)$  satisfies \eqref{eq:ellipse} strictly, in which case we have an interior solution—that is, randomized action recommendations—determined by $\lambda^\star=\lambda_{\rm min}$.
    \item The point $\big(x(\lambda_{\rm min}), y(\lambda_{\rm min})\big)$ violates \eqref{eq:ellipse}. In this case, complementary slackness requires the ellipse constraint to bind at the optimum. Thus, the optimal action recommendation is obtained by moving along the hyperbola (increasing the value of $\lambda$ from $\lambda_{\rm min}$), until we find the (unique) intersection of the hyperbola with the ellipse's boundary. In this case, the mechanism is deterministic conditioned on $(\theta,\omega)$.
\end{enumerate}
These two cases are illustrated in \Cref{fig:welfare} below for the firms-optimal mechanism.

\subsection{Firms-Optimal Mechanism}\label{sec:welfare}


For $n>1$, define \emph{producer surplus} as $W_F\eqdef\sum_{i\in[n]}\E[u_i(a;\theta_i,\omega)]$ and \emph{consumer surplus} as $W_C\eqdef$ the expected utility of the representative consumer under the Cournot or Bertrand interpretation (derived in \cref{sec:consumer} and \cref{app-consumer}).

In \eqref{eq:obedience-utility-cond}, we wrote  the expected utility of player $i$ in any obedient mechanism as
\begin{equation}\label{eq:expected-utility}
    \E[u_i(a;\theta_i,\omega)] = \frac 1 2 \E[a_i^2]
    = \frac {\mu_{a_i}^2+\sigma_{a_i}^2} 2,
\end{equation}
\looseness=-1
where the first equality uses the law of total expectation. Since the value of $\mu_{a_i}$ is pinned down by obedience \eqref{eq:obedience-mean-sym}, maximizing producer surplus over the set of obedient mechanisms is  equivalent to maximizing $\sum_{i=1}^n\sigma_{a_i}^2$. In particular, this objective function is invariant under permutation of the players, hence 
we can restrict to symmetric mechanisms without loss of generality. Thus, after substituting $\sigma_{a_i}^2$ and $\sigma_{a_i\theta_i}$ using the obedience constraints \eqref{eq:obedience-var-sym}, maximizing surplus over the class of symmetric obedient mechanisms takes the form \eqref{eq:reduced} with
\begin{displaymath}
	\tilde F(\sigma_{a_ia_j}, \sigma_{a_i\omega},\sigma_{a_i\theta_j})
	= (n-1)r\sigma_{a_ia_j}+s\sigma_{a_i\omega}+(n-1)rt\sigma_{a_i\theta_j}.
\end{displaymath}
The reduced objective function is linear and strictly monotone in $\sigma_{a_ia_j}$ and we can instantiate \cref{prop:linear} with $\alpha=\beta=\gamma=1$. The resulting mechanism is described in the following proposition; the restriction to Gaussian mechanisms is without loss here, with scope discussed in \cref{rem:gaussian-scope}.

\begin{proposition}\label{prop:opt-welfare}
    For $r\in\big(-1,\frac 1 {n-1}\big)$, there exists a unique symmetric mechanism maximizing producer surplus subject to obedience. In this mechanism, the action recommendations take the form \eqref{eq:params-sym} with noise correlation $\rho=1$  when $r>0$ and $\rho=-\frac 1{n-1}$ when $r<0$. The optimal values for $\sigma_{a_i\theta_j}$ and $\sigma_{a_i\omega}$ have the form
	$\sigma_{a_i\theta_j} = x(\lambda^\star)rt\sigma_{\theta_i}^2$ and 
	$\sigma_{a_i\omega} = y(\lambda^\star)s\sigma_\omega^2$ with
    \begin{equation}\label{eq:stationarity-welfare}
	x(\lambda)= \frac{f}{2(1+r)}\cdot \frac{\lambda nf + 2r+3}{\lambda nf(f-r) -(r+1)r},
	\quad y(\lambda) = \frac{1}{2n}\cdot\frac{\lambda nf+1}{\lambda(f-r)-r},
    \end{equation}
    where $f\eqdef \frac 1 {n-1}$ and $\lambda^\star = \min\set[\big]{\lambda\geq \max\set[\big]{0, \frac r {f-r}} \;\big|\; \big(x(\lambda), y(\lambda)\big) \text{ satisfies \eqref{eq:ellipse}}}$. The remaining  parameters $(\mu_{a_i}, \sigma_{a_i\theta_i}, \delta)$ are given by \eqref{eq:obedience-mean-sym}, \eqref{eq:obedience-var-sym} and \eqref{eq:var-from-slack} respectively.
\end{proposition}

\begin{figure}[!t]
    \begin{center}
	\hfill{}
	\includegraphics[scale=.9]{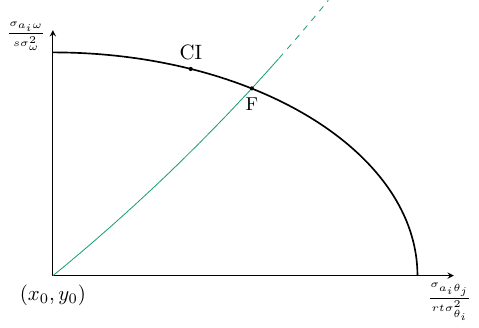}
	\hfill{}
	\includegraphics[scale=.9]{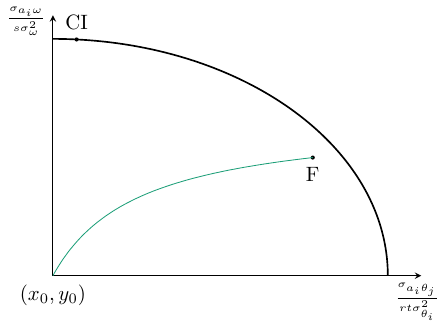}
	\hfill{}
    \end{center}
    \vspace{-1em}
    \caption{Obedience ellipse \ref{eq:ellipse} and firms-optimal mechanisms (F) in the coordinate system $x=\sigma_{a_i\theta_j}/rt\sigma_{\theta_i}^2$, $y=\sigma_{a_i\omega}/s\sigma_{\omega}^2$. We focus on the quadrant $x\geq x_0$, $y\geq y_0$ containing \ref{eq:ellipse}'s top-right quarter as well as the stationarity curve $\big(x(\lambda),y(\lambda)\big)$ parametrized by $\lambda\geq \lambda_{\rm min}$ (in green). On the left, the curve originates outside \ref{eq:ellipse}, and the firms-optimal mechanism is attained at the intersection with \ref{eq:ellipse}. On the right, the curve originates inside \ref{eq:ellipse} and the optimal mechanism is obtained for $\lambda=\lambda_{\rm min}$.}
    \label{fig:welfare}
\end{figure}

The geometric interpretation of this mechanism is the same as the one described below \cref{prop:linear}. In particular, we have an interior or boundary solution depending on whether the stationarity curve $\big(x(\lambda),y(\lambda)\big)$ originates inside or outside \ref{eq:ellipse}. \Cref{fig:welfare} illustrates these two situations\footnote{See also the interactive visualization at \url{http://thibaut.horel.org/info-coord-viz}.} and \cref{prop:welfare-random} below gives a complete characterization of the condition under which each case arises.

The functions $x(\lambda)$ and $y(\lambda)$ in \cref{prop:opt-welfare} are easily seen to be decreasing and since they converge to \ref{eq:ellipse}'s center, the stationarity curve $\big(x(\lambda),y(\lambda)\big)$ lies entirely in the quadrant $\set{(x,y)\in\R^2\given x\geq x_0, y\geq y_0}$ (cf.~\cref{fig:welfare}). This means the covariance condition of \cref{prop:ic-char} (incentive-compatibility) 
is satisfied, and the mechanism is implementable as a result. We collect these results in the following corollary.

\begin{corollary}\label{cor:welfare-ic}
    The mechanism of \cref{prop:opt-welfare} satisfies
    \begin{displaymath}
	\frac{\sigma_{a_i\omega}}{s\sigma_{\omega}^2}\geq y_0\geq 0,
	\quad
	\frac{\sigma_{a_i\theta_j}}{rt\sigma_{\theta_i}^2}\geq x_0\geq 0,
	\quad\text{and}\quad
	\frac{\sigma_{a_i\theta_i}}{t\sigma_{\theta_i}^2}\geq 1+\frac{r^2}f x_0\geq 1.
    \end{displaymath}
\end{corollary}

\Cref{cor:welfare-ic} establishes that the IC monotonicity is slack at the firms-optimal point:  optimization  subject to obedience only delivers a mechanism that is also incentive compatible. The intuition is the alignment between the designer's and the firms' objectives: with access to discriminatory transfers, the designer elicits the type while leaving the action distribution undistorted. 
More broadly, \cref{prop:ic-slack-regime} in \cref{sec:opt-app} establishes a closed-form necessary and sufficient condition on the linear-objective parameters $(\alpha,\beta,\gamma)$ under which the obedient optimum satisfies the IC monotonicity with slack; the firms-optimal case $(\alpha,\beta,\gamma)=(1,1,1)$ is one instance  in this class.

However, the slackness of the IC monotonicity is not a universal property of our framework. It fails for games of strategic complements $(r>0)$ when the designer wishes to anticorrelate the firms' actions: in such cases the obedient-optimal point violates the IC monotonicity of \cref{prop:ic-char}, and the optimal mechanism collapses to the no-transfer segment $[\varnothing,\mathrm{SO}]$ of \cref{fig:ic} where $\sigma_{a_j\theta_i}=0$. Two concrete instances of this phenomenon are the consumer-optimal mechanism in the Bertrand case (\cref{sec:consumer}) and in the Cournot case with complement goods ($r>0$) in \cref{app-consumer}.

Not only does the firms-optimal mechanism  in \cref{fig:welfare} lie in the positive quadrant: observe that it also places more weight on private types and less weight on the common state than the complete-information outcome. This is especially noteworthy in games of strategic complements, where one might expect heavier reliance on the common state to facilitate coordination. We now offer some intuition for this result, which we  state and prove formally in \cref{sec:comparison} below, where we also provide a comparison with the revenue-optimal mechanism (\cref{prop:comparison}).

Consider Bertrand competition with private demand shifters (\cref{example:bertrand}). If demand-side information is not shared, one firm's high own demand has no direct effect on its rivals' pricing. But if the platform observes and discloses these signals, a firm learning that its competitor's demand is high anticipates higher prices from that competitor and raises its own price. With strategic complements, this reaction feeds back: both firms' prices move together. Selective disclosure therefore amplifies correlation in strategies.

Full revelation of the common state would also correlate actions, but residual heterogeneity in private demand prevents fully optimal co-movement. Moreover, complete-information actions underweight the social value of coordination. The firms-optimal mechanism thus reduces the precision of actions' responses to the state in order to strengthen coordination, ensuring that each firm's action over-reacts not only to its own demand but also to its competitors' demand.  

Finally, as discussed below \cref{prop:opt-welfare} and seen in \cref{fig:welfare}, the firms-optimal recommendations are deterministic or randomized (conditioned on $\theta,\omega$) according to whether the stationarity curve $\big(x(\lambda),y(\lambda)\big)$ originates outside or inside \ref{eq:ellipse}. This can be characterized in terms of the game parameters $n,r,s,t$ and of the prior variances. More specifically, observe that the stationarity curve \eqref{eq:stationarity-welfare} depends only on $r$ and $n$, while \ref{eq:ellipse}'s equation depends on $n$, $r$ and the ratio $t^2\sigma_{\theta_i}^2/s^2\sigma_{\omega}^2$. With a careful analysis, we then obtain the following result.

\begin{proposition}\label{prop:welfare-random}
Consider the optimal mechanism in \cref{prop:opt-welfare}.
    \begin{enumerate}
    \item For $-\frac 1{n+1}\leq r < \frac 1{n-1}$ the action recommendations are always deterministic conditioned on $\theta,\omega$ (boundary solution).
    \item For $-1< r < -\frac 1 {n+1}$, there exists a threshold $\tau_{\rm F}(n,r)$ such that the action recommendations are randomized conditioned on $\theta,\omega$ (interior solution) iff $t^2\sigma_{\theta_i}^2/s^2\sigma_{\omega}^2< \tau_{\rm F}(n,r)$.
    \end{enumerate}
\end{proposition}


This can be intuitively understood as follows.  In a game of strategic substitutes $r<0$, the mechanism designer wishes to maximally anticorrelate the players' actions, even if obedience requires making actions less responsive to the common state. This is optimally achieved by placing more weight on the players' private types, relative to the complete-information Nash equilibrium (cf.\ \cref{prop:comparison} below). 
However, when $\sigma_{\theta_i}^2$ is small (relative to $\sigma_{\omega}^2$), the variation in types is not sufficient to generate sufficiently strong anti-correlation in actions.\footnote{The proof of this proposition (\cref{sec:welfare-app}) gives an expression for the threshold function $\tau_{\rm F}$.} The mechanism therefore supplements this force with additional negatively correlated noise, again sacrificing precision to improve the correlation structure.

In contrast, when the firms-optimal mechanism issues deterministic action recommendations, Proposition \ref{prop:opt-welfare} shows that it induces actions that are linear in each player's type, in the state, and in the other players' average type. Therefore, a signal structure that fully reveals the corresponding linear combination of state and competitors' types implements the desired Bayes Correlated Equilibrium action distribution as the unique Bayes Nash Equilibrium \citep{BM13}.

\subsection{Revenue-Optimal Mechanism}\label{sec:revenue}

 The revenue-optimal mechanism maximizes the firms' expected welfare net of their information rents. We therefore begin this section by deriving an expression for the players' (interim) information rent by deriving the largest payments that implement an incentive-compatible mechanism. Thus, we begin with the on-path payment formula from \cref{prop:ic-char}, and we choose the largest constant term that guarantees non-negative utility to all types. 
 

\begin{corollary}\label{cor:opt-pay-ob}
   In any incentive-compatible mechanism, player $i$'s  minimal information rent  takes the form
    \begin{equation}\label{eq:payment-ob}
	U_i(\theta_i)\equiv   \tu_i(\theta_i)-p_i(\theta_i) =
	t\int_{\theta_i^*}^{\theta_i} \E[a_i\given\theta_i=s]ds=
	\frac{t\sigma_{\theta_i}^2}{2\sigma_{a_i\theta_i}}
	\Big(\mu_{a_i}+\frac{\sigma_{a_i\theta_i}}{\sigma_{\theta_i}^2}(\theta_i-\mu_{\theta_i})\Big)^2,
    \end{equation}
   where $\mu_{a_i}$ is player $i$'s  average action and $\theta_i^*$ is the unique solution to $\E[a_i\given\theta_i^*]=0.$
\end{corollary}


This expression  lends itself to a natural interpretation. The mechanism needs to compensate an agent for being responsive to their type (as measured by the covariance $\sigma_{a_i\theta_i}$) in order to prevent profitable misreporting. However, the covariance term $\sigma_{a_i\theta_i}$ enters the information rent twice and has an a priori ambiguous effect. To gain intuition, notice that  the \emph{marginal} rent of any  type $\theta_i$ is proportional to their  expected action,
$$
U'_i(\theta_i)=t\E[a_i\given\theta_i]=t
	\Big(\mu_{a_i}+\frac{\sigma_{a_i\theta_i}}{\sigma_{\theta_i}^2}(\theta_i-\mu_{\theta_i})\Big).
$$
 If the recommendation rule is obedient, the expected action $(\mu_{a_i})$ of the average type $(\theta_i=\mu_{\theta_i})$  is pinned down by the obedience constraints. The  parameter $\sigma_{a_i\theta_i}$ then rotates the expected action of type $\theta_i$ (and hence the marginal rent function) around its ex-ante mean.

\begin{figure}[!t]
\centering
\begin{subfigure}[t]{0.55\textwidth}
\centering
\begin{tikzpicture}[
  scale=.8,   
  >=stealth,   
  every node/.style={font=\small}  
]

  \draw[->] (-0.5,0) -- (5,0) node[right] {$\theta_i$};
  \draw[->] (0,-0.5) -- (0,6.5) node[left] {$U'(\theta_i)$};

  \coordinate (theta_star) at (1,0);
  \coordinate (mu_bi) at (3,0);
  \coordinate (theta) at (4,0);
  \coordinate (mu_ai) at (0,3); 
  \coordinate (line_end) at (5,6); 
  \coordinate (U_end) at (4,4.5); 

  \draw[thick] (theta_star) -- (line_end) 
      node[pos=0.62, above, rotate=56] {$\E[a_i\given\theta_i]=\mu_{a_i} + \sigma_{a_i\theta_i} (\theta_i - \mu_{\theta_i})$};

  \draw[dashed] (mu_ai) --++ (3,0); 
  \draw[dashed] (mu_bi) --++ (0,3); 
  \draw[dashed] (theta) -- (U_end);  

  \fill (theta_star) circle(1.5pt) node[below] {$\theta_i^*$};
  \fill (mu_bi) circle(1.5pt) node[below] {$\mu_{\theta_i}$};
  \fill (theta) circle(1.5pt) node[below] {$\theta$};

  \fill (mu_ai) circle(1.5pt) node[left] {$\mu_{a_i}$};

  \fill[pattern=north west lines, pattern color=black!50] 
      (theta_star) -- (U_end) -- (theta) -- cycle;

  \draw[->] (3.5,1.5) to[out=40,in=220] (4.25,1.5) 
      node[right] {$U_i(\theta_i)=\int_{\theta^*_i}^{\theta_i} \E[a_i\given\theta_i]d\theta_i$};

\end{tikzpicture}
\caption{\small{Higher $\sigma_{a_i\theta_i}\Rightarrow$ rents grow quicker}}
\end{subfigure}
\hfill
\begin{subfigure}[t]{0.44\textwidth}
\centering
\begin{tikzpicture}[
  scale=.8,    
  >=stealth,    
  every node/.style={font=\small}  
]

  \draw[->] (-1.5,0) -- (5,0) node[right] {$\theta_i$};
  \draw[->] (0,-0.5) -- (0,6.5) node[left] {$U'(\theta_i)$};

  \coordinate (theta_star) at (-1,0); 
  \coordinate (mu_bi) at (3,0);
  \coordinate (theta) at (4,0);
  \coordinate (mu_ai) at (0,3);       
  \coordinate (line_end) at (5,4.5);
  \coordinate (U_end) at (4,3.75);     

  \draw[thick] (theta_star) -- (line_end) 
      node[pos=0.62, above, rotate=37] {$\E[a_i\given\theta_i]=\mu_{a_i} + \sigma_{a_i\theta_i} (\theta_i - \mu_{\theta_i})$};

  \draw[dashed] (mu_ai) --++ (3,0); 
  \draw[dashed] (mu_bi) --++ (0,3); 
  \draw[dashed] (theta) -- (U_end);  

  \fill (theta_star) circle(1.5pt) node[below] {$\theta_i^*$};
  \fill (mu_bi) circle(1.5pt) node[below] {$\mu_{\theta_i}$};
  \fill (theta) circle(1.5pt) node[below] {$\theta$};

  \fill (mu_ai) circle(1.5pt) node[left] {$\mu_{a_i}$};

  \fill[pattern=north west lines, pattern color=black!50] 
      (theta_star) -- (U_end) -- (theta) -- cycle;

\end{tikzpicture}
\caption{\small{Lower $\sigma_{a_i\theta_i}\Rightarrow$ higher rent $U_i(\mu_{\theta_i})$}}
\end{subfigure}
\caption{Dual effect of type-action covariance on information rents ($t=1$, $\sigma_{\theta_i}^2=1)$}
\label{fig:margrent}
\end{figure}

\Cref{fig:margrent} illustrates this effect: both a very steep and a very flat $\E[a_i\given\theta_i]$ lead to large information rents. Indeed, a mechanism that is not very responsive to the player's type will necessarily induce a large measure of types to take a positive action in expectation, which increases the marginal rent at every $\theta_i<\mu_{\theta_i}$. This effect is relatively stronger the higher the average action $\mu_{a_i}$. Conversely, a mechanism that is too responsive to types will require the information rent to grow very quickly with $\theta_i$, which also reduces the maximal  payments. These forces carry over to the ex ante information rent, which is equal to
\begin{displaymath}
   \mathbb{E}[U_i(\theta_i)]= \frac t 2\sigma_{a_i\theta_i} +\frac t 2
    \frac{\mu_{a_i}^2\sigma_{\theta_i}^2}{\sigma_{a_i\theta_i}}\geq 0.
\end{displaymath}
Thus, in both the interim and the ex ante expressions for the information rent, the nonlinear term makes the choice of $\sigma_{a_i\theta_i}$ dependent on $\mu_{a_i}^2$. 

This dependence is manifest in the seller's revenue objective. Using \cref{cor:opt-pay-ob} (with the maximal-payment characterization of \cref{prop:max-payment} in \cref{sec:participation}), the highest expected payment of any implementable mechanism that guarantees participation is
\begin{equation}\label{eq:expected-payment}
	    \E[p_i(\theta_i)]
	    = \E[u_i(a;\theta,\omega)] - \frac t 2 \sigma_{a_i\theta_i} -
	    \frac{t\mu_{a_i}^2\sigma_{\theta_i}^2}{2\sigma_{a_i\theta_i}}=
	     \frac{\mu_{a_i}^2} 2 + \frac{\sigma_{a_i}^2} 2 -\frac t 2\sigma_{a_i\theta_i}
	    - \frac{t\mu_{a_i}^2\sigma_{\theta_i}^2}{2\sigma_{a_i\theta_i}},
\end{equation}
where the second equality uses the expected utility of a player in obedient
mechanisms \eqref{eq:expected-utility}.  Because $\mu_{a_i}$ is pinned down by the
obedience constraint, the objective function \eqref{eq:expected-payment} is concave in the  covariance coefficients over the set of incentive compatible
mechanisms.

After substituting $\sigma_{a_i}^2$ using the first linear obedience
constraint \eqref{eq:obedience-var-sym}, maximizing revenue over the class of
symmetric obedient mechanisms takes the form \eqref{eq:reduced} with
\begin{displaymath}
	\tilde F(\sigma_{a_ia_j}, \sigma_{a_i\omega},\sigma_{a_i\theta_j})
	= (n-1)r\sigma_{a_ia_j}+s\sigma_{a_i\omega}
		- \frac{\mu_{a_i}^2}{1+r\sigma_{a_i\theta_j}/ft\sigma_{\theta_i}^2}.
\end{displaymath}

\begin{proposition}\label{prop:opt-revenue}
    For $r\in\big(-1,\frac 1 {n-1}\big)$, there exists a unique Gaussian and symmetric mechanism maximizing revenue subject to obedience. The revenue-optimal recommendations follow the structure of  \cref{prop:opt-welfare}
    with $x(\lambda)$ the unique solution in $(-f/r^2,+\infty)$ of
\begin{equation}\label{eq:cubic}
	\frac{2x}{f} \big[\lambda nf(f-r)-r(1+r)\big]
	- \frac{\lambda nf + 2 + r}{1+r}
	=\frac{\mu_{a_i}^2}{t^2\sigma_{\theta_i}^2}
    \frac 1{(1+r^2x/f)^2},
\end{equation}
	and $\displaystyle y(\lambda)= \frac{1}{2n}\frac{\lambda nf+1}{\lambda(f-r)-r}$.
    \end{proposition}

The proof is in \cref{sec:rev-app}. The geometric interpretation presented below \cref{prop:opt-welfare} applies similarly. The only difference is that the curve $\big(x(\lambda),y(\lambda)\big)$ is no longer a portion of hyperbola, because $x(\lambda)$ now solves the cubic equation \eqref{eq:cubic}.

Crucially, the mean average action $\mu_{a_i}=f(s\mu_{\omega}+t\mu_{\theta_i})/(f-r)$ now enters the expression for the stationarity curve  \eqref{eq:cubic}. This is due to the coupling of the variables $\mu_{a_i}$ and $\sigma_{a_i\theta_i}$ in the players' information rents discussed above, which in turn implies that $\mu_{a_i}$ and $\sigma_{a_i\theta_i}$ are not additively separable in the expected payment objective \eqref{eq:expected-payment}, contrary to what happened for the producer-surplus objective. Hence, the covariance matrix of the optimal mechanism now also depends on the prior means (through $\mu_{a_i}$). While the revenue-optimal recommendations are then deterministic or randomized in a similar way to \cref{prop:welfare-random},  the threshold is now a function $\tau_{\rm R}(n,r,\mu_{a_i}^2,\sigma_{\theta_i}^2)$ which is decreasing in $\mu_{a_i}^2$. This dependency of the revenue-optimal mechanism on $\mu_{a_i}^2$ also impacts the comparison  with the firms-optimal mechanism, which we explore in the following section.

\begin{remark}[Non-Gaussian Mechanisms]\label{rem:gaussian-scope}
The restriction to Gaussian mechanisms is without loss for the firms- and consumer-optimal problems (\cref{sec:welfare} and \cref{sec:consumer}; see also \citet{mu2024} for the corresponding result without elicitation), but is with loss for the revenue-optimal problem. 
A non-Gaussian recommendation rule can separate the dependence of $\E[a_i\given\theta_i]$ from the dependence of $\var(a_i\given\theta_i)$ on type and thereby reduce the rent below the Gaussian minimum. The Gaussian optimum nonetheless attains the first-best revenue when $n=1$ (as discussed in \cref{Sec:bench}), when $n\to\infty$, and when $r\to 0$, so the revenue under the Gaussian restriction is not uniformly bounded away from the optimal revenue. 
\end{remark}

\subsection{Comparison of Obedient Mechanisms}\label{sec:comparison}

In this section, we wish to compare three obedient mechanisms: the complete-information Nash benchmark (denoted by CI, \cref{Sec:bench}); the firms-optimal mechanism (denoted by F, \cref{sec:welfare}); and the revenue-optimal mechanism
(denoted by R, \cref{sec:revenue}). Recall that the mean action recommendation is pinned down by the obedience
constraint in all three mechanisms. 
 Thus,  comparing obedient mechanisms amounts to comparing the covariance
coefficients $(\sigma_{a_i\theta_i}, \sigma_{a_i\theta_j}, \sigma_{a_i\omega})$
and any noise that may be added to the  recommendations. 

\begin{figure}[!t]
    \begin{center}
	\includegraphics{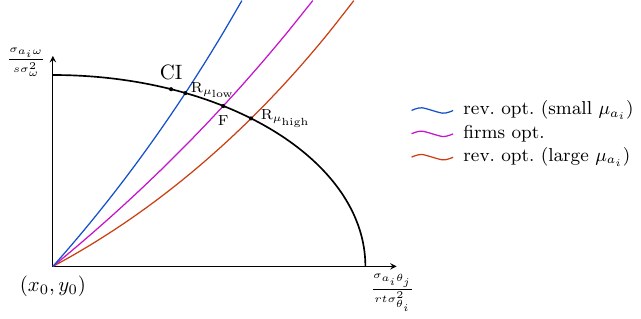}
    \end{center}
    \caption{Comparison of obedient mechanisms.\protect\footnotemark}
\end{figure}

\footnotetext{The reader is invited to vary the parameter $\mu_{a_i}$ in the interactive visualization available at \url{https://thibaut.horel.org/info-coord-viz/} to explore its effect on the revenue-optimal mechanism and on the relative ordering of mechanisms.
}

\begin{proposition}\label{prop:comparison}
	Fix all the exogenous variables except $\mu_\omega$ and
	$\mu_{\theta_i}$. Then:
	\begin{enumerate}
		\item In the firms-optimal mechanism and in the complete-information
			Nash equilibrium, the value of $(\sigma_{a_i\theta_i},
			\sigma_{a_i\theta_j}, \sigma_{a_i\omega})$ is independent of
			$\mu_{a_i}$.
		\item In the revenue-optimal mechanism,
			$\sigma_{a_i\theta_i}^{\rm R}/{t\sigma_{\theta_i}^2}$ and
			$\sigma_{a_i\theta_j}^{\rm R}/{rt\sigma_{\theta_i}^2}$ are increasing
			functions of $\mu_{a_i}^2$, whereas
			$\sigma_{a_i\omega}^{\rm R}/s\sigma_{\omega}^2$ is decreasing in this
			parameter. We write $\sigma_{a_i\theta_i}^{\rm R}(\mu_{a_i}^2)$ to
			make the dependency explicit, and similarly for
			$\sigma_{a_i\theta_j}$, $\sigma_{a_i\omega}$.
		    \item For both $\sigma_{a_i\theta_j}/rt\sigma_{\theta_i}^2$ and $\sigma_{a_i\theta_i}/t\sigma_{\theta_i}^2$, we have
	\begin{displaymath}
		0\leq \frac{\sigma_{a_i\theta_j}^{\rm CI}}{rt\sigma_{\theta_i}^2}
		\leq \frac{\sigma_{a_i\theta_j}^{\rm R}(0)}{rt\sigma_{\theta_i}^2}
		\leq \frac{\sigma_{a_i\theta_j}^{\rm F}}{rt\sigma_{\theta_i}^2}
		\leq \frac{\sigma_{a_i\theta_j}^{\rm R}(\infty)}{rt\sigma_{\theta_i}^2}.
	\end{displaymath}
	 
\item For $\sigma_{a_i\omega}$, the order is reversed,
	\begin{displaymath}
		0\leq \frac{\sigma_{a_i\omega}^{\rm R}(\infty)}{s\sigma_{\omega}^2}
		\leq \frac{\sigma_{a_i\omega}^{\rm F}}{s\sigma_{\omega}^2}
		\leq \frac{\sigma_{a_i\omega}^{\rm R}(0)}{s\sigma_{\omega}^2}
		\leq\frac{\sigma_{a_i\omega}^{\rm CI}}{s\sigma_{\omega}^2}.
	\end{displaymath}
	\end{enumerate}
\end{proposition}

We already discussed below \cref{prop:opt-welfare} how the firms-optimal mechanism places more weight on private types and less weight on the common state than the complete-information outcome. A natural next question is whether this qualitative feature is also present in the seller's revenue-maximizing mechanism. The monopolist information seller wants to limit the players' rents, relative to their welfare-maximizing level. An incomplete intuition would then suggest that the  monopolist will understate the reliance of the mechanism on the private types, but will still qualitatively modify the complete-information signal in the same direction.

Instead, the seller may sometimes overstate the reliance of the mechanism on the private types. To gain more intuition, recall \Cref{fig:margrent} and type $\theta$'s information rent.  
When the market is very profitable (as measured by the exogenous level $\mu_{a_i}$ of average action under obedience), a non-responsive mechanism (i.e., a low $\sigma_{a_i\theta_i}$) results in worst-off  type $\theta^*_i$ that is far from the mean type $\mu_\theta$. This means the most frequent types in the distribution receive larger rents.  Thus, \Cref{fig:margrent}  suggests that  reducing $\sigma_{a_i\theta_i}$ is costlier when $\mu_{a_i}$ is large, and beneficial when $\mu_{a_i}$ is small, so to limit the rate of growth of information rents to the right of the mean.

\subsection{Consumer-Optimal Mechanism}\label{sec:consumer}

We now consider a representative consumer with a linear-quadratic utility that microfounds the demand-shifter Bertrand interpretation of our model (\cref{example:bertrand}).\footnote{An analysis of the corresponding Cournot model is in \cref{app-consumer}.} A simple derivation (see the proof of \cref{prop:bertrand-opt} in \cref{sec:app-consumer}) shows that for the demand curve of \cref{example:bertrand}, when the firms choose a price vector $p\in\R^n$, the consumer surplus becomes
\begin{equation}\label{eq:bertrand-utility}
    W_C(p;\omega,\theta) = \frac{1}{4}\sum_{i=1}^n p_i^2 - \frac{r}{2}\sum_{i\neq j} p_i p_j - s\omega\sum_{i=1}^n p_i - t\sum_{i=1}^n \theta_i\,p_i + C(\omega,\theta),
\end{equation}
where $C(\omega,\theta)$ is a $p$-independent constant. 

Maximizing the consumer's ex-ante surplus is thus equivalent to maximizing
\begin{displaymath}
    \sigma_{a_i}^2 - 2(n-1)r\sigma_{a_ia_j} - 4s\sigma_{a_i\omega} - 4t\sigma_{a_i\theta_i}.
\end{displaymath}
Rewriting this objective under obedience in terms of $\sigma_{a_i\theta_j}$, $\sigma_{a_ia_j}$, and $\sigma_{a_i\omega}$ using \eqref{eq:obedience-var-sym}, the resulting optimization problem takes the form \eqref{eq:reduced} with
\begin{displaymath}
	\tilde F(\sigma_{a_ia_j}, \sigma_{a_i\omega},\sigma_{a_i\theta_j})
	=-(n-1)r\,\sigma_{a_ia_j} - 3s\,\sigma_{a_i\omega} - 3(n-1)rt\,\sigma_{a_i\theta_j},
\end{displaymath}
\looseness=-1
which corresponds to the objective in \cref{prop:linear} with $(\alpha,\beta,\gamma)=(-1,-3,-3)$. Note that the monotonicity in $\sigma_{a_ia_j}$, $\sigma_{a_i\omega}$, and $\sigma_{a_i\theta_j}$ is all reversed relative to firms-welfare and revenue objectives in \cref{sec:welfare} and \cref{sec:revenue}.

\begin{figure}[!t]
\centering
\begin{tikzpicture}[scale=2.0]
    \def\xo{1.5}      
    \def\yo{0.6}      
    \def\xa{1.8}      

    \pgfmathsetmacro{\xb}{\xa*\yo/sqrt(\xa*\xa-\xo*\xo)};

    \draw[->] (\xo-\xa, 0) -- (\xo+1.1*\xa, 0) node[right, font=\small] {$\sigma_{a_i\theta_j}/(rt\sigma_{\theta_i}^2)$};
    \draw[->] (0, \yo-\xb) -- (0, \yo+1.2*\xb) node[above, font=\small] {$\sigma_{a_i\omega}/(s\sigma_{\omega}^2)$};

    \draw[thick] (\xo, \yo) circle [x radius=\xa, y radius=\xb];

    \draw[help lines, gray] (0, 0) rectangle (2*\xo, 2*\yo);

    \draw[red, very thick]
        ({\xo + \xa*cos(20)}, {\yo + \xb*sin(20)})
	arc[start angle=20, end angle=-acos(-\xo/\xa), x radius=\xa, y radius=\xb];

    \fill (0, 0) circle (1.2pt) node[below left, font=\small] {$\varnothing$};
    \fill (2*\xo, 0) circle (1.2pt) node[below right, font=\small] {TO};
    \fill (0, 2*\yo) circle (1.2pt) node[above left, font=\small] {SO};
    \fill (2*\xo, 2*\yo) circle (1.2pt) node[above right, font=\small] {CI};

    \fill (\xo, \yo) circle (0.4pt);

    \fill ({\xo+\xa*cos(20)}, {\yo+\xb*sin(20)}) circle (1.5pt) node[right,xshift=3pt]{F};

    \fill[blue!80!black] (0, 0) circle (1.5pt) node[below right,yshift=-8pt]{B};

\end{tikzpicture}
\caption{Pareto frontier between firms-optimal F and consumer-optimal B in the Bertrand formulation (\cref{example:bertrand}). The red arc on the obedience ellipse covers the IC-slack range $\eta\in[0,1/2)$ of the convex combination of objectives: the optimum traces the boundary clockwise from F through TO toward $\varnothing$ as the consumer's weight rises. At $\eta=1/2$ the IC monotonicity starts to bind, and the optimum then collapses to $\varnothing$ for every $\eta\in(1/2,1]$, which induces the prior Bayes--Nash equilibrium with no information disclosure.}
\label{fig:consumer-opt}
\end{figure}

\begin{proposition}\label{prop:bertrand-opt}
    For $r\in(0,\frac 1{n-1})$, the symmetric, incentive compatible mechanism maximizing the consumer's expected utility under the demand-shifter Bertrand formulation coincides with the prior Bayes--Nash equilibrium point $\varnothing$ of \cref{fig:ic}: it has $\sigma_{a_i\theta_j}=0$ and $\sigma_{a_i\omega}=0$, the IC monotonicity binds, and the mechanism discloses no information about the state $\omega$ or competitor types $\theta_{-i}$.
\end{proposition}

The IC-binding result follows from \cref{prop:linear} with $(\alpha,\beta,\gamma)=(-1,-3,-3)$: the proof of \cref{prop:bertrand-opt} in \cref{sec:app-consumer} shows the unconstrained obedient optimum is a boundary solution with $\sigma_{a_i\theta_j}^*<0$, so the IC monotonicity of \cref{prop:ic-char} binds. The IC-constrained optimum then lies on the segment $[\varnothing,\mathrm{SO}]$ of \cref{fig:ic}; on that segment the consumer's residual objective is strictly decreasing in $\sigma_{a_i\omega}$, so the optimum lies at $\sigma_{a_i\omega}=0$, that is, $\varnothing$.

\paragraph{Firm vs consumer welfare.} We conclude this section with a discussion of the trade-off between the firms and consumer utilities. Specifically, we consider the frontier of incentive-compatible mechanisms that are Pareto efficient for the firms and the consumer.

Because we have convex constraints and linear objectives, a standard duality argument implies that this frontier is exactly the locus of maximizers for a convex combination of our two objectives. Using the expressions for $\tilde F$ above \cref{prop:opt-welfare} and \cref{prop:bertrand-opt}, this combined objective takes the following form under the demand-shifter Bertrand formulation:
\begin{displaymath}
    \tilde F_\eta = (n-1)r(1-2\eta)\,\sigma_{a_ia_j} \,+\, (1-4\eta)\,s\sigma_{a_i\omega} \,+\, (n-1)rt\,(1-4\eta)\,\sigma_{a_i\theta_j},
\end{displaymath}
\looseness=-1
where $\eta\in[0,1]$ is the weight of the combination, with $\eta=0$ (resp.\ $\eta=1$) corresponding to the firms (resp.\ consumer) welfare. 
The objective is linear in the covariance parameters, so \cref{prop:linear} still applies pointwise in $\eta$ to characterize the optimal mechanism. 

The structure of the frontier is illustrated in \Cref{fig:consumer-opt}. A designer aligned with downstream consumers (or with total welfare) issues qualitatively different recommendations. Along the IC-slack portion of the Pareto frontier of \cref{fig:consumer-opt}, raising the consumer weight pulls the optimum clockwise from F through TO into the region where the action-state covariance is negative (low prices when demand is high) while the cross-firm action correlation remains positive. Once the consumer weight exceeds $1/2$, the IC monotonicity binds and the optimum collapses to the prior Bayes--Nash equilibrium $\varnothing$, where it remains for every $\eta\in(1/2,1]$. This is stated formally and proved in \cref{prop:pareto-opt} in \cref{sec:app-consumer}.

\section{Conclusions}\label{sec:conclusion}

We model data provision as a joint problem of information and mechanism design with 
externalities. 
A benevolent designer can  use correlated signals to  facilitate coordination or anti-coordination among competing agents. Achieving this requires both commitment power and the ability  to impose transfers. In the firms-optimal case, the alignment between the designer's and the firms' objectives means that the welfare-maximizing mechanism elicits private information without distorting the distribution of actions. This property fails in the revenue-maximizing case, where the designer must distort allocations and extract information through screening, making the specification of off-path actions central to the mechanism's profitability. 

Our framework can help clarify the distinct roles of asymmetric information and hidden actions in 
optimal mechanism design. In some applications, players fully delegate their actions to the 
platform---for example, through auto-bidding in digital advertising or ``auto-accept'' 
functionality in rental pricing platforms such as RealPage---eliminating obedience constraints. 
We analyze this case in \cref{sec:delegation}. When players retain private information, the 
welfare-optimal mechanism implements the first-best outcome, while the revenue-optimal 
mechanism reduces to a screening problem with a distinct solution that yields higher 
platform profits.  

Finally, our framework extends tractably to asymmetric environments. These include vertically 
integrated designers who maximize the payoff of a single player rather than aggregate welfare, 
as in Amazon’s dual role \citep{kamu23,rodr25}, and regulatory interventions where 
platform-mediated actions affect downstream agents such as consumers. Both extensions relax 
the symmetry assumptions of the benchmark model but remain analyzable within the same 
general framework.
\newpage

\appendix

\section{Interim Payments}\label{sec:app-payments}

The class of mechanisms introduced in \cref{sec:model} restricts the payment to the interim form $p_i:\R\to\R$. The following lemma justifies the restriction: any equilibrium of a direct mechanism with a more general payment $\tilde p_i:\R^n\times\R\to\R$ remains an equilibrium of the same mechanism with the interim-equivalent payment $\bar p_i$ obtained by integrating out $(\theta'_{-i},\omega)$.

\begin{lemma}\label{lem:payment-interim}
Fix a Perfect Bayesian Equilibrium $\sigma$ of a direct mechanism $(\tau,\tilde p)$ with $\tilde p_i:\R^n\times\R\to\R$. Define the interim-equivalent payment $\bar p_i:\R\to\R$ by
\[
\bar p_i(\theta'_i)\eqdef\E_\sigma\big[\tilde p_i(\theta'_i,\theta'_{-i},\omega)\given\theta'_i\big],
\]
where the expectation is under the conditional law of $(\theta'_{-i},\omega)$ given $\theta'_i$ induced by $\sigma$ and the prior. Then $\sigma$ is also an equilibrium of $(\tau,\bar p)$, and yields the same on-path interim expected utility for every player and every type.
\end{lemma}
\begin{proof}
At every on-path information set of player $i$, the interim distribution of $(\theta'_{-i},\omega)$ given $\theta'_i$ is fixed by $\sigma$ and the prior. The law of iterated expectations gives $\E_\sigma[\tilde p_i\given\theta'_i]=\bar p_i(\theta'_i)$ identically. Moreover, types are independent under the prior, and $\sigma_{-i}$ depends only on $(\theta_{-i},\omega)$, so the conditional distribution of $(\theta'_{-i},\omega)$ given player $i$'s own type $\theta_i$ equals its marginal. Hence, for any report $\theta'_i$ on or off the equilibrium path,
\[
\E\big[\tilde p_i(\theta_i',\theta'_{-i},\omega)\given\theta_i\big]=\bar p_i(\theta_i').
\]
The interim expected utility of every report therefore coincides under the two payments, and $\sigma$ remains an equilibrium under $\bar p_i$.
\end{proof}

Within the incentive compatible class produced by the revelation principle, a payment $p_i:\R\to\R$ independent of the other players' reports and state is therefore without loss.

\section{Additional Results and Proofs for Section~\ref{sec:char}}\label{sec:char-app}

\subsection{Symmetry and positive semi-definiteness}\label{sec:symmetry}

Let $\sg_n$ be the symmetric group on $[n]$ and for each permutation
$\pi\in\sg_n$, denote by $P_\pi\in\M_n(\R)$ the permutation matrix with
$(P_\pi)_{i,j} = \mathbf{1}\set{i=\pi(j)}$ for $(i,j)\in[n]^2$. In particular,
for $x\in\R^n$, $P_\pi x$ is the permuted vector whose $i$th coordinate is
$(P_\pi x)_i = x_{\pi^{-1}(i)}$ for $i\in [n]$.

\begin{definition}\label{def:symmetric}
    A mechanism is \emph{symmetric} if $(a,\theta,\omega)$ and $(P_\pi a,P_\pi
    \theta,\omega)$ are identically distributed for each permutation $\pi\in\sg_n$.
\end{definition}

The mean vector and covariance matrix of symmetric mechanisms have a simple structure presented in the next lemma.

\begin{lemma}\label{lemma:sym-char}
	Let $\mu\in\R^{2n+1}$ and $\K\in\M_{2n+1}(\R)$ be respectively the mean
	vector and covariance matrix of a Gaussian mechanism, with the block
	structure indicated in \eqref{eq:params}. Then the mechanism is symmetric
	iff $\mu_a,\mu_\theta,\K_{a,\omega}\in\R 1_n$ and
	$\K_{aa},\K_{a\theta}\in\R I_n+\R J_n$ and $\K_{\theta\theta}\in\R I_n$.
\end{lemma}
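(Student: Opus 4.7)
The plan is to reduce the distributional equality $(a,\theta,\omega)\stackrel{d}{=}(P_\pi a, P_\pi \theta, \omega)$ that defines symmetry to equality of the mean vectors and covariance matrices on the two sides. Since $(P_\pi a, P_\pi \theta, \omega)$ is a linear transformation of $(a,\theta,\omega)$, it is again Gaussian, so the distributional equality for every $\pi\in\sg_n$ is equivalent to matching $\mu$ and $\K$ block by block. Using the structure in \eqref{eq:params}, this yields: $\mu_a=P_\pi\mu_a$, $\mu_\theta=P_\pi\mu_\theta$; $\K_{aa}=P_\pi\K_{aa}\tr{P_\pi}$, $\K_{\theta\theta}=P_\pi\K_{\theta\theta}\tr{P_\pi}$, $\K_{a\theta}=P_\pi\K_{a\theta}\tr{P_\pi}$, and $\K_{a\omega}=P_\pi\K_{a\omega}$. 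The blocks involving the scalar $\omega$ are either trivial ($\sigma_\omega^2$) or automatic (the $\theta$--$\omega$ block is $0$ by independence in the prior), so they impose no further constraints.

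Next I would characterize, for each block, the set of vectors/matrices that satisfy the corresponding invariance for \emph{every} $\pi$. A vector $v\in\R^n$ satisfying $P_\pi v=v$ for all $\pi$ must have all coordinates equal, hence $v\in\R 1_n$; applying this to $\mu_a$, $\mu_\theta$, and $\K_{a\omega}$ yields the first three claims. A matrix $M\in\M_n(\R)$ satisfying $P_\pi M\tr{P_\pi}=M$ for all $\pi$ is equivalent (using $\tr{P_\pi}=P_{\pi^{-1}}$) to $M$ commuting with every permutation matrix, and by \cref{prop:matrix} such matrices are exactly those of the form $\alpha I_n+\beta J_n$; applying this to $\K_{aa}$ and $\K_{a\theta}$ gives $\K_{aa},\K_{a\theta}\in\R I_n+\R J_n$.

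The one step requiring a little care is the $\K_{\theta\theta}$ block. By the independence assumption on the prior, $\K_{\theta\theta}$ is already diagonal with entries $(\sigma_{\theta_i}^2)_{i\in[n]}$, so $P_\pi\K_{\theta\theta}\tr{P_\pi}=\diag(\sigma_{\theta_{\pi^{-1}(i)}}^2)$ and permutation invariance additionally forces all these variances to coincide. This collapses the generic $\R I_n+\R J_n$ characterization to $\K_{\theta\theta}\in\R I_n$. For the converse direction, I would simply observe that each of the claimed structural conditions is manifestly preserved under $P_\pi$-conjugation (vectors in $\R 1_n$ are fixed by $P_\pi$, and $I_n$ and $J_n$ each commute with every $P_\pi$), so together they force $\mu$ and $\K$—hence the distribution itself—to be permutation-invariant. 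I do not anticipate a genuine obstacle: the proof is essentially an unpacking of \cref{prop:matrix} block-by-block, with the independence of the prior supplying the one extra collapse in the $\theta$-block.
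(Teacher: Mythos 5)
Your proposal is correct and follows essentially the same route as the paper's proof: reduce symmetry to invariance of $\mu$ and $\K$ under the permutation action (valid because a Gaussian is determined by its first two moments), apply the commutant characterization of \cref{prop:matrix} block by block, and use independence of the prior to collapse $\K_{\theta\theta}$ from $\spn(I_n,J_n)$ to $\R I_n$. The only difference is that you spell out the converse direction and the elementary vector case ($P_\pi v=v$ for all $\pi$ implies $v\in\R 1_n$) explicitly, which the paper leaves implicit.
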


\begin{proof}
	For a permutation $\pi\in\sg_n$, let us denote by $\mu^\pi$ and
	$\K^\pi$ the mean vector and covariance matrix of $(P_\pi a, P_\pi
	\theta, \omega)$. An immediate derivation gives
	\begin{equation}\label{eq:perm-params}
		\mu^\pi = \begin{bmatrix}P_\pi\mu_a\\P_\pi\mu_\theta\\\mu_\omega\end{bmatrix}
		\quad\text{and}\quad
		\K^\pi = \begin{bmatrix}
			P_\pi\K_{aa}\tr{P_\pi}&P_\pi\K_{a\theta}\tr{P_\pi}&P_\pi\K_{a\omega}\\[0.5ex]
			P_\pi\tr\K_{a\theta}\tr{P_\pi}&P_\pi\K_{\theta\theta}\tr{P_\pi}&0\\[0.5ex]
			\tr{\K_{a\omega}}\tr{P_\pi}&0&\sigma_{\omega}^2\\
		\end{bmatrix}\,.
	\end{equation}
	Because the mechanism is Gaussian, it is fully determined by its means and
	covariance matrix, hence the mechanism is symmetric iff $\K^\pi=\K$ and
	$\mu^\pi=\mu$ for all $\pi\in\sg_n$. From \cref{prop:matrix}, we
	immediately obtain that $\mu_a,\mu_\theta,\K_{a\omega}\in\spn(1_n)$ and
	$\K_{aa},\K_{a\theta},\K_{\theta\theta}\in\spn(I_n,J_n)$. Since the prior on $\theta$ is independent, $\K_{\theta\theta}$ is diagonal,
	implying that $\K_{\theta\theta}\in\spn(I_n)$.
\end{proof}

\begin{lemma}\label{lemma:psd}
	The covariance matrix $\K$ of a symmetric mechanism is positive semidefinite iff
	\begin{enumerate}
		\item $\sigma_{a_i\theta_i}=\sigma_{a_i\theta_j}=0$ whenever $\sigma_{\theta_i}^2=0$, and
			$\sigma_{a_i\omega}=0$ whenever $\sigma_\omega^2=0$.
		\item The following inequality constraints hold
			with the convention $0/0=0$.
	\begin{displaymath}
		\begin{cases}
			\frac{1}{\sigma_{\theta_i}^2}(\sigma_{a_i\theta_i}-\sigma_{a_i\theta_j})^2
			\leq \sigma_{a_i}^2 - \sigma_{a_ia_j}\\
			\frac{1}{\sigma_{\theta_i}^2}(\sigma_{a_i\theta_i}+(n-1)\sigma_{a_i\theta_j})^2+\frac{n}{\sigma_\omega^2}\sigma_{a_i\omega}^2
			\leq \sigma_{a_i}^2 + (n-1)\sigma_{a_ia_j}
		\end{cases},
	\end{displaymath}
	Furthermore, there is equality in the first inequality iff
	$\cov(a_i,a_j\given \theta,\omega) = \var(a_i\given \theta,\omega)$ and in
	the second inequality iff $\cov(a_i,a_j\given \theta,\omega) =
	-\var(a_i\given \theta,\omega)/(n-1)$.
	\end{enumerate}
\end{lemma}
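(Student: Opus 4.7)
The plan is to exploit the symmetry of $\K$ established in \cref{lemma:sym-char} to simultaneously block-diagonalize it by a single orthogonal change of basis. Pick any orthogonal matrix $U\in\M_n(\R)$ whose first column is $1_n/\sqrt{n}$, and conjugate $\K$ by $\diag(U,U,1)\in\M_{2n+1}(\R)$. By \cref{prop:matrix}, any matrix of the form $aI_n+bJ_n$ becomes diagonal in this basis, with eigenvalue $a+nb$ on $\spn(1_n)$ and $a$ (multiplicity $n-1$) on $1_n^\perp$; and $\K_{a\omega}\in\R\,1_n$ couples only $\spn(1_n)$ to $\omega$. The transformed matrix is therefore block-diagonal, with one $3\times 3$ block coupling $\bar a\eqdef\tr{1_n}a/\sqrt{n}$, $\bar\theta\eqdef\tr{1_n}\theta/\sqrt{n}$, and $\omega$, and $n-1$ identical copies of a $2\times 2$ block coupling $\tr{v}a$ and $\tr{v}\theta$ for any unit $v\in 1_n^\perp$. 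Positive semi-definiteness of $\K$ is equivalent to positive semi-definiteness of each of these blocks.

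The $2\times 2$ block has diagonal entries $\sigma_{a_i}^2-\sigma_{a_ia_j}$ and $\sigma_{\theta_i}^2$ and off-diagonal $\sigma_{a_i\theta_i}-\sigma_{a_i\theta_j}$; its nonnegative-determinant condition (with the $0/0=0$ convention when $\sigma_{\theta_i}^2=0$) yields precisely the first inequality. The $3\times 3$ block has diagonal $(\sigma_{a_i}^2+(n-1)\sigma_{a_ia_j},\,\sigma_{\theta_i}^2,\,\sigma_\omega^2)$, off-diagonal entries $\sigma_{a_i\theta_i}+(n-1)\sigma_{a_i\theta_j}$ and $\sqrt{n}\,\sigma_{a_i\omega}$, and a zero between $\theta$ and $\omega$ (since they are independent under the prior). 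When $\sigma_{\theta_i}^2$ and $\sigma_\omega^2$ are both positive, expanding the $3\times 3$ determinant and dividing by $\sigma_{\theta_i}^2\sigma_\omega^2$ yields the second inequality, which in turn implies the remaining principal-minor conditions. The degenerate cases $\sigma_{\theta_i}^2=0$ and $\sigma_\omega^2=0$ are absorbed by condition~(1), which forces the relevant covariances to vanish.

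For the equality statements, I would compute the Schur complement of the $(\theta,\omega)$-block in $\K$, which equals the conditional covariance matrix $\var(a\given\theta,\omega)$. By symmetry this matrix again has the form $xI_n+yJ_n$ with $x=\var(a_i\given\theta,\omega)-\cov(a_i,a_j\given\theta,\omega)$, and its eigenvalues $x$ (multiplicity $n-1$) and $x+ny$ are precisely the Schur complements computed within the $2\times 2$ and $3\times 3$ blocks identified above. Consequently the first inequality is tight iff $x=0$, i.e.\ $\cov(a_i,a_j\given\theta,\omega)=\var(a_i\given\theta,\omega)$; and the second is tight iff $x+ny=0$, i.e.\ $\cov(a_i,a_j\given\theta,\omega)=-\var(a_i\given\theta,\omega)/(n-1)$.

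The block-diagonalization itself is routine once the right basis is fixed. The subtle part I expect is the careful bookkeeping of the degenerate cases when $\sigma_{\theta_i}^2$ or $\sigma_\omega^2$ vanishes, to confirm that the $0/0=0$ convention together with condition~(1) delivers the correct ``if and only if.''
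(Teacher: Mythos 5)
Your proof is correct, and it takes a genuinely different route from the paper's. The paper never block-diagonalizes the full $(2n+1)\times(2n+1)$ matrix: it first passes to the Schur complement of the $(\theta,\omega)$-block, namely the noise covariance $\K_\eps=\K_{aa}-\sigma_\omega^2\beta\tr{\beta}-\Gamma\tr{\Gamma}\K_{\theta\theta}$ arising from the parametrization \eqref{eq:params-bis}, computes its on- and off-diagonal entries explicitly, and then applies the eigenvalue criterion for matrices in $\spn(I_n,J_n)$ (part 4 of \cref{prop:matrix}) to get the two inequalities; the probabilistic identification $\K_\eps=\var(a\given\theta,\omega)$ then gives the equality characterizations for free. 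You reverse the order of these two reductions: you diagonalize the $\spn(I_n,J_n)$-structure of $\K$ itself first, obtaining one $3\times3$ block on $\spn(1_n)\oplus\spn(1_n)\oplus\R$ and $n-1$ identical $2\times2$ blocks on $1_n^\perp$, and only then take Schur complements (equivalently, determinant conditions) within each small block. The computations agree—your $2\times2$ and $3\times3$ Schur complements are exactly the eigenvalues $x$ and $x+ny$ of $\K_\eps$—so the two proofs are two factorizations of the same identity. What your route buys is that the reduction to scalar inequalities is pure linear algebra on explicit low-dimensional blocks, with no need to invoke the conditional-distribution parametrization until the equality statement; what the paper's route buys is that the conditional covariance interpretation is present from the start, which makes the equality characterization and the later use of \cref{lemma:psd} (maximal conditional correlation/anticorrelation in \cref{prop:opt-welfare}) immediate. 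Your handling of the degenerate cases is sound—note only that when $\sigma_{\theta_i}^2=0$ the $2\times2$ block alone forces $\sigma_{a_i\theta_i}=\sigma_{a_i\theta_j}$, and it is the $3\times3$ block that supplies $\sigma_{a_i\theta_i}+(n-1)\sigma_{a_i\theta_j}=0$, so both blocks are needed to recover condition~(1) in full; and your claim that the $3\times3$ determinant condition subsumes the remaining principal minors does require $\sigma_{\theta_i}^2,\sigma_\omega^2>0$, which is exactly the nondegenerate case you restrict to there.
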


\begin{proof}
	Consider a symmetric mechanism with covariance matrix $\K$. We use the
	alternative pa\-ra\-metrization of the mechanism provided by
	\eqref{eq:params-bis}. From $\K_{a\omega} = \sigma_\omega^2\beta$, we
	obtain that $\sigma_\omega^2=0$ implies $\K_{a\omega}=0$. Since
	$\K_{a\omega}\in\spn(1_n)$ by \cref{lemma:sym-char}, $\sigma_\omega^2\neq
	0$ implies that $\beta\in\spn(1_n)$ with
		$\beta_i = \frac{\sigma_{a_i\omega}}{\sigma_\omega^2}$.
	Similarly, from $\K_{a\theta} = \Gamma\K_{\theta\theta}$ and $\K_{\theta\theta}=\sigma_{\theta_i}^2I_n$ we get that $\sigma_{\theta_i}^2 = 0$ implies $\K_{a\theta}=0$. Since furthermore,
	$\K_{a\theta}\in\spn(I_n,J_n)$ by \cref{lemma:sym-char}, $\sigma_{\theta_i}^2\neq 0$
	implies that $\Gamma\in\spn(I_n,J_n)$ and for all $(i,j)\in[n]^2$
		$\gamma_{i,j} = \frac{\sigma_{a_i\theta_j}}{\sigma_{\theta_i}^2}$.
	The only constraint on parametrization
	\eqref{eq:params-bis} is that $\K_\eps$ be positive semidefinite. Using
	that $\K_\eps = \K_{aa} - \sigma_\omega^2\beta\tr\beta
	- \Gamma\tr\Gamma\K_{\theta\theta}$, we see that $\K_\eps$ is also in
	$\spn(I_n,J_n)$ and we compute the on- and off-diagonal entries of $\K_\eps$.
	\begin{equation}\label{eq:ke}
		\begin{gathered}
		(\K_\eps)_{ii}=\sigma_{a_i}^2 - \sigma_\omega^2\beta_i^2 - \sum_{k=1}^n
		\sigma_{\theta_k}^2\gamma_{ik}^2
		=\sigma_{a_i}^2 - \frac{\sigma_{a_i\omega}^2}{\sigma_\omega^2}
		- \frac{\sigma_{a_i\theta_i}^2}{\sigma_{\theta_i}^2}
		- (n-1)\frac{\sigma_{a_i\theta_j}^2}{\sigma_{\theta_i}^2}\\
		(\K_\eps)_{ij}=\sigma_{a_ia_j} - \sigma_\omega^2\beta_i\beta_j - \sum_{k=1}^n
		\sigma_{\theta_k}^2\gamma_{ik}\gamma_{jk}
		=\sigma_{a_ia_j} - \frac{\sigma_{a_i\omega}^2}{\sigma_\omega^2}
		-2\frac{\sigma_{a_i\theta_i}\sigma_{a_i\theta_j}}{\sigma_{\theta_i}^2}
		-(n-2)\frac{\sigma_{a_i\theta_j}^2}{\sigma_{\theta_i}^2}
	\end{gathered}
\end{equation}
Note, that the previous expressions remain valid when $\sigma_\omega^2=0$
(implying $\K_{a\omega}=0$), or $\sigma_{\theta_i}^2=0$ (implying $\K_{a\theta}=0)$, by
adopting the convention $0/0=0$. \Cref{prop:matrix} states that $\K_\eps$
is positive semidefinite iff
	$(\K_\eps)_{ii}\geq (\K_\eps)_{ij}$, equivalently with \eqref{eq:ke},
	\begin{displaymath}
		\sigma_{a_i}^2-\sigma_{a_ia_j}\geq
		\frac{1}{\sigma_{\theta_i}^2}(\sigma_{a_i\theta_i}-\sigma_{a_i\theta_j})^2,
	\end{displaymath}
	and $(\K_\eps)_{ii}\geq -(n-1)(\K_\eps)_{ij}$, or equivalently,
	\begin{align*}
		\sigma_{a_i}^2+(n-1)\sigma_{a_ia_j}
		&\geq n\frac{\sigma_{a_i\omega}^2}{\sigma_\omega^2}
		+\frac{\sigma_{a_i\theta_i}^2}{\sigma_{\theta_i}^2}
		+2(n-1)\frac{\sigma_{a_i\theta_i}\sigma_{a_i\theta_j}}{\sigma_{\theta_i}^2}
		+(n-1)^2\frac{\sigma_{a_i\theta_j}^2}{\sigma_{\theta_i}^2}\\
		&= n\frac{\sigma_{a_i\omega}^2}{\sigma_\omega^2}
		+\frac{1}{\sigma_{\theta_i}^2}\big(\sigma_{a_i\theta_i}
		+ (n-1)\sigma_{a_i\theta_j}\big)^2.
	\end{align*}
	Finally, since $\K_\eps$ is the covariance matrix of $a\given
	\theta,\omega$, the previous two inequalities are equivalent
	to $\var(a_i\given \theta,\omega)\geq \cov(a_i,a_j\given \theta,\omega)$ and
	$(n-1)\cov(a_i,a_j\given \theta,\omega)\geq -\var(a_i\given \theta,\omega)$
	respectively.
\end{proof}

\subsection{Obedience Constraints}\label{sec:obedience-app}

\begin{proposition}\label{prop:obedience-char}
    Assume that $r\notin\set{-1,\frac 1 {n-1}}$, then $\mu$ and $\K$ are respectively the mean vector and covariance matrix of an obedient mechanism iff
    \begin{enumerate}
	\item The mean action $\mu_{a_i}$ of each player $i\in[n]$ is determined by the prior's mean:
	    \begin{equation}\label{eq:obedience-mean}
		\mu_{a_i} = \frac{s\mu_\omega+t\mu_{\theta_i}}{1-(n-1)r}
		+ \frac{r\cdot t\sum_{j\neq i}(\mu_{\theta_j}-\mu_{\theta_i})}
		{(1+r)(1-(n-1)r)}.
	    \end{equation}
	\item The covariance matrix $\K$ satisfies the following linear constraints for each $i\in[n]$
	    \begin{equation}\label{eq:obedience-var}
		\begin{cases}
		    \sigma_{a_i}^2 = r\sum_{j\neq i}\sigma_{a_ia_j}
		    + s\sigma_{a_i\omega} + t\sigma_{a_i\theta_i}\\
		    \sigma_{a_i\theta_i} = r\sum_{j\neq i}\sigma_{a_j\theta_i}
		    + t\sigma_{\theta_i}^2
		\end{cases}.
	    \end{equation}
    \end{enumerate}
\end{proposition}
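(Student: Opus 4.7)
The plan is to reduce the obedience condition to an almost-sure affine equation in $(a_i,\theta_i)$ via a first-order condition, and then exploit the Gaussian structure to translate this equation into moment conditions. Since $u_i(a_i',a_{-i};\omega,\theta_i)$ is strictly concave in $a_i'$ with leading coefficient $-1/2$, the conditional best response is uniquely determined by the FOC: the argmax of $\E[u_i(a_i',a_{-i};\omega,\theta_i)\given a_i,\theta_i]$ equals $\hat a_i(a_i,\theta_i)\eqdef r\sum_{j\neq i}\E[a_j\given a_i,\theta_i] + s\E[\omega\given a_i,\theta_i] + t\theta_i$. Setting $Z_i\eqdef a_i-\hat a_i(a_i,\theta_i)$, obedience is therefore equivalent to $Z_i=0$ a.s.\ for every $i\in[n]$.

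Because $(a,\theta,\omega)$ is jointly Gaussian, each of the conditional expectations appearing in $\hat a_i$ is an affine function of $(a_i,\theta_i)$, so $Z_i$ itself lies in the affine span of $\{1,a_i,\theta_i\}$. The key observation is that $Z_i=0$ a.s.\ is equivalent to the three scalar conditions $\E[Z_i]=0$, $\cov(Z_i,a_i)=0$ and $\cov(Z_i,\theta_i)=0$. One direction is immediate; for the converse, writing $Z_i-\E[Z_i]=c_1(a_i-\mu_{a_i})+c_2(\theta_i-\mu_{\theta_i})$ yields $\var(Z_i)=c_1\cov(Z_i,a_i)+c_2\cov(Z_i,\theta_i)$, so the two covariance conditions force $\var(Z_i)=0$, and combined with $\E[Z_i]=0$ give $Z_i=0$ a.s.\ without any non-degeneracy assumption on $(a_i,\theta_i)$—this is the place where I expect a naive argument to stumble, and the identity above is the main technical point.

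Finally, I would compute the three moments of $Z_i$ using the projection identity $\cov(\E[X\given Y,W],Y)=\cov(X,Y)$, which follows from the tower property since $Y$ is measurable with respect to $\sigma(Y,W)$. Applying this term by term expresses $\cov(Z_i,a_i)$ as $\sigma_{a_i}^2 - r\sum_{j\neq i}\sigma_{a_ia_j} - s\sigma_{a_i\omega} - t\sigma_{a_i\theta_i}$, and $\cov(Z_i,\theta_i)$ as $\sigma_{a_i\theta_i} - r\sum_{j\neq i}\sigma_{a_j\theta_i} - t\sigma_{\theta_i}^2$, where I have used the prior independence $\sigma_{\omega\theta_i}=0$; setting both to zero gives exactly \eqref{eq:obedience-var}. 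The mean equation $\E[Z_i]=0$ reads $\mu_{a_i} = r\sum_{j\neq i}\mu_{a_j}+s\mu_\omega+t\mu_{\theta_i}$, and stacking over $i$ yields the linear system $J_n(1,-r)\mu_a = s\mu_\omega 1_n + t\mu_\theta$—the same system underlying the complete-information Nash equilibrium in \eqref{eq:ne}. Under the assumption $r\notin\set{-1,1/(n-1)}$, \cref{prop:matrix} ensures that $J_n(1,-r)$ is invertible and the unique solution is precisely \eqref{eq:obedience-mean}, completing the characterization.
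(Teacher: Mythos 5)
Your proposal is correct and follows essentially the same route as the paper's proof: reduce obedience to the first-order condition by concavity, translate the resulting a.s.\ equality of two $(a_i,\theta_i)$-measurable Gaussian affine functions into matching means and covariances with $a_i$ and $\theta_i$ via the projection identity $\cov(\E[X\given Y],Z)=\cov(X,Z)$, and invert $J_n(1,-r)$ for the means. Your explicit verification that the three moment conditions force $Z_i=0$ a.s.\ (via $\var(Z_i)=c_1\cov(Z_i,a_i)+c_2\cov(Z_i,\theta_i)$, with no non-degeneracy assumption) is a welcome tightening of a step the paper leaves implicit, but it is not a different argument.
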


\begin{proof}
Because the mechanism is Gaussian, the random variable on the right-hand side
of \eqref{eq:obedience} is normal and $(a_i,\theta_i)$-measurable. It is thus
fully determined by its mean and its covariances with $a_i$ and $\theta_i$.
Consequently \eqref{eq:obedience} is equivalent to
\begin{equation}\label{eq:obedience-bis}
	\begin{cases}
		\mu_{a_i} = r\sum_{j\neq i}\mu_{a_j} + s\mu_\omega + t\mu_{\theta_i}\\
		\sigma_{a_i}^2 = r\sum_{j\neq i}\sigma_{a_ia_j} + s\sigma_{a_i\omega}
		+ t\sigma_{a_i\theta_i}\\
		\sigma_{a_i\theta_i} = r\sum_{j\neq i}\sigma_{a_j\theta_i} +t \sigma_{\theta_i}^2
	\end{cases},
\end{equation}
where the first equation expresses the equality of means in
\eqref{eq:obedience}, and the second (resp.\ third)  equation expresses the
equality of the covariance with $a_i$ (resp.\ $\theta_i$) in \eqref{eq:obedience}.
We used the identity $\cov(\E[X|Y], Z)=\cov(X, Z)$
for random variables $(X,Y,Z)$ such that $Z$ is $Y$-measurable.

Collecting the first equation in \eqref{eq:obedience-bis} for each
$i\in[n]$ gives the linear system
	$J_n(1,-r)\mu_a = s\mu_\omega 1_n + t\mu_\theta$, whose solution 
    (using \cref{prop:matrix}) yields \eqref{eq:obedience-mean}. The second and third equations in
    \eqref{eq:obedience-bis} are precisely \eqref{eq:obedience-var}.
\end{proof}

\begin{proof}[Proof of \cref{cor:obedience-char-bis}]
    We first use \eqref{eq:params-sym} to express $\sigma_{a_i}^2$ and $\sigma_{a_ia_j}$ in terms of the remaining covariance parameters as well as $\delta$ and $\rho$:
    \begin{displaymath}
	\begin{cases}\sigma_{a_i}^2=\frac{\sigma_{a_i\omega}^2}{\sigma_{\omega}^2}
	+\frac{\sigma_{a_i\theta_i}^2}{\sigma_{\theta_i}^2}
	+\frac{\sigma_{a_i\theta_j}^2}{f\sigma_{\theta_i}^2}+\delta^2\\
	\sigma_{a_ia_j}=\frac{\sigma_{a_i\omega}^2}{\sigma_{\omega}^2}
	+2\frac{\sigma_{a_i\theta_i}\sigma_{a_i\theta_j}}{\sigma_{\theta_i}^2}
	+(n-2)\frac{\sigma_{a_i\theta_j}^2}{\sigma_{\theta_i}^2}+\rho\delta^2
	\end{cases}.
    \end{displaymath}
    Substituting in the first obedience constraint \eqref{eq:obedience-var-sym} and multiplying by $f$
    \begin{multline*}
f\frac{\sigma_{a_i\omega}^2}{\sigma_{\omega}^2}
	+f\frac{\sigma_{a_i\theta_i}^2}{\sigma_{\theta_i}^2}
	+\frac{\sigma_{a_i\theta_j}^2}{\sigma_{\theta_i}^2}+f\delta^2\\
=r\frac{\sigma_{a_i\omega}^2}{\sigma_{\omega}^2}
	+2r\frac{\sigma_{a_i\theta_i}\sigma_{a_i\theta_j}}{\sigma_{\theta_i}^2}
	+(n-2)r\frac{\sigma_{a_i\theta_j}^2}{\sigma_{\theta_i}^2}+\rho r\delta^2
	+sf\sigma_{a_i\omega}+tf\sigma_{a_i\theta_i},
\end{multline*}
and after reordering the terms and completing a square
\begin{displaymath}
    (f-r)\frac{\sigma_{a_i\omega}^2}{\sigma_{\omega}^2}
	+f\left(\frac{\sigma_{a_i\theta_i}}{\sigma_{\theta_i}}-\frac{r\sigma_{a_i\theta_j}}{f\sigma_{\theta_i}}\right)^2
	+\frac{\sigma_{a_i\theta_j}^2}{\sigma_{\theta_i}^2}\frac{(1+r)(f-r)}{f}
    =(\rho r-f)\delta^2
	+sf\sigma_{a_i\omega}+tf\sigma_{a_i\theta_i}.
\end{displaymath}
    We substitute out $\sigma_{a_i\theta_i}$ using the second obedience constraint in \eqref{eq:obedience-var-sym} and divide by $(f-r$)
\begin{displaymath}
    \frac{\sigma_{a_i\omega}^2}{\sigma_{\omega}^2}
    -\frac{sf}{f-r}\sigma_{a_i\omega}
	+\frac{\sigma_{a_i\theta_j}^2}{\sigma_{\theta_i}^2}\frac{(1+r)}{f}
    -\frac{rt}{f-r}\sigma_{a_i\theta_j}
    =\delta^2\frac{\rho r-f}{f-r},
\end{displaymath}
and completing the squares on the left-hand side finally yields
\begin{multline*}
	\frac{(1+r)r^2t^2\sigma_{\theta_i}^2}{f}\left[\frac{\sigma_{a_i\theta_j}}{rt\sigma_{\theta_i}^2}
    -\frac{f}{2(1+r)(f-r)}\right]^2
    -\frac{r^2t^2f\sigma_{\theta_i}^2}{4(1+r)(f-r)^2}\\
    +s^2\sigma_{\omega}^2\left[\frac{\sigma_{a_i\omega}}{s\sigma_{\omega}^2}
    -\frac{f}{2(f-r)}\right]^2 - \frac{s^2f^2\sigma_{\omega}^2}{4(f-r)^2}
    =\delta^2\frac{\rho r-f}{f-r}.
\end{multline*}
    Recognizing on the left-hand side the slack $\xi$ of the ellipse constraint \eqref{eq:ellipse}, we obtain
    \begin{equation}\label{eq:slack-foo}
	\delta^2 = \xi\frac{f-r}{f-\rho r}.
    \end{equation}
    As discussed in the proof of \cref{lemma:psd}, the covariance matrix of the noise is $\K_\eps=(1-\rho)I_n+\rho J_n$, and by \cref{prop:matrix}, positive semi-definiteness of $\K_\eps$ is equivalent to $-f\leq \rho\leq 1$. When this constraint is satisfied, and for $r\in(-1,f)$, the multiplicative factor $(f-r)/(f-\rho r)$ is positive, hence \eqref{eq:slack-foo} implies that $\xi\geq 0$, which is exactly the ellipse constraint \eqref{eq:ellipse}.

    Conversely, if the ellipse constraint \eqref{eq:ellipse} is satisfied, then \eqref{eq:slack-foo} uniquely defines $\delta$ in such a way that the first covariance obedience constraint is satisfied (since \eqref{eq:slack-foo} is equivalent to it by our derivation). Finally, defining $\sigma_{a_i\theta_i}$ and $\mu_{a_i}$ using \eqref{eq:obedience-mean-sym} and \eqref{eq:obedience-var-sym} respectively, guarantees that the remaining obedience constraints are satisfied.
\end{proof}

\subsection{Incentive Compatibility}\label{sec:app-tic}

\begin{proof}[Proof of \cref{prop:ic-char}]
The optimal second-stage action after a misreport is derived in the body: \eqref{eq:deviation} gives $a_i'=a_i+t(\theta_i-\theta_i')$, and the resulting misreport interim utility is \eqref{eq:tu-misreport}. Incentive compatibility requires truthful reporting $\theta_i'=\theta_i$ to maximize $\theta_i'\mapsto\tu_i(\theta_i';\theta_i)-p_i(\theta_i')$ for every type $\theta_i$. Using the on-path utility \eqref{eq:interim-utility-ob}, we can write the misreport interim utility as
\begin{displaymath}
    \tu_i(\theta_i';\theta_i)
    = \tu_i(\theta_i') + t\,\E[a_i\given\theta_i']\,(\theta_i-\theta_i')
    + \frac{t^2}{2}(\theta_i-\theta_i')^2.
\end{displaymath}
Hence, writing $V_i\eqdef \tu_i-p_i$ for the on-path rent, truthful reporting is optimal iff, for all $\theta_i,\theta_i'$,
\begin{equation}\label{eq:ic-sconvex}
    V_i(\theta_i)
    \geq V_i(\theta_i')
    + t\,\E[a_i\given\theta_i']\,(\theta_i-\theta_i')
    + \frac{t^2}{2}(\theta_i-\theta_i')^2
\end{equation}
This inequality states precisely that $V_i$ is $t^2$-strongly convex with subgradient $t\E[a_i\given\theta_i]$ for each $\theta_i$. Moreover, the $t^2$-strong convexity is equivalent to the $t^2$-strong monotonicity of the subgradient, that is, for all $\theta_i\neq\theta_i'$,
\begin{displaymath}
    t\,\frac{\E[a_i\given\theta_i]-\E[a_i\given\theta_i']}{\theta_i-\theta_i'}
\geq t^2
\end{displaymath}

\looseness=-1
    For a Gaussian mechanism (when $\sigma_{\theta_i}^2>0$), the conditional expectation is affine in the type,
    \begin{equation}\label{eq:cond-mean-affine}
    \E[a_i\given\theta_i]
    = \mu_{a_i} + \frac{\sigma_{a_i\theta_i}}{\sigma_{\theta_i}^2}(\theta_i-\mu_{\theta_i}),
\end{equation}
hence continuous in $\theta_i$. Since $V_i$ is convex, hence absolutely continuous, with a continuous subgradient, it is in fact differentiable everywhere: for each $\theta_i$,
    \begin{displaymath}
	p_i'(\theta_i) = \tu_i'(\theta_i)-t\E[a_i\given\theta_i].
    \end{displaymath}
    Because $\var(a_i\given\theta_i)$ does not depend on $\theta_i$ under a Gaussian mechanism, \eqref{eq:interim-utility-ob} and \eqref{eq:cond-mean-affine} give $\tu_i'(\theta_i)=\E[a_i\given\theta_i]\sigma_{a_i\theta_i}/\sigma_{\theta_i}^2$, hence
    \begin{displaymath}
	    p_i'(\theta_i)=
    \prn[\bigg]{\frac{\sigma_{a_i\theta_i}}{\sigma_{\theta_i}^2} -t}
    \E[a_i\given \theta_i]
    = \prn[\bigg]{\frac{\sigma_{a_i\theta_i}}{\sigma_{\theta_i}^2} -t}
    \prn[\big]{\mu_{a_i} + \frac{\sigma_{a_i\theta_i}}{\sigma_{\theta_i}^2}
    (\theta_i-\mu_{\theta_i})},
    \end{displaymath}
    where the second equality uses \eqref{eq:cond-mean-affine} again. This establishes part~1 of the characterization.

Furthermore, by \eqref{eq:cond-mean-affine}, $\theta_i\mapsto t\,\E[a_i\given\theta_i]$ has constant derivative $t\sigma_{a_i\theta_i}/\sigma_{\theta_i}^2$, so the $t^2$-strong monotonicity reduces to $t\sigma_{a_i\theta_i}/\sigma_{\theta_i}^2\geq t^2$, i.e.\ $t\sigma_{a_i\theta_i}\geq t^2\sigma_{\theta_i}^2$. Since $\sigma_{a_i\theta_i}-t\sigma_{\theta_i}^2=r\sum_{j\neq i}\sigma_{a_j\theta_i}$ by obedience \eqref{eq:obedience-var-sym}, this is equivalent to $rt\sum_{j\neq i}\sigma_{a_j\theta_i}\geq 0$, which is part~2.

Finally, when $\sigma_{\theta_i}^2=0$ the type distribution is degenerate, so $\sigma_{a_i\theta_i}=0$ and part~2 holds trivially, with $\E[a_i\given\theta_i']=\mu_{a_i}$ constant. In this case, $p_i'(\theta_i)=-t\mu_{a_i}$.
\end{proof}

\section{Additional Results and  Proofs for Section~\ref{sec:opt}}\label{sec:opt-app}

\subsection{Structural Properties}\label{sec:structure-app}

\begin{proposition}\label{prop:structural-main}
    Consider the problem \eqref{eq:reduced} with $r\in\big(-1,\frac 1 {n-1}\big)$ and assume that $\tilde F$ is concave, differentiable and strictly monotone in $\sigma_{a_ia_j}$. Then in an optimal mechanism, the action recommendations are maximally positively (resp.\ negatively) correlated conditioned on $(\theta,\omega)$ whenever $\tilde F$ is increasing (resp.\ decreasing).
\end{proposition}

\begin{proof}[Proof of \cref{prop:structural-main}]
    Since the problem \eqref{eq:reduced} is convex, the Karush–Kuhn–Tucker conditions are necessary for optimality. The stationarity condition associated with $\sigma_{a_ia_j}$ yields
    \begin{displaymath}
	-f\frac{\partial\tilde F}{\partial \sigma_{a_ia_j}} +\lambda(f-r)-\nu(1+r)=0,
    \end{displaymath}
    where $\lambda$ and $\nu$ are the non-negative multipliers associated with the first and second inequality constraint respectively.

    If $\frac{\partial\tilde F}{\partial \sigma_{a_ia_j}}>0$, we can solve the stationarity condition for $\lambda$ and obtain
    \begin{displaymath}
	\lambda = \frac{\nu(1+r) + f \partial\tilde F/\partial\sigma_{a_ia_j}}{f-r}>0,
    \end{displaymath}
    where the inequality uses that $-1\leq r\leq f$. Since $\lambda>0$, complementary slackness implies that the first inequality constraint is biding. Hence by \cref{lemma:psd}, the action recommendations are maximally positively correlated conditioned on $(\theta,\omega)$.

    Similarly, if $\frac{\partial\tilde F}{\partial \sigma_{a_ia_j}}<0$, we solve the stationarity condition for $\nu$ and obtain
    \begin{displaymath}
	\nu = \frac{\lambda(f-r) - f \partial\tilde F/\partial\sigma_{a_ia_j}}{r+1}>0.
    \end{displaymath}
    This time, complementary slackness implies that the second inequality constraint binds, hence the action recommendations are maximally negatively correlated conditioned on $(\theta,\omega)$.
\end{proof}

\begin{proof}[Proof of \cref{prop:linear}]
    The claim about action correlations is an immediate consequence of \cref{prop:structural-main}. We introduce non-negative Lagrange multipliers $\lambda$ and $\nu$ for the first and second inequality constraints in \eqref{eq:reduced} respectively and look for a solution to the KKT conditions. The stationarity condition yields the following system:
\begin{displaymath}
    \begin{cases}
	-\alpha r +\big(f-r\big)\lambda -(r+1)\nu = 0\\
	-\beta s -s\lambda +
	\nu\left(\frac{2n}{\sigma_\omega^2}\sigma_{a_i\omega} -s\right) = 0\\
	2\lambda(f-r)
	\Big[\frac{f-r}{f}\frac{\sigma_{a_i\theta_j}}{\sigma_{\theta_i}^2}-t\Big]
	+2\nu(1+r)
	\Big[\frac{1+r}{f}\frac{\sigma_{a_i\theta_j}}{\sigma_{\theta_i}^2}+t\Big]
	= rt(\lambda+\nu+\gamma).
    \end{cases}
\end{displaymath}
From this, we express $\nu$, $\sigma_{a_i\omega}$ and $\sigma_{a_i\theta_j}$ as a function of $\lambda$:
\begin{displaymath}
    \begin{gathered}
    \nu = \frac{\lambda(f-r)-\alpha r}{r+1},\quad
    \sigma_{a_i\omega}
	= \frac{s\sigma_\omega^2}{2n}\frac{\lambda nf-\alpha r + \beta(r+1)}
	{\lambda(f-r)-\alpha r},\\
    \sigma_{a_i\theta_j} =
	\frac{frt\sigma_{\theta_i}^2}{2(r+1)}
	\frac{\lambda nf + \alpha(r+2) +\gamma(r+1)}
	{\big[\lambda nf(f-r) -\alpha r(1+r)\big]}.
    \end{gathered}
\end{displaymath}
    Using that $\nu\geq 0$ by dual feasibility, we get that $\lambda\geq\lambda_{\rm min}\eqdef\max\set{0, \alpha r/(f-r)}$ and the denominators in the expressions for $\sigma_{a_i\omega}$ and $\sigma_{a_i\theta_j}$ cannot vanish when $r\in(-1,f)$.

    The variables $x\eqdef \sigma_{a_i\theta_j}/rt\sigma_{\theta_i}^2$ and $y\eqdef\sigma_{a_i\omega}/s\sigma_{\omega}^2$ are thus constrained to lie on the curve $\big(x(\lambda),y(\lambda)\big)$, where $x$ and $y$ are the two functions given in the statement of the proposition. Moreover $x(\lambda)$ and $y(\lambda)$ must satisfy \eqref{eq:ellipse} by \cref{prop:obedience-char-sym}. We thus have two cases:
\begin{enumerate}
    \item The constraint \eqref{eq:ellipse} is violated at $\lambda_{\rm min}$. In this case, since $x(\lambda)$ and $y(\lambda)$ are monotone and converge to $x_0$ and $y_0$ respectively, there exists a unique $\lambda^\star\in(\lambda_{\rm min},+\infty)$ such that $\big(x(\lambda^\star),y(\lambda^\star)\big)$ lies on the boundary of \eqref{eq:ellipse}.
    \item The constraint \eqref{eq:ellipse} is satisfied at $\lambda_{\rm min}$, in which case we define $\lambda^\star=\lambda_{\rm min}$.
\end{enumerate}
In either case, complementary slackness holds at $\lambda^\star$ and we have
the optimal solution.
\end{proof}

\begin{proposition}\label{prop:ic-slack-regime}
    Consider problem \eqref{eq:reduced} with the linear objective from \cref{prop:linear}, with parameters $(\alpha,\beta,\gamma)\in\R^3$ and $r\in(-1,\frac 1 {n-1})$. Let $\lambda_{\min} \eqdef \max\set[\big]{0,\,\alpha r/(f-r)}$, and define
\begin{displaymath}
N(\lambda) \eqdef \lambda nf+  \alpha(r+2) + \gamma(r+1)
    \quad\text{and}\quad
\lambda_0 \eqdef-\frac{\alpha(r+2)+\gamma(r+1)}{nf},
\end{displaymath}
\looseness=-1
    the numerator of $x(\lambda)$ in \cref{prop:linear} and its unique zero.
    Then, the monotonicity condition in \cref{prop:ic-char}, $rt\sigma_{a_j\theta_i}\geq 0$, is satisfied by the obedient-optimum mechanism of \cref{prop:linear} iff
    \begin{displaymath}
	N(\lambda_{\rm min})\geq 0 \quad\text{or}\quad\big(N(\lambda_{\rm min})<0 \text{ and } y(\lambda_0)\notin (0,2y_0)\big),
    \end{displaymath}
    in which case the mechanism is incentive compatible. Otherwise, the monotonicity condition is violated, and binds at the incentive-compatible optimum.
\end{proposition}

\begin{proof}
    Recall that the denominator of $x(\lambda)$ in \cref{prop:linear} is positive on $[\lambda_{\rm min},+\infty)$ and that $x$ is a monotone function with $\lim_{\lambda\to\infty} x(\lambda)=x_0>0$. Thus, if $N(\lambda_{\min})\geq 0$, then the stationarity curve $\big(x(\lambda),y(\lambda)\big)$ is entirely contained in the half-plane $x\geq 0$. In particular, $x(\lambda^\star)\geq 0$ at the optimum.

    If $N(\lambda_{\rm min})<0$, then the stationarity curve originates in the half-plane $x<0$ and crosses the vertical axis $x=0$ at $\lambda_0>\lambda_{\rm min}$. In this case:
    \begin{itemize}
	\item if $y(\lambda_0)\in(0,2 y_0)$, then the point $\big(x(\lambda_0),y(\lambda_0)\big)$ is in the interior of the obedience ellipse~\ref{eq:ellipse}. Therefore the obedience-optimum mechanism is attained at $\lambda^\star\in[\lambda_{\rm min},\lambda_0)$ for which $x(\lambda^\star)<0$.
	\item otherwise, $y(\lambda_0)\notin(0,2 y_0)$, hence the point $\big(x(\lambda_0), y(\lambda_0)\big)$ is either outside \ref{eq:ellipse} or on its boundary. The obedience-optimum mechanism is thus attained at $\lambda^\star\geq\lambda_0$ for which $x(\lambda^\star)\geq0$.\qedhere
    \end{itemize}
\end{proof}

\subsection{Firms-Optimal Mechanism}\label{sec:welfare-app}

The following lemma will be used repeatedly in this and subsequent sections to determine whether the optimal recommendations are deterministic or randomized conditioned on $\theta,\omega$.

\begin{lemma}\label{lem:ellipse}
    Consider a point $P=(x,y)$ with $x> 2x_0$. Then the ellipse constraint \eqref{eq:ellipse} at $P$ is equivalent to
    \begin{equation}\label{eq:rand-threshold}
	\frac{t^2\sigma_{\theta_i}^2}{s^2\sigma_{\omega}^2}
	\leq \tau_{n,r}(x,y) \eqdef \frac{fx_0}{r^2y_0} \frac{y(2y_0-y)}{x(x-2x_0)}.\qedhere
    \end{equation}
    In particular, (i) if $y> 2y_0$, \eqref{eq:ellipse} is always violated;
	(ii) if $0<y\leq2y_0$, \eqref{eq:ellipse} is satisfied or violated according to \eqref{eq:rand-threshold}. Moreover, $x\mapsto \tau_{n,r}(x,y)$ is decreasing.
\end{lemma}

\begin{proof}
    Note that $y_0/x_0 = 1+r$, hence we can rewrite the ellipse constraint as
    \begin{displaymath}
	\frac{r^2y_0}{fx_0} (x-x_0)^2 + \rho (y-y_0)^2\leq 
	\frac{r^2y_0}{fx_0} x_0^2 + \rho y_0^2,
    \end{displaymath}
    with $\rho\eqdef s^2\sigma_\omega^2/t^2\sigma_{\theta_i}^2$. This is easily seen to be equivalent to \eqref{eq:rand-threshold} and the lemma follows.
\end{proof}

\begin{proof}[Proof of \cref{prop:welfare-random}]
    Recall from the discussion below \cref{prop:linear} that the recommendations are deterministic or randomized conditioned on $\theta,\omega$ according to whether $P_{\rm min}=\big(x(\lambda_{\rm min}), y(\lambda_{\rm min})\big)$ lies outside \ref{eq:ellipse}. We will thus apply \cref{lem:ellipse} to $P_{\rm min}$.
    
    First, we verify that $x(\lambda_{\rm min})>2x_0$. Using the expression for $x(\lambda)$ in the proof of \cref{prop:opt-welfare}, these are respectively equivalent for $r<0$ and $r>0$ to 
    \begin{displaymath}
	(r+1)2f+(f-r)\geq 0
	\quad\text{and}\quad
	(r+1)f+2(f-r)\geq 0,
    \end{displaymath}
    which are easily seen to be true when $r\in(-1,f)$. Next, observe that when $r>0$, $y(\lambda_{\rm min}^+)=+\infty$ and when $r<0$, $y(\lambda_{\min})>2y_0$ is equivalent to $r> -1/(n+1)$. Hence the conclusion immediately follows from \cref{lem:ellipse} which also gives an expression for $\tau_{\rm F}(n,r)$:
\begin{displaymath}
    \tau_{\rm F}(n,r)=-\frac{(1+r)^3\big(1+(n+1)r\big)}{n^2r^2(2r+3)\big(f(2r+3)-r\big)}.\qedhere
\end{displaymath}
\end{proof}

\subsection{Revenue-Optimal Mechanism}\label{sec:rev-app}

\begin{lemma}\label{lemma:implicit-eq}
	Consider the following parametrized equation in the unknown $x$:
\begin{equation}\label{eq:main-inter}
    b(\lambda)\cdot x-c(\lambda) = \frac{a}{(1+dx)^2}
\end{equation}
Assume that $a,d\geq 0$, and $b:I\to\R$ is positive and increasing for some
interval $I$. For each $\lambda\in I$, define $\tilde x(\lambda)\eqdef
c(\lambda)/b(\lambda)$ and assume that $\tilde x$ is decreasing over $I$. Then,
\begin{enumerate}
	\item For each $\lambda\in I$, \eqref{eq:main-inter} has a unique solution
		$x(\lambda)$ in the interval $(-1/d,+\infty)$.
	    \item For each $\lambda\in I$, $x(\lambda)\geq\tilde x(\lambda)$ and $x(\lambda)$ is increasing with $a$.
	\item The function $x$ is decreasing over $I$.
	\item If $\lim_{\lambda\to\sup I}b(\lambda)=+\infty$ then
		$\lim_{\lambda\to\sup I} x(\lambda)=\max\set[\big]{-1/d,\lim_{\lambda\to\sup I}\tilde
		x(\lambda)}$.
\end{enumerate}
\end{lemma}

\begin{proof}
    \begin{enumerate}
	\item This follows immediately from the following two observations:
	    \begin{itemize}
		\item for each $\lambda\in I$, the function $f_\lambda(x)=b(\lambda)\cdot x - c(\lambda)$ is increasing with $\lim_{x\to -\frac 1 d} f_\lambda(x)<+\infty$ and $\lim_{x\to\infty} f_\lambda(x)=+\infty$, due to $b(\lambda)>0$;
    \item the function $g(x)=a/(1+dx)^2$ on the right-hand side of \eqref{eq:main-inter}
	is decreasing with $\lim_{x\to (-1/d)^+}g(x) =+\infty$ and $\lim_{x\to\infty}
	g(x)=0$ due to $a,d\geq 0$.
	    \end{itemize}
	\item Subtracting $b(\lambda)\tilde x(\lambda) - c(\lambda)=0$ from both sides	of \eqref{eq:main-inter}, we have
	    \begin{equation}\label{eq:x-foo}
		b(\lambda)\cdot\big[x(\lambda)-\tilde x(\lambda)\big] =
		\frac{a}{\big(1+dx(\lambda)\big)^2},
	    \end{equation}
	from which $x(\lambda)\geq \tilde x(\lambda)$ follows due to $a,d\geq 0$. That $x(\lambda)$ is increasing with $a$ follows by differentiating \eqref{eq:main-inter} with respect to $a$ (for fixed $\lambda)$ and the positivity of $a$, $b$ and $d$.

	\item Differentiating the previous equation we obtain
	\begin{displaymath}
	    \bigg[b(\lambda)+\frac{2ad}{\big(1+dx(\lambda)\big)^3}\bigg]x'(\lambda)
	    =b(\lambda)\cdot\tilde x'(\lambda)
	    -b'(\lambda)\cdot\big(x(\lambda)-\tilde x(\lambda)\big)\leq 0,
	\end{displaymath}
	\looseness=-1
	where the inequality uses that $b>0$, $b'>0$, $\tilde x'<0$ and $x\geq
	\tilde x$.
	We conclude using that the multiplicative factor on the left-hand side is positive since $b>0$, $a\geq 0$ and $x>-1/d$.
    \item The function $x$ and $\tilde x$ are decreasing so they converge to limits
	$l$ and $\tilde l$ respectively, with $\tilde l$ possibly equal to
	$-\infty$. Since $x>-1/d$, we have $l\geq -1/d$, and by item (2) we have $l\geq
	\tilde l$.  Either $l=\tilde l$ and we are done, or $\tilde l < l$, but then the left-hand side in \eqref{eq:x-foo}
			converges to $+\infty$ since $b(\lambda)\to+\infty$, hence the
			right-hand side must also converge to $+\infty$, which implies that
			$l=-1/d$. In this case, we necessarily have $\tilde l < -1/d$.
\end{enumerate}
\end{proof}

\begin{proof}[Proof of \cref{prop:opt-revenue}]
    We follow the same steps as in the proof of \cref{prop:linear}. The revenue objective only differs from welfare in its dependency on $\sigma_{a_i\theta_j}$. As a result, only the third KKT stationary condition changes and becomes
\begin{displaymath}
	2\lambda(f-r)
	\Big[\frac{f-r}{f}\frac{\sigma_{a_i\theta_j}}{\sigma_{\theta_i}^2}-t\Big]
	+2\nu(1+r)
	\Big[\frac{1+r}{f}\frac{\sigma_{a_i\theta_j}}{\sigma_{\theta_i}^2}+t\Big]
	    = rt(\lambda+\nu) + \frac{r\mu_{a_i}^2}{t\sigma_{\theta_i}^2} \frac 1{(1+r\sigma_{a_i\theta_j}/ft\sigma_{\theta_i}^2)^2}.
\end{displaymath}
    The expressions for $\nu$ and $\sigma_{a_i\omega}$ as a function of $\lambda$ thus also remain the same as in the welfare case, and after substituting out $\nu$, the above equation becomes
\begin{displaymath}
	\frac{2\sigma_{a_i\theta_j}}{frt\sigma_{\theta_i}^2}
	\big[\lambda nf(f-r)-r(1+r)\big] - 2 -\frac{\lambda nf - r}{1+r}
	=\frac{\mu_{a_i}^2}{t^2\sigma_{\theta_i}^2}
\frac 1{(1+r\sigma_{a_i\theta_j}/ft\sigma_{\theta_i}^2)^2}
\end{displaymath}
In other words, $\sigma_{a_i\theta_j}/rt\sigma_{\theta_i}^2$ is a solution of
    an equation of the form \eqref{eq:main-inter} with
\begin{displaymath}
	b(\lambda) = \frac 2 {f}
	\big[\lambda nf(f-r)-r(1+r)\big],
	\quad
	c(\lambda) =\frac{\lambda nf + r+2}{1+r},
	\quad a=\frac{\mu_{a_i}^2}{t^2\sigma_{\theta_i}^2},
	\quad d = \frac{r^2}{f}.
\end{displaymath}
It is easy to verify that the assumptions of \cref{lemma:implicit-eq} are
verified, hence there is a unique solution
$x(\lambda)$ in the interval $(-f/r^2,+\infty)$. Furthermore $x(\lambda)$ is
decreasing with
\begin{displaymath}
	\lim_{\lambda\to\infty} x(\lambda) = \lim_{\lambda\to\infty}
	\frac{c(\lambda)}{b(\lambda)} = \frac f{2(1+r)(f-r)} =x_0.
\end{displaymath}
We can now conclude the proof in exactly the same way as for 
the proof of \cref{prop:linear}.
\end{proof}

\subsection{Comparison of Mechanisms}

\begin{lemma}\label{lem:monotone}
	Let $I\subseteq\R$ be a right-unbounded interval and let $(f_1, f_2,
	g)\in(\R^I)^3$ be continuous non-increasing functions vanishing at infinity
	such that $f_1\geq f_2$ pointwise. Consider $r\geq 0$ and define $\lambda_i
	\eqdef \inf\set{\lambda\in I\given f_i(\lambda)+g(\lambda)\leq r}$ for
	$i\in\set{1,2}$. Then, 
	\begin{displaymath}
		\lambda_1\geq \lambda_2,\quad
		f_1(\lambda_1)\geq f_2(\lambda_2)
		\quad\mathrm{and}\quad
		g(\lambda_1)\leq g(\lambda_2).
	\end{displaymath}
\end{lemma}

\begin{proof}
	Define $I_i\eqdef\set{\lambda\in I\given f_i(\lambda)+g(\lambda)\leq
	r}$ so that $\lambda_i = \inf I_i$ for $i\in\set{1,2}$.
	\begin{enumerate}
		\item It follows from our assumptions that $f_1(\lambda)+g(\lambda)\leq
			r$ implies $f_2(\lambda)+g(\lambda)\leq r$. Hence, $I_1\subseteq
			I_2$ from which $\lambda_2\leq \lambda_1$ immediately follows.
		\item The inequality $g(\lambda_1)\leq g(\lambda_2)$ follows from 1.\
			and the fact that $g$ is non-increasing.
		\item We distinguish two cases:
			\begin{itemize}
				\item If $\lambda_1=\lambda_2$, we have
			$f_1(\lambda_1)=f_1(\lambda_2)\geq f_2(\lambda_2)$ by our
			assumption that $f_1\geq f_2$.
				\item Otherwise, by 1.\ we have that $\lambda_1>\lambda_2$.
					This implies that $\lambda_2\notin I_1$, or equivalently,
			\begin{displaymath}
				f_1(\lambda_2)+g(\lambda_2)>r,
			\end{displaymath}
			and the continuity of $f_1+g$ then implies that
			$f_1(\lambda_1)+g(\lambda_1)=r$. Thus,
			\begin{displaymath}
				f_1(\lambda_1)+g(\lambda_2)
				\geq f_1(\lambda_1)+g(\lambda_1)
				=r\geq f_2(\lambda_2)+g(\lambda_2),
			\end{displaymath}
			where the first inequality is by 2.\ and the last inequality is by
			definition of $\lambda_2$. We obtain $f_1(\lambda_1)\geq
			f_2(\lambda_2)$ after subtracting $g(\lambda_2)$ on both
			sides.\qedhere
	\end{itemize}
	\end{enumerate}
\end{proof}

\begin{proof}[Proof of \cref{prop:comparison}]
1. This is an immediate consequence of \cref{prop:opt-welfare} and \eqref{eq:ne}.

    2. The monotonicity of $\sigma_{a_i\theta_j}^{\rm R}/rt\sigma_{\theta_i}^2$ and $\sigma_{a_i\omega}^{\rm R}/s\sigma_{\omega}^2$ follows from the structure of the optimal mechanism described in \cref{prop:opt-revenue} and \cref{lem:monotone}. Indeed, consider two values of $\mu_{a_i}^2$, namely $\mu_1$ and $\mu_2$ with $\mu_1\geq\mu_2$. Define $I=[\max\set{0,r/(f-r)},+\infty)$ and for each $\lambda\in I$, let $x_1(\lambda)$ and $x_2(\lambda)$ be the solutions to \eqref{eq:cubic} in $(-f/r^2,+\infty)$ for $\mu_{a_i}^2=\mu_1$ and $\mu_{a_i}^2=\mu_2$ respectively. The existence of $x_1$ and $x_2$ is guaranteed by \cref{lemma:implicit-eq} which also implies that $x_1$ and $x_2$ are decreasing and converge to $x_0$ at $+\infty$. Furthermore, the same lemma (item 2) implies that $x_1\geq x_2$ pointwise. We can thus apply \cref{lem:monotone} with
    \begin{displaymath}
	f_1(\lambda) = a\big(x_1(\lambda) - x_0\big)^2,\; f_2(\lambda) = a\big(x_2(\lambda)-x_0\big)^2,\;\text{and}\; g(\lambda)= b\big(y(\lambda)-y_0\big)^2,
    \end{displaymath}
    where $a$, $b$, and $r$ are the (positive) coefficients defined such that the equation of the obedience ellipse \eqref{eq:ellipse} can be written as
	$a(x-x_0)^2+b(y-y_0)^2\leq r$.
    We then conclude that with $\lambda_1$ and $\lambda_2$ defined as in the statement of \cref{lem:monotone}, $f_1(\lambda_1)\geq f_2(\lambda_2)$ and $g(\lambda_1)\leq g(\lambda_2)$. But since $x_i\geq x_0$ pointwise for $i\in\set{1,2}$ and likewise $y\geq y_0$, this in turns imply that
    \begin{displaymath}
	x_1(\lambda_1)\geq x_2(\lambda_2)\quad\text{and}\quad
	y(\lambda_1)\leq y(\lambda_2).
    \end{displaymath}
    This concludes the proof of monotonicity since by \cref{prop:opt-revenue} $x_k(\lambda_k)=\sigma_{a_i\theta_j}^{\rm R}(\mu_k)/rt\sigma_{\theta_i}^2$ and $y(\lambda_k)=\sigma_{a_i\omega}^{\rm R}(\mu_k)/s\sigma_{\omega}^2$ for $k\in\set{1,2}$. Finally, the monotonicity of $\sigma_{a_i\theta_i}^{\rm R}/t\sigma_{\theta_i}^2$ immediately follows from the one of $\sigma_{a_i\theta_j}^{\rm R}/rt\sigma_{\theta_i}^2$ and the obedience equality constraint \eqref{eq:obedience-var-sym}.

    3. \& 4. We start by comparing the revenue-optimal mechanism (R-opt) for $\mu_{a_i}=0$ to the complete-information Nash equilibrium (CI). Let $x^{\rm R}$ and $y^{\rm R}$ be the functions defined in \cref{prop:opt-revenue} when $\mu_{a_i}=0$, that is,
    \begin{displaymath}
	x^{\rm R}(\lambda) = \frac{f(\lambda nf + 2 +r)}{2(1+r)[\lambda nf(f-r) -r(1+r)]}
	\quad\text{and}\quad
	y^{\rm R}(\lambda) = \frac{1}{2n}\frac{\lambda nf +1}{\lambda(f-r)-r}
    \end{displaymath}
    and let us compute $\lambda^{\rm R}$ solution to
    \begin{displaymath}
	x^{\rm R}(\lambda) = x^{\rm CI} \eqdef \frac{\sigma_{a_i\theta_j}^{\rm CI}}{rt\sigma_{\theta_i}^2}
	= \frac{f}{(f-r)(1+r)},
	\quad\text{namely,}\quad
	\lambda^{\rm R} = \frac{f(2+r)+r^2}{nf(f-r)}.
    \end{displaymath}
    We easily verify that $\lambda^{\rm R}$ satisfies dual feasibility $(\lambda^{\rm R}\geq\lambda_{\rm min})$ when $r<0$. And for $r>0$, this is equivalent to the inequality $r^2-r+2f\geq 0$, which is true since $r< f$.
    Then, observe that the inequality $y^{\rm R}(\lambda^{\rm R})\leq y^{\rm CI}$ is equivalent to $(f-r)(1-r)\geq 0$, and is satisfied for $r\in(-1, f)$. This shows that the point $\big(x^{\rm CI}, y^{\rm R}(\lambda^{\rm R})\big)$ belongs to the stationarity curve defining R-opt  and lies “below” the CI. This implies in particular that at the origin of the stationarity curve we always have $x^{\rm R}(\lambda_{\rm min})\geq x^{\rm CI}$. We thus distinguish two cases:
    \begin{itemize}
	\item The stationarity curve originates outside \ref{eq:ellipse}. In this case, R-opt is obtained at the intersection of the stationarity curve and \ref{eq:ellipse}. But by the above argument,  $(x^{\rm CI}, y^{\rm R}(\lambda^{\rm R}))$ lies on the stationarity curve below the CI (and in particular inside \ref{eq:ellipse}). So the sought for intersection must satisfy $x^{\rm R}(\lambda^\star)\geq x^{\rm CI}$ and by consequence $y^{\rm R}(\lambda^\star)\leq y^{\rm CI}$.
	\item The stationarity curve originates inside \ref{eq:ellipse}. Since $x^{\rm R}(\lambda_{\rm min})\geq x^{\rm CI}$, we necessarily have $y^{\rm R}(\lambda_{\rm min})\leq y^{\rm CI}$. But in this case, $\lambda_{\rm min}=\lambda^\star$ and we immediately obtain the desired comparison with the CI.
    \end{itemize}


Let us now compare R-opt to F-opt. Let $x^{\rm F}$ be the function defined in \cref{prop:opt-welfare}. Since $r+1\geq 0$, we immediately see that $x^{\rm F}\geq x^{\rm R}$ pointwise. We thus obtain the desired comparison using \cref{lem:monotone} and the same argument as in the proof of 2.

    Finally, the solution to \eqref{eq:cubic} converges to $+\infty$ when $\mu_{a_i}^2\to\infty$. This implies that when $\mu_{a_i}^2\to+\infty$, R-opt converges to the endpoint of the major axis satisfying $x\geq x_0$. Since the stationarity curve of W-opt is contained in the quadrant $\set{x,y\given x> x_0, y> y_0}$, we obtain the comparison between W-opt and  R-opt as $\mu_{a_i}^2=+\infty$.
\end{proof}

\subsection{Consumer-Optimal Mechanisms}\label{sec:app-consumer}

\begin{proof}[Proof of \cref{prop:bertrand-opt}]
The consumer's primitive utility consistent with the demand curve $Q_i(p;\theta_i)=\omega+r\sum_{j\neq i}p_j-p_i/2+\theta_i$ of \cref{example:bertrand} is
\begin{equation}\label{eq:cons-gen}
U_C(q;\omega,\theta) = \frac{1}{2}(q-\omega 1_n-\theta)^{\top} M^{-1} (q-\omega 1_n-\theta) + C(\omega,\theta),
\end{equation}
where $M\eqdef -\frac{1}{2}I_n + r(J_n-I_n)$ and $C(\omega,\theta)$ is a $q$-independent constant.\footnote{For $r\geq 1/[2(n-1)]$, $M$ ceases to be negative definite; on this range $U_C$ should be read as a formal quadratic utility index rather than a representative-consumer microfoundation. The mechanism characterization of \cref{prop:bertrand-opt}, which only uses the linear-objective machinery of \cref{prop:linear} with $(\alpha,\beta,\gamma)=(-1,-3,-3)$, holds throughout $r\in(0,1/(n-1))$.} Indeed, for this utility, the first-order condition $\nabla_q U_C=p$ gives
\begin{equation}\label{eq:d}
q^*(p;\omega,\theta) = M p + \omega 1_n + \theta,
\end{equation}
which reproduces the demand of \cref{example:bertrand}. The consumer's net surplus at posted prices $p$ is
\begin{align*}
W_C(p;\omega,\theta) &= U_C(q^*;\omega,\theta) - \ip{p,q^*}\\
&= -\frac{1}{2}\,p^{\top} M p - \omega\sum_i p_i - \sum_i \theta_i p_i + C(\omega,\theta).
\end{align*}
Expanding $-\frac{1}{2}p^{\top}M p = \frac{1}{4}\sum_i p_i^2 - \frac{r}{2}\sum_{i\neq j}p_ip_j$ yields \eqref{eq:bertrand-utility}.

Taking ex-ante expectation under symmetry, identifying $p_i=a_i$, and dropping the constant $\E[C(\omega,\theta)]$, the consumer's per-firm welfare is proportional to $\sigma_{a_i}^2 - 2(n-1)r\sigma_{a_ia_j} - 4s\sigma_{a_i\omega} - 4t\sigma_{a_i\theta_i}$. Substituting the obedience identities \eqref{eq:obedience-var-sym} for $\sigma_{a_i}^2$ and $\sigma_{a_i\theta_i}$, and dropping the constant $-3t\sigma_{\theta_i}^2$, the reduced objective is $-(n-1)r\sigma_{a_ia_j} - 3s\sigma_{a_i\omega} - 3(n-1)rt\sigma_{a_i\theta_j}$. This matches the linear-objective form of \cref{prop:linear} with $(\alpha,\beta,\gamma)=(-1,-3,-3)$.

    \textit{IC binds.} For $(\alpha,\beta,\gamma)=(-1,-3,-3)$ and $r\in(0,\frac 1{n-1})$, \cref{prop:linear} gives $\lambda_{\min}=0$ and
\begin{displaymath}
x(\lambda) = \frac{f}{2(1+r)}\cdot\frac{\lambda nf - 4r - 5}{\lambda nf(f-r) + (1+r)r},
\quad
    y(\lambda) = \frac 1{2n} \frac{\lambda nf -3-2r}{\lambda(f-r)+r}.
\end{displaymath}
    The monotonicity of $x$ is given by the sign of the determinant $\Delta=r(1+r)+(4r+5)(f-r)$ which is positive for $r\in(-1,f)$, so $x(\lambda)$ is strictly increasing from $x(0) = -f(4r+5)/(2(1+r)^2 r) < 0$ to $x_0>0$ as $\lambda\to\infty$, crossing zero at $\lambda_0\eqdef (4r+5)/(nf)$. Similarly the monotonicity of $y$ is given by the sign of $(1+r)(3f-2r)$ which is positive for $r\in(-1,f)$.

The point $\big(x(0),y(0)\big)$ has both coordinates negative ($y(0) = -(3+2r)/(2nr)<0$); since the ellipse \eqref{eq:ellipse} passes through $\varnothing=(0,0)$ and is centered at $(x_0,y_0)$ with $x_0,y_0>0$, $\big(x(0),y(0)\big)$ lies strictly outside $\mathcal{E}$.
    At $\lambda=\lambda_0$, $\big(x(\lambda_0),y(\lambda_0)\big)=\big(0,y(\lambda_0)\big)$ with $y(\lambda_0) = (r+1)f/[(4r+5)(f-r) + nfr]>0$,
placing $\big(0,y(\lambda_0)\big)$ strictly inside the chord $[\varnothing,\mathrm{SO}]$.

By continuity, the stationarity curve $\big(x(\lambda),y(\lambda)\big)$ crosses $\partial\mathcal{E}$ at some $\lambda^\star\in(0,\lambda_0)$; hence $x(\lambda^\star)<0$, the unconstrained obedient optimum has $\sigma_{a_i\theta_j}^\star<0$, the IC monotonicity $\sigma_{a_i\theta_j}\geq 0$ of \cref{prop:ic-char} binds, and the IC-constrained optimum lies on the segment $[\varnothing,\mathrm{SO}]$ of \cref{fig:ic}.

The result is implementable without discriminatory transfers (\cref{prop:ic-char}).

\textit{Location on the segment.} Since the stationarity condition for $y$ is unaffected by the IC constraint $x\geq 0$, the IC-constrained optimum lies on the curve $\big(0,y(\lambda)\big)$. Since $y(0)<0$, the curve starts outside $\mathcal{E}$ and, since $y(\lambda)$ is increasing, first enters $\mathcal{E}$ at $y=0$, i.e., at $\varnothing=(0,0)\in\partial\mathcal{E}$. Hence the consumer-optimal mechanism is at $\varnothing$: $\sigma_{a_i\omega}=0$, $\sigma_{a_i\theta_j}=0$, and the recommendation reduces to the prior Bayes--Nash equilibrium $a_i=\mu_{a_i}+t(\theta_i-\mu_{\theta_i})$.
\end{proof}

\begin{proposition}\label{prop:pareto-opt}
    For $r\in(0,\frac 1 {n-1})$ and $\eta\in[0,1]$, consider the linear objective
\begin{displaymath}
    \tilde F_\eta = (n-1)r(1-2\eta)\,\sigma_{a_ia_j} \,+\, (1-4\eta)\,s\sigma_{a_i\omega} \,+\, (n-1)rt\,(1-4\eta)\,\sigma_{a_i\theta_j},
\end{displaymath}
    interpolating between the producer expected surplus $(\eta=0)$ and the consumer's expected surplus $(\eta=1)$. Then the symmetric, incentive compatible mechanism maximizing $\tilde F_\eta$ satisfies:
    \begin{enumerate}
	\item for $\eta<\frac 1 2$, it is a boundary solution with $\sigma_{a_i\theta_j}>0$, hence the IC monotonicity condition is slack;
	\item for $\eta\geq \frac 1 2$, it collapses to the consumer-optimal mechanism, which is also the prior Bayes--Nash equilibrium for which $\sigma_{a_i\theta_j}=\sigma_{a_i\omega}=0$ and where the IC condition binds.
    \end{enumerate}
\end{proposition}

\begin{proof}
    This objective is the special case of the general linear objective in \cref{prop:linear} with $\alpha=1-2\eta$ and $\beta=\gamma=1-4\eta=2\alpha-1$. In particular, the structure of the optimal mechanism is governed by the stationarity curve $\big(x(\lambda),y(\lambda)\big)$ with the expressions for $x$ and $y$ given in \cref{prop:linear}.
Since $\beta=\gamma$, we have
	\begin{displaymath}
	    y(\lambda) = \frac 1 {2n} \frac{N(\lambda)-2\alpha(r+1)}{\lambda(f-r)-\alpha r},
	\end{displaymath}
	where $N(\lambda)$ is the numerator of $x(\lambda)$ as in \cref{prop:ic-slack-regime}.

    We start with the case $\eta<1/2$, or equivalently $\alpha>0$. In this case $\lambda_{\rm min}=\frac{\alpha r}{f-r}$ and $|y(\lambda)|\to\infty$ when $\lambda\to\lambda_{\rm min}^+$. Thus, the stationarity curve originates outside the obedience ellipse \ref{eq:ellipse} and we have a boundary solution. Moreover, by \cref{prop:ic-slack-regime}:
    \begin{itemize}
	\item if $N(\lambda_{\rm min})\geq 0$, the stationarity curve originates in the half-plane $x\geq 0$, thus the boundary solution satisfies $\sigma_{a_i\theta_j}>0$ and the IC monotonicity condition is slack;
	\item if $N(\lambda_{\rm min})<0$, the stationarity curve originates in the half-plane $x<0$ and crosses the vertical axis $x=0$ for $\lambda=\lambda_0>\lambda_{\rm min}$ such that $N(\lambda_0)=0$. At $\lambda_0$, we have
	    \begin{equation}\label{eq:foo-y}
		y(\lambda_0) = \frac  1{2n} \frac{-2\alpha(r+1)}{\lambda_0(f-r)-\alpha r}.
	    \end{equation}
	    The denominator is positive since $\lambda_0>\lambda_{\rm min}$ and the numerator is negative since $\alpha>0$. Hence $(0,y(\lambda_0))$ lies strictly below $\varnothing=(0,0)$ on the axis $x=0$, hence outside $\mathcal{E}$. Thus, the stationarity curve intersects $\mathcal{E}$ for some $\lambda>\lambda_0$ at which $x(\lambda)>0$. The IC monotonicity condition is therefore slack at this intersection.
    \end{itemize}

    The case $\eta=\frac{1}{2}$ ($\alpha=0$) follows by continuity: as $\alpha\to 0^+$, the crossing point $(0,y(\lambda_0))$ satisfies $y(\lambda_0)\to 0$, so the optimal intersection on $\partial\mathcal{E}$ converges to $\varnothing$, and the IC condition $x\geq 0$ binds in the limit.

    We now consider $\eta> 1/2$, or equivalently $\alpha< 0$. In this case, $\lambda_{\rm min}=0$ and 
    \begin{displaymath}
	N(\lambda_{\rm min}) = N(0) = \alpha(r+2) +\gamma(r+1).
    \end{displaymath}
    Since $\gamma=2\alpha-1<0$ and $r\geq -1$, we have $N(0)<0$. Thus, as above, the stationarity curve originates in the half-plane $x<0$ and intersects the $x=0$ axis for $\lambda_0>0$ such that $N(\lambda_0)=0$. The expression for $y(\lambda_0)$ obtained in \eqref{eq:foo-y} remains valid, but this time $y(\lambda_0)>0$ since $\alpha<0$, and $y(\lambda_0)<y_0<2y_0$ by monotonicity, so $\big(0,y(\lambda_0)\big)$ lies inside $\mathcal{E}$. The obedient optimal mechanism is obtained for some $\lambda<\lambda_0$ at which $x(\lambda)<0$, hence the IC condition is violated and must bind at the optimum. Since the stationarity condition for $y$ is unaffected by the IC constraint $x\geq 0$, the IC-constrained optimal mechanism lies on the curve $\big(0,y(\lambda)\big)$. Indeed,
    \begin{displaymath}
	y(0) = \frac 1{2n} \frac{N(0)-2\alpha(r+1)}{-\alpha r}
	= \frac 1{2n}\frac{\alpha(r+2)-(r+1)}{-\alpha r}<0,
    \end{displaymath}
    where the inequality uses $\alpha<0$ and $r>0$. Hence the curve $\big(0,y(\lambda)\big)$ originates outside $\mathcal{E}$ and, since $y(\lambda)$ is increasing, first enters $\mathcal{E}$ at $y=0$, i.e., at $\varnothing=(0,0)\in\partial\mathcal{E}$. The IC-constrained optimal mechanism is therefore $\varnothing$.
\end{proof}

\newpage

\bibliographystyle{ecta-fullname}
\bibliography{main}
\newpage
\begin{center}
    \huge Supplemental Appendix
\end{center}

\section{Endogenous Participation}\label{sec:participation}
\subsection{Mechanism Design}

We now extend our model to the case of a pure information intermediary, i.e., we allow the players to play in the downstream game regardless of whether they participate in the designer's mechanism. Thus, a mechanism must now specify a distribution of action recommendations for each subset of participating players. Such a mechanism consists of two functions. 
\begin{itemize}
	\item An information policy that maps each participating player's type report and state of nature to a distribution over action profiles. Formally, for 
    each subset $S\subseteq[n]$ of participating players, the designer commits to an information policy $\tau^S:\R^S\times \R\to\Delta(\R^S)$ that maps a
		vector of type reports $\theta\in\R^S$ and a state of nature $\omega\in
		\R$ to a distribution $\tau^S(\theta,\omega)$ over action recommendations for the participating players.
	
    \item For each participating player $i\in S$, a function $p_i:\R\to\R$ mapping their reported type to the payment they are being charged in exchange for an action recommendation.
	A non-participating player pays no price and receives no information.
\end{itemize}

In this section, we focus on Perfect Bayesian Equilibria where every player participates in the designer's mechanism. Because we have restricted attention to mechanisms that are  Gaussian  when all players participate, the designer is entirely unconstrained in their choice of information policy off the equilibrium path, when a player does not. In particular, because players do not observe each other's participation decisions and every action $a_i\in \mathbb{R}$ recommendation is in the support of the equilibrium distribution, the designer can perfectly control the actions of the participating players when one player deviates. Intuitively, the designer will do so in order to maximally punish any nonparticipating players.

\subsection{Participation Constraints}

When a player does not  participate in the mechanism, their utility is still affected by the actions of the remaining players in the downstream game. In order to complete the description of a mechanism, we must then  specify the \emph{off-path} action recommendations---those  used when a player chooses not to participate. This determines the reservation utility of non-participating players and  constrains the payments that can be collected from participating players.

In this section, we characterize the set of payments and off-path action recommendations that incentivize full participation.
Though the complete description of a mechanism is a priori exponential in the numbers of players, the following lemma shows that, in order to  characterize full-participation equilibria, the choice of an off-path mechanism can be parametrized by a single number: the mean aggregate action that a single non-participating player faces.

\begin{lemma}[Reservation utility]\label{lemma:reserv}
    For $i\in[n]$ and $j\neq i$, let $a_j^i$ be a random variable distributed according to $\tau^{[n]\setminus\set{i}}(\theta_{-i},\omega)$ and define $M_i\eqdef \sum_{j\neq i}\E[a_j^i]$, the \emph{mean aggregate action} faced by non-participating player $i$. Then, type $\theta_i$'s optimal action and  reservation utility are given by
    \begin{displaymath}
	a_i^o(\theta_i)=rM_i+s\mu_\omega+t\theta_i\quad \text{and}\quad u_i^o(\theta_i) = \frac 1 2 a_i^o(\theta_i)^2.
    \end{displaymath}
\end{lemma}

From the perspective of the non-participating player $i\in[n]$, all off-path  mechanisms that  induce a given mean aggregate action $M_i$ are equivalent to one in which the actions of all participating players are constant and equal to $M_i/(n-1)$. 
Furthermore, recall that we do not require the off-path recommendations $\set{a_j^i}_{j\neq i}$ to be obedient for the participating players, i.e., the designer is not bound to choose $M_i/(n-1)=\mu_{a_i}$ as in \cref{prop:obedience-char-sym}.  
In fact, revenue extraction will require us to consider different—and generically non-obedient—choices of $M_i$.

Our first result establishes conditions under which a Gaussian recommendation rule can be supported by on-path payments and off-path recommendations that induce \emph{truthful participation} by all players.

\begin{proposition}\label{prop:max-payment}
    Consider a Gaussian recommendation rule $(\mu, \K)$ and a player $i\in [n]$.
    \begin{enumerate}
        \item     There exist a payment function $p_i$ and mean off-path aggregate action $M_i$ that incentivize truth-telling and participation of player $i$ iff $t\sigma_{a_i\theta_i}\geq t^2\sigma_{\theta_i}^2$.
    
   \item For such a recommendation rule and $r\neq 0$, the pair $(p_i, M_i)$ that maximizes player $i$'s interim payment $p_i(\theta_i)$ pointwise is given by
    \begin{equation}\label{eq:part-mech}
	p_i(\theta_i) = \tu_i(\theta_i)
	- \frac 1 2 \frac{t\sigma_{\theta_i}^2}{\sigma_{a_i\theta_i}} \E[a_i\given \theta_i]^2
	\quad\text{and}\quad
	M_i= \frac{\mu_{a_i}t\sigma_{\theta_i}^2}{r\sigma_{a_i\theta_i}}
	-\frac{s\mu_\omega+t\mu_{\theta_i}}{r}.
    \end{equation}
At $r=0$, the reservation utility $u_i^o$ is independent of $M_i$ (\cref{lemma:reserv}), and the same payment formula maximizes $p_i(\theta_i)$ for any choice of $M_i$.
	\item If the recommendation rule $(\mu, \K)$ is also obedient, then the maximal payment $p_i(\theta_i)$ is nonnegative for all $\theta_i$ and can be written
	    \begin{displaymath}
		p_i(\theta_i)
		= \frac 1 2\bigg(1-\frac{t\sigma_{\theta_i}^2}{\sigma_{a_i\theta_i}}\bigg)\E[a_i\given \theta_i]^2
		+ \frac 1 2\var(a_i\given\theta_i)
		= \frac 1 2\bigg(1-\frac{t\sigma_{\theta_i}^2}{\sigma_{a_i\theta_i}}\bigg)\E[a_i\given \theta_i]^2
		+ \frac 1 2\bigg(\sigma_{a_i}^2-\frac{\sigma_{a_i\theta_i}^2}{\sigma_{\theta_i}^2}\bigg).
	    \end{displaymath}
    \end{enumerate}
\end{proposition}


Part (1) of \cref{prop:max-payment} shows that the same covariance condition, $t\sigma_{a_i\theta_i}\geq t^2\sigma_{\theta_i}^2$, that characterizes  incentive compatibility (\cref{prop:ic-char}) is also necessary for participation. Thus, for a mechanism that incentivizes participation, requiring incentive compatibility is equivalent to imposing truthfulness and obedience separately. In other words, if a mechanism  satisfies obedience, any transfer that guarantees participation and induces truth-telling also prevents double deviations.

That individual rationality and obedience (with the right transfers) suffice to discourage double deviations is a bit surprising but ex post intuitive: each player's participation decision  involves a specific double deviation, in which the player  deviates at the first stage (by choosing not to participate instead of reporting their type) and then reacts optimally to the actions recommended to the other players, on the basis of the information they have at that point.

Part (2) elucidates the role of the mean off-path aggregate action $M_i$. When designing the actions of the $n-1$ participating players, the information seller performs a horizontal shift of a deviating player's action and reservation utility. Thus, the choice of $M_i$  controls the type $\theta^o_{\star}$ that takes a nil action $a_i^o=0$ and hence obtains a reservation utility $u_i^o$ of zero. The seller-optimal $M_i$ is then obtained when the minima of the reservation and (gross) on-path utilities are aligned. In this case, the same critical type $\theta_{\star}$ takes a nil (expected) action whether or not they participate. The seller can then maximize  revenue by adding a constant to the payment function $p$ until the critical type's net utility when participating reaches zero. This is illustrated in \cref{fig:participation}.

\begin{figure}
    \hfill{}
    \includegraphics[width=0.48\textwidth]{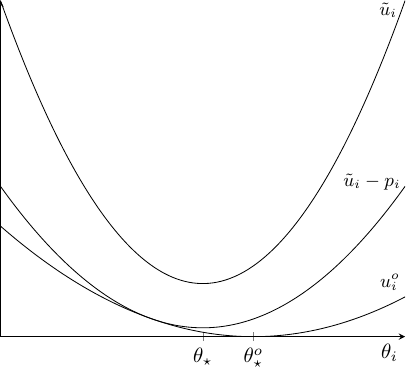}
    \hfill{}
    \includegraphics[width=0.48\textwidth]{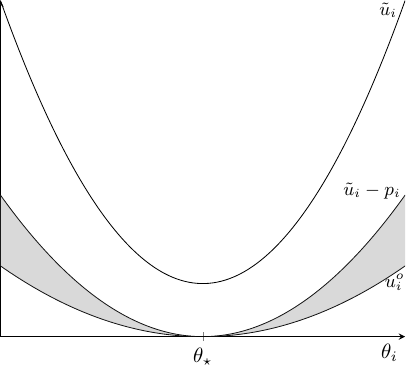}
    \hfill{}
\caption{Reservation utility $u_i^o$, utility $\tilde u_i$ and net utility $\tilde u_i - p_i$ of player $i$ as a function of $\theta_i$ for an incentive compatible and individually rational mechanism. The utility and net utility are minimal for $\theta_\star=\mu_{\theta_i}-\mu_{a_i}\sigma_{\theta_i}^2/\sigma_{a_i\theta_i}$, and the reservation utility reaches its minimum, zero, at $\theta_\star^o=-(rM_i+s\mu_\omega)/t$. By choosing the mean off-path aggregate action $M_i$, the information seller controls $\theta_\star^o$, which in turns constrains the maximum collectible payment. The left plot shows a generic, suboptimal choice of $M_i$, while the right plot shows the optimal setting of $M_i$, for which $\theta_\star^o=\theta_\star$. The grayed area shows the loss in revenue due to the elicitation constraint.}
\label{fig:participation}
\end{figure}

Thus, in order to maximize revenue, the off-path mechanism needs to be tuned with respect to the chosen on-path recommendation rule. In other words, there is no off-path recommendation rule that maximally penalizes non-participation \emph{uniformly} across all on-path recommendation rules, and the on- and off-path mechanisms must thus be designed in concert. This goes against the intuition that there should exist a uniform maximal threat (e.g.\ by asking the remaining players to flood the market in the Cournot example).\footnote{In particular, this contrasts with the findings of \cite{BDHN22} for a binary state-matching game, in which such a threat exists: disclosing all the information to all participating players.}

Finally, all our findings in this Section can be combined into the following corollary, which formalizes how our results on endogenous participation do not affect the optimal \emph{obedient} mechanisms in \cref{sec:opt}. In particular, for any incentive compatible mechanism, we can find an off-path mechanism which allows the designer to charge the same transfers.

\begin{corollary}
    Consider an obedient, symmetric Gaussian recommendation rule $(\mu, \K)$. If $rt\sigma_{a_i\theta_j}\geq 0$, there exist a  payment function $p_i\geq0$ and off-path action $M_i$ such that the mechanism $(\mu,\K,p,M)$ is individually rational and incentive compatible.
\end{corollary}

\subsection{Proofs}

\begin{proof}[Proof of \cref{lemma:reserv}]
    Using the notations of the lemma, let us write the utility of a non-participating player $i\in[n]$ with type $\theta_i$ and taking action $a_i$ in the downstream game:
    \begin{align*}
	\E[u_i(a_i,a_{-i}^i;\theta,\omega)\given \theta_i]
	&=-\frac 1 2 a_i^2 + ra_i\sum_{j\neq i} \E[a_j^i\given\theta_i] +
	s a_i\E[\omega\given\theta_i] + t a_i\theta_i\\
	&=-\frac 1 2 a_i^2 + r a_i M_i + s a_i\mu_\omega +ta_i\theta_i,
    \end{align*}
    where the second equality uses that $\omega, a_j^i\perp\theta_i$ due to $a_j^i$ being $(\theta_{-i},\omega)$-measurable. Consequently, the optimal action of player $i$ in the downstream game is
    \begin{displaymath}
	a_i^o(\theta_i)=rM_i+s\mu_\omega+t\theta_i,
    \end{displaymath}
    for which we obtain
    \begin{displaymath}
	u_i^o(\theta_i) = \frac 1 2 (rM_i+s\mu_\omega+t\theta_i)^2.\qedhere
    \end{displaymath}
\end{proof}

\begin{proof}[Proof of \cref{prop:max-payment}]
    1.\ \& 2. By the on-path FOC derived in the proof of \cref{prop:ic-char} in \cref{sec:app-tic}, any payment supporting truthful reporting on path must satisfy, for all $\theta_i\in\R$,
    \begin{equation}\label{eq:truth-der}
	\tu_i'(\theta_i)-p_i'(\theta_i) = t\E[a_i\given\theta_i].
    \end{equation}
    Recall that for a Gaussian mechanism we have $\E[a_i\given\theta_i] = \mu_{a_i} + \sigma_{a_i\theta_i}/\sigma_{\theta_i}^2(\theta_i-\mu_{\theta_i})$. Hence, by integrating \eqref{eq:truth-der} we obtain for all $\theta_i$
    \begin{displaymath}
	\tu_i(\theta_i)-p_i(\theta_i) =
	\frac{t\sigma_{\theta_i}^2}{2\sigma_{a_i\theta_i}}\E[a_i\given\theta_i]^2 + C,
    \end{displaymath}
    where $C$ is an integration constant. Using the expression for the reservation utility in \cref{lemma:reserv}, participation is equivalent to, for all $\theta_i\in\R$,
    \begin{equation}\label{eq:foo-par}
	\frac{t\sigma_{\theta_i}^2}{2\sigma_{a_i\theta_i}}\E[a_i\given\theta_i]^2 + C
	\geq \frac 1 2 (rM_i+s\mu_\omega+t\theta_i)^2.
    \end{equation}
    This is an inequality between two quadratics (since $\E[a_i\given\theta_i]$ is affine) for which a necessary condition is that the leading coefficient on the left be at least the one on the right. That is,
    \begin{displaymath}
	    \frac{t\sigma_{a_i\theta_i}}{2\sigma_{\theta_i}^2}\geq \frac{t^2} 2,
    \end{displaymath}
    and we thus obtain the condition in the proposition's statement.

    Conversely, assume that this condition is satisfied. Plugging in \eqref{eq:foo-par} the value of $\theta_i$ for which $\E[a_i\given\theta_i]=0$, we see that we must have $C\geq 0$. But for the value of $M_i$ given in \eqref{eq:part-mech}, the right hand side of \eqref{eq:foo-par} becomes
    \begin{displaymath}
	\frac 1 2
	\prn[\bigg]{ \frac{\mu_{a_i}t\sigma_{\theta_i}^2}{\sigma_{a_i\theta_i}}
	+t(\theta_i-\mu_{\theta_i})}^2
	= \frac 1 2
	\prn[\bigg]{\frac{t\sigma_{\theta_i}^2}{\sigma_{a_i\theta_i}}}^2
	\E[a_i\given\theta_i]^2.
    \end{displaymath}
    Hence \eqref{eq:foo-par} is satisfied with $C=0$, since $t\sigma_{\theta_i}^2/\sigma_{a_i\theta_i}\leq 1$.

    In summary, we proved that truthfulness and participation imply the covariance condition $t\sigma_{a_i\theta_i}\geq t^2\sigma_{\theta_i}^2$ and that the integration constant $C$ is non-negative. Conversely, when the covariance condition is satisfied, the on-path FOC of \cref{prop:ic-char} also holds and the payment from \eqref{eq:part-mech} satisfies \eqref{eq:truth-der}, hence truthfulness is satisfied. Furthermore, with the setting of $M_i$ from \eqref{eq:part-mech}, we saw that \eqref{eq:foo-par} is satisfied with $C=0$, hence the resulting mechanism incentivizes participation. Finally, since the payment is non-increasing in $C$, being able to achieve $C=0$ results in the maximal payment.

3. Recall the expression \eqref{eq:interim-utility-ob} for the interim utility of an obedient mechanism:
    \begin{displaymath}
	\tu_i(\theta_i) =
	\frac 1 2 \E[a_i\given\theta_i]^2 + \frac 1 2\var(a_i\given\theta_i).
    \end{displaymath}
    Moreover, for a Gaussian mechanism, the conditional variance $\var(a_i\given\theta_i)$ does not depend on $\theta_i$ and is given by
    \begin{displaymath}
	\var(a_i\given\theta_i)
	= \sigma_{a_i}^2 - \frac{\sigma_{a_i\theta_i}^2}{\sigma_{\theta_i}^2}.
    \end{displaymath}
    The stated formula for the payment follows by combining the previous expressions with \eqref{eq:part-mech}. The non-negativity of the payment then follows from the non-negativity of the conditional variance (Cauchy–Schwarz inequality) and the covariance constraint $\sigma_{a_i\theta_i}/t\sigma_{\theta_i}^2\geq 1$.
\end{proof}

\section{Contractable Actions}\label{sec:delegation}

\looseness=-1
We now consider a situation in which the designer can contract on the players' actions. This effectively relaxes the obedience constraint, but we still assume that the players strategically report their types, hence the recommendation rule needs to guarantee truthful participation.
Using \eqref{eq:collusive_a}, we see that the first-best action satisfies $t\sigma_{a_i\theta_i}^{\rm FB}\geq t^2\sigma_{\theta_i}^2$ whenever $r\in\big(-\frac 1 2,\frac 1{2(n-1)}\big)$. This implies the existence of (discriminatory) transfers that elicit the players' types and allow the coordinator to achieve the efficient outcome.

We now turn to revenue maximization. In this case, \cref{prop:max-payment} and \eqref{eq:utility} imply
\begin{align*}
    \E[p_i(\theta_i)] = & \Big ((n-1)r-\frac{1}{2} \Big)\mu_{a_i}^2
    +(t\mu_{\theta_i}+s\mu_{\omega})\mu_{a_i}
    -\frac{t\mu_{a_i}^2\sigma_{\theta_{i}}^2}{2\sigma_{a_i\theta_i}}\\
    &-\frac 1 2\sigma_{a_i}^2+(n-1)r\sigma_{a_i a_j}
    +\frac t 2\sigma_{a_i \theta_i}+s\sigma_{a_i \omega}\,.
\end{align*}
The total payment $\sum_i \E[p_i(\theta_i)]$ is symmetric and we can thus restrict to symmetric mechanisms without loss of generality. The constraints are the two inequalities characterizing positive semi-definiteness (\cref{lemma:psd}) and the strong monotonicity constraint from \cref{prop:max-payment} for participation: $t\sigma_{a_i\theta_i}\geq t^2\sigma_{\theta_i}^2$.

\begin{proposition}\label{prop:opt:TwO}
    For $r\in\big(-\frac 1 2,\frac 1 {2(n-1)}\big)$, there exists a unique symmetric revenue-maximizing mechanism that guarantees truthful participation. In this mechanism, recommendations are deterministic conditioned on $(\theta,\omega)$: they take the form \eqref{eq:params-sym} with $\delta=0$ and
    \begin{displaymath}
	\mu_{a_i} =
	\frac{f(t\mu_{\theta_i}+s\mu_{\omega})}
	{ft\sigma_{\theta_i}^2/\sigma_{a_i\theta_i}+f-2r},
	\;\; \sigma_{a_i\omega}=\frac {fs\sigma_\omega^2} {f-2r},
	\;\; \sigma_{a_i\theta_j} = \frac{2rf\sigma_{a_i\theta_i}}{f-2r(1-f)},
	\;\; \sigma_{a_i\theta_i} = \max\set{1,x^\star}t\sigma_{\theta_i}^2.
	\end{displaymath}
	where $x^\star$ is the unique real solution in $(-f/(f-2r),+\infty)$ to
	\begin{equation}\label{eq:main-bench}
	\frac{f^2(s\mu_\omega+t\mu_{\theta_i})^2}
	    {2t^2\sigma_{\theta_i}^2\big[(f-2r)x+f\big]^2}
	    =\frac{(f-2r)(1+2r)}{f-2r(1-f)} x - \frac 1 2.
	\end{equation}
\end{proposition}

\begin{proof}
    Since $\mu_{a_i}$ is unconstrained, we first optimize over it keeping the other variables fixed.  Note that the objective is concave in $\mu_{a_i}$ for $2r\in(-1,f)$, so we simply solve $\partial F/\partial \mu_{a_i} = 0$:
    \begin{align*}
        \mu_{a_i} =
	\frac{f(t\mu_{\theta_i}+s\mu_{\omega})}
	{f(1+t\sigma_{\theta_i}^2/\sigma_{a_i\theta_i})-2r}\,.
    \end{align*}
    For this optimal value of $\mu_{a_i}$ the objective function reduces (after multiplying by 2) to
    \begin{displaymath}
	\frac{f(t\mu_{\theta_{i}}+s\mu_{\omega})^2{\sigma_{a_i\theta_{i}}}/{t\sigma_{\theta_{i}}^2}}{2\left((f-2r)\sigma_{a_i\theta_{i}}/t\sigma_{\theta_i}^2+f\right)}\\
        -\frac 1 2\sigma_{a_i}^2
	+(n-1)r\sigma_{a_i a_j}
	+\frac t 2\sigma_{a_i \theta_i}+s\sigma_{a_i \omega}
    \end{displaymath}
    which is concave in the covariance parameters.

    Define $x \eqdef \sigma_{a_i\theta_i}/t\sigma_{\theta_i}^2$, we introduce non-negative multipliers $\lambda$ and $\nu$ for the two PSD constraints and $\xi$ for the constraint $x\geq 1$ that guarantees truthful participation. Taking the partial derivatives of the Lagrangian with respect to $\sigma_{a_i}^2$ and $\sigma_{a_ia_j}$ and equating them to zero allows us to solve for $\lambda$ and $\nu$
\begin{displaymath}
	\lambda = \frac{2r+1}{2nf}
	\quad\quad
	\nu=\frac{f-2r}{2nf}.
\end{displaymath}
    We see that $\lambda$ and $\nu$ are positive, hence the PSD constraints are binding. Thus the mechanism is deterministic conditioned on $(\theta,\omega)$ and we can compute $\sigma_{a_i}^2$ and $\sigma_{a_i a_j}$ from $\sigma_{a_i\theta_i}$, $\sigma_{a_i \theta_j}$, and $\sigma_{a_i \omega}$. The stationarity conditions with respect to $\sigma_{a_i\omega}$ and $\sigma_{a_i\theta_j}$ yield
\begin{displaymath}
	\frac{\sigma_{a_i\omega}}{s\sigma_\omega^2}=\frac 1 {2n\nu} = \frac f {f-2r}
	\quad\text{and}\quad
	\sigma_{a_i\theta_j} = \frac{2rf\sigma_{a_i\theta_i}}{f-2r(1-f)}.
\end{displaymath}

Finally, the expression for $\sigma_{a_i\theta_j}$ combined with the stationarity condition with respect to $\sigma_{a_i\theta_i}$ yields the following equation in the variable $x$:
\begin{equation}\label{eq:foo-bench}
	\frac{f^2(s\mu_\omega+t\mu_{\theta_i})^2}{2t^2\sigma_{\theta_i}^2\big[(f-2r)x+f\big]^2}
	=\frac{(f-2r)(1+2r)}{f-2r(1-f)} x - \frac 1 2-\xi.
\end{equation}
By \cref{lemma:implicit-eq}, when $\xi=0$, \eqref{eq:foo-bench} has a unique solution $x^\star$ in the interval $(x_\infty,+\infty)$ with $x_\infty\eqdef -f/(f-2r)$. Complementary slackness implies $\xi(x-1)=0$, thus:
\begin{itemize}
    \item either $x^*\geq 1$, in which case $x=x^\star$, $\xi=0$ are feasible variables satisfying \eqref{eq:foo-bench}.
    \item or $x^\star<1$. In this case a primal-dual feasible pair satisfying \eqref{eq:foo-bench} is obtained by setting $x=1$ and $\xi$ solution to \eqref{eq:foo-bench}.
\end{itemize}
The inequality determining whether the optimal $x$ is $1$ or $x^\star$ is
thus:
\begin{align*}
	\frac{(s\mu_\omega+t\mu_{\theta_i})^2}{2t^2\sigma_{\theta_i}^2}
	&< \frac{4(f-r)^2}{f^2}\left(\frac 1 2 - \frac{4r^2}{f-2r(1-f)}\right).\qedhere
\end{align*}
\end{proof}

Similar to the mechanism that achieves the efficient outcome, the mechanism of \cref{prop:opt:TwO} is always deterministic conditioned on $(\theta,\omega)$. Intuitively this is because relaxing obedience allows the recommendation to reveal more information about the fundamentals, even in the strong substitutes regime. Moreover, the mean action $\mu_{a_i}$ in \cref{prop:opt:TwO} is, in absolute value, smaller than the first-best action: $\abs{\mu_{a_i}}< \abs{\mu_{a_i}^{\rm FB}}$.

Contrary to the first-best actions, we see that the strong monotonicity condition ($t\sigma_{a_i\theta_i}\geq t^2\sigma_{\theta_i}^2$) can bind in \cref{prop:opt:TwO}. In other words, guaranteeing truthful participation sometimes requires the designer to distort recommendations. The last inequality in the proof of \cref{prop:opt:TwO} shows that this happens when $\sigma_{\theta_i}^2$ is large relative to $(s\mu_\omega+t\mu_{\theta_i})^2$.

In the absence of the truthfulness constraint, the designer would instead maximize revenue by maximizing the players' welfare, minimizing the players' reservation utility, and extracting the entire surplus. The optimal mechanism thus recommends the first-best actions and minimizes $\E[u_i^o(\theta_i)]$ by choosing off-path mean aggregate action $M_i = -(t\mu_{\theta_i}+s\mu_{\omega})/r$, resulting in $\E[u_i^o(\theta_i)] = t^2\sigma_{\theta_i}^2/2$.

\section{Cournot Competition}\label{app-consumer}

We now revisit the consumer-optimal mechanism, this time through the lens of a Cournot competition (\cref{example:cournot}). As in the proof of \cref{prop:bertrand-opt}, a simple derivation shows that for a linear-quadratic valuation inducing the demand curve of \cref{example:cournot}, the consumer surplus for a vector of good quantities $q\in\R^n$ at the market-clearing price is given by
\begin{equation}\label{eq:cournot}
    W_C(q) =\frac 1 4\sum_{i=1}^n{q_i^2} -\frac r 2\sum_{i\neq j} q_iq_j.
\end{equation}
Since this objective function is symmetric, we can restrict to symmetric mechanisms without loss of generality. Maximizing the consumer's expected surplus is thus equivalent to maximizing
\begin{displaymath}
    \frac 4 n \E[W_C(q)] = \sigma_{a_i}^2 - 2(n-1)r\sigma_{a_ia_j}.
\end{displaymath}
Using the obedience constraints to rewrite this objective in terms of $\sigma_{a_i\theta_j}$, $\sigma_{a_ia_j}$ and $\sigma_{a_i\omega}$, the resulting optimization problem takes the form \eqref{eq:reduced} with
\begin{displaymath}
	\tilde F(\sigma_{a_ia_j}, \sigma_{a_i\omega},\sigma_{a_i\theta_j})
   =-(n-1)r\sigma_{a_ia_j} +s\sigma_{a_i\omega} + (n-1)rt\sigma_{a_i\theta_j}.
\end{displaymath}

\begin{proposition}\label{prop:consumer-opt}
    For each $r$ with $2r\in(-1, \frac 1 {n-1})$, there exists a unique symmetric mechanism maximizing the consumer's expected surplus subject to incentive compatibility. Furthermore,
    \begin{enumerate}
	\item The action recommendations take the form \eqref{eq:params-sym} with noise correlation $\rho=1$ when $r<0$ and $\rho=-\frac 1 {n-1}$ when $r>0$.
	\item For $r>0$, the optimal values for $\sigma_{a_i\theta_j}$ and $\sigma_{a_i\omega}$ are given by
	    \begin{displaymath}
		\sigma_{a_i\theta_j} = 0,
		\quad
		\sigma_{a_i\omega} = s\sigma_{\omega}^2\cdot\min\set*{\frac{1+2r}{2nr}, \frac f{f-r}}.
	    \end{displaymath}
	\item For $r<0$, the optimal values for $\sigma_{a_i\theta_j}$ and $\sigma_{a_i\omega}$ are given by
    \begin{gather*}
	\sigma_{a_i\theta_j} = x(\lambda^\star)rt\sigma_{\theta_i}^2,
	\quad
	\sigma_{a_i\omega} = y(\lambda^\star)s\sigma_\omega^2,\\
	\text{with}\quad
	x(\lambda)=
	\frac{f}{2(1+r)}\cdot \frac{\lambda nf -1}{\lambda nf(f-r) +(1+r)r},
	\quad y(\lambda) = \frac{1}{2n}\cdot\frac{\lambda nf+1+2r}{\lambda(f-r)+r},
    \end{gather*}
	    and $\lambda^\star$ the unique scalar in $(-\frac r {f-r},+\infty)$ such that $\big(x(\lambda^*), y(\lambda^*)\big)$ lies on \ref{eq:ellipse}'s boundary.
    \end{enumerate}
\end{proposition}

\begin{proof}[Proof of \cref{prop:consumer-opt}]
    As in the proof of \cref{prop:bertrand-opt}, we start from a consumer's symmetric linear-quadratic valuation  in goods' quantities \eqref{eq:cons-gen}. When the firms choose quantities $q\in\R^n$, the market-clearing price is given by
\begin{equation}\label{eq:id}
    p = \nabla W_C(q) = Aq+c 1_n.
\end{equation}
    Comparing this equation with the demand curve in \cref{example:cournot} we see that we must have $A=-\frac 1 2 I_n + r(J_n-I_n)$ and $c=s\omega$, which implies \eqref{eq:cournot} under \eqref{eq:id}.

    We first solve for the consumer-optimal mechanism subject to the obedience constraint only. We can directly apply \cref{prop:linear} with $\alpha=-1$, $\beta=\gamma=1$. The resulting mechanism is as described in \cref{prop:linear} with $\sigma_{a_i\omega}^{\rm C}(\lambda)\eqdef y^{\rm C}(\lambda)\,s\sigma_{\omega}^2$, where
\begin{displaymath}
    y^{\rm C}(\lambda)
	    = \frac{1}{2n}\frac{\lambda nf+1+2r}{\lambda(f-r)+r},
	    \quad
	x^{\rm C}(\lambda)=
	    \frac{f}{2(1+r)}
	    \frac{\lambda nf -1}
	    {\big[\lambda nf(f-r)+(1+r)r\big]}.
\end{displaymath}
    This time, dual feasibility implies that $\lambda\geq \lambda_{\rm min}\eqdef\max\set{0, -\frac r {f-r}}$. The monotonicity of $y^{\rm C}(\lambda)$ is determined by the sign of the determinant
    \begin{displaymath}
	(1+f)r - (f-r)(1+2r) = (2r-f)(1+r),
    \end{displaymath}
    which is negative for $2r\in(-1, f)$, hence $y^{\rm C}$ is decreasing. Similarly, the monotonicity of $x^{\rm C}(\lambda)$ is determined by the sign of $r(1+r) +(f-r) = r^2+f$, which is positive.

    Observe that $\lambda_0=1/nf$ is solution to $x^{\rm C}(\lambda_0)=0$ and satisfies $\lambda_0> \lambda_{\rm min}$. This implies that the stationarity curve originates in the half plane $x< 0$. We distinguish two cases:
    \begin{enumerate}
	\item $r<0$. In this case, we verify that $y^{\rm C}(\lambda_0)> 2y_0$. That is, the point $\big(x^{\rm C}(\lambda_0), y^{\rm C}(\lambda_0)\big)$ is “above” the \emph{State-only} mechanism $\text{SO}$ and in particular lies outside \ref{eq:ellipse}. The optimal mechanism is thus obtained at the intersection of the stationarity curve with \ref{eq:ellipse}'s boundary. At this intersection we have $x^{\rm C}(\lambda^*)>0$ and $y^{\rm C}(\lambda^*)\geq 2y_0$. In particular, the second-order incentive compatibility condition $\sigma_{a_i\theta_j}/rt\sigma_{\theta_i}^2\geq 0$ is slack.
	\item $r>0$. In this case, we verify that $y^{\rm C}(\lambda_0)< 2y_0$. It is easy to see that this implies the optimal mechanism subject to obedience only always violates the incentive compatibility condition. We therefore need to explicitly add this constraint to the optimization problem, and it will be binding at the optimum. Hence, the only parameter left to determine is $y^{\rm C}$, which is either  $y^{\rm C}(0)$ when $y^{\rm C}(0)<2y_0$ (in which case the stationarity curve originates inside \ref{eq:ellipse}) or $2y_0$ when the curve originates outside \ref{eq:ellipse}.\qedhere
    \end{enumerate}
\end{proof}

These results highlight a notable structural difference between the consumer-optimal mechanism and the welfare- or revenue-optimal mechanisms: the direction of correlation in action recommendations is reversed. This reversal arises due to the change in the objective function's sign on the cross-action covariance, which alters the designer’s preference over action coordination.

When $r < 0$ (strategic substitutes), the consumer-optimal mechanism is always \emph{deterministic}, and the optimal mechanism subject only to obedience is incentive-compatible. Transfers are positive, and the action recommendations place more weight on the common state and less on the private types, relative to the welfare-optimal, revenue-optimal, and complete-information benchmarks.

When $r > 0$ (strategic complements), the IC constraint is \emph{always binding}, meaning the mechanism issues  recommendations that  do not depend on the competitors' types ($\sigma_{a_i\theta_j}= 0$) and can be implemented without discriminatory transfers.\footnote{The designer can extract a constant payment bounded by  $(\sigma_{a_i\omega}^2/\sigma_{\omega}^2+\delta^2)/2$ per \cref{prop:max-payment}.} Furthermore, we have two cases based on the comparison between $r$ and the complementarity threshold $r^{\star}\eqdef(-3+\sqrt{9+8f})/4$:
\begin{itemize}
    \item For $0 < r < r^{\star}$, the mechanism coincides with SO, i.e., it fully reveals the state. It is deterministic, and the action recommendations put more weight on the state than in the welfare- and revenue-optimal mechanisms.
    \item When $r^{\star} < r < f/2$, the optimal action recommendation is randomized, and hence does not fully reveal the state.
\end{itemize}

\begin{figure}[!t]
	\hfill{}
	\includegraphics[scale=1.1]{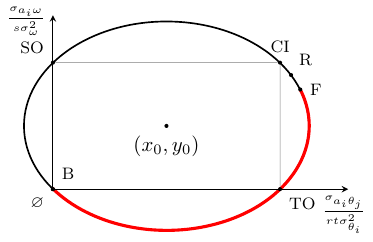}
	\hfill{}
	\includegraphics[scale=1.1]{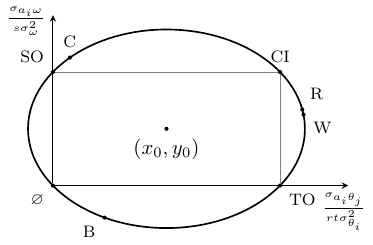}
	\hfill{}
    \caption{Optimal mechanisms for $r>0$ (left) and $r<0$ (right). The consumer-optimal mechanisms are denoted respectively by B and C, when our model is interpreted respectively as a Bertrand or Cournot competition.}
    \label{fig:consumer-opt-app}
\end{figure}

\section{Linear algebra}\label{sec:app-la}

\begin{proposition}\label{prop:matrix}
    For $n\geq 2$ and $(a,b)\in\R^2$, define $J_n(a,b)\eqdef aI_n + b (J_n-I_n)$,
	the matrix in $\M_n(\R)$ whose diagonal entries are all equal to $a$ and
	off-diagonal entries equal to $b$.
	\begin{enumerate}
		\item The determinant of $J_n(a,b)$ is
		    $\det J_n(a,b) =(a-b)^{n-1}\big(a+(n-1)b)$ and when $a\notin\set{-(n-1)b, b}$ its inverse is
			\begin{displaymath}
				J_n^{-1}(a,b)
				= \frac{J_n\big(a+(n-2)b, -b\big)}{(a-b)\big(a+(n-1)b\big)}.
			\end{displaymath}
		\item For a matrix $A\in\M_n(\R)$, $P_\pi A = AP_\pi$ for each
			$\pi\in\sg_n$ iff $A=J_n(a,b)$ for some $(a,b)\in\R^2$. In other
			words, $\spn(I_n,J_n)$ is the commutant of $\set{P_\pi\given
			\pi\in\sg_n}$.
		\item $\spn(I_n,J_n)$ is a commutative algebra.
		\item The matrix $J_n(a,b)$ is positive semidefinite iff $-a/(n-1)\leq
			b\leq a$.
	\end{enumerate}
\end{proposition}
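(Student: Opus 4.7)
The plan is to analyze the matrix $J_n(a,b) = (a-b)I_n + bJ_n$ spectrally, which immediately yields parts (1) and (4), and to deduce parts (2) and (3) from a short orbit argument and elementary algebraic identities. The only non-trivial input is that $J_n = 1_n\tr{1_n}$ has rank one, with $J_n 1_n = n\cdot 1_n$ and $J_n x = 0$ for every $x\perp 1_n$. From this, $J_n(a,b)$ has eigenvalue $a+(n-1)b$ on $\spn(1_n)$ and eigenvalue $a-b$ with multiplicity $n-1$ on $1_n^\perp$.

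For (1), the determinant is the product of the eigenvalues, $(a-b)^{n-1}\big(a+(n-1)b\big)$. For the inverse, I would work inside the commutative algebra $\spn(I_n,J_n)$: the single identity $J_n^2 = nJ_n$ reduces multiplication to a linear computation, and a direct expansion shows $J_n(a,b)\cdot J_n\big(a+(n-2)b,-b\big) = (a-b)\big(a+(n-1)b\big)\cdot I_n$, from which the claimed formula for $J_n^{-1}(a,b)$ follows whenever $a\neq b$ and $a\neq -(n-1)b$. Part (4) is then immediate: positive semidefiniteness is equivalent to both eigenvalues being nonnegative, i.e., $a\geq b$ and $a+(n-1)b\geq 0$, which rewrites as $-a/(n-1)\leq b\leq a$.

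For (2), I would use $P_\pi^{-1}=\tr{P_\pi}$ to rewrite $P_\pi A = AP_\pi$ as $P_\pi A\tr{P_\pi}=A$. In entries, this reads $A_{\pi^{-1}(i),\pi^{-1}(j)}=A_{ij}$ for all $\pi\in\sg_n$ and $(i,j)\in[n]^2$. Since $\sg_n$ has exactly two orbits on $[n]^2$—the diagonal $\set{(i,i)}$ and the off-diagonal $\set{(i,j):i\neq j}$—all diagonal entries of $A$ are equal to some scalar $a$ and all off-diagonal entries equal to some scalar $b$, so $A=J_n(a,b)$. The converse is a one-line check using $P_\pi 1_n = 1_n$. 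Part (3) is then a direct consequence: $\spn(I_n,J_n)$ is closed under multiplication because $I_n J_n = J_n I_n = J_n$ and $J_n^2 = nJ_n$, and commutativity is inherited from $I_n$ and $J_n$ commuting.

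The only step requiring any care is the orbit argument in (2); everything else is a routine consequence of the spectral decomposition and of the single identity $J_n^2=nJ_n$.
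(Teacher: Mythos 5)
Your proof is correct. For parts (1), (3), and (4) it follows the paper's argument essentially verbatim: the same rank-one spectral decomposition of $J_n$ yields the eigenvalues $a-b$ (multiplicity $n-1$) and $a+(n-1)b$, hence the determinant and the positive semidefiniteness criterion, and the inverse comes from the identity $J_n^2=nJ_n$ (you verify the stated formula directly, whereas the paper solves for the coefficients of an ansatz $cI_n+dJ_n$; the computation is the same either way). The one genuine difference is part (2). The paper fixes a transposition $\tau$ swapping $i$ and $j$, extracts the entrywise identities $a_{\tau(k)i}=a_{kj}$, and chains several of them to show that all diagonal entries agree and all off-diagonal entries agree. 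You instead rewrite $P_\pi A=AP_\pi$ as $P_\pi A\tr{P_\pi}=A$, i.e.\ $A_{\pi^{-1}(i),\pi^{-1}(j)}=A_{ij}$ for all $\pi$, and invoke the fact that $\sg_n$ has exactly two orbits on $[n]^2$ — the diagonal and its complement, by $2$-transitivity of $\sg_n$ for $n\geq 2$ — so $A$ is constant on each. Your version is shorter and makes the structural reason transparent (the commutant is spanned by the indicator matrices of the orbits on $[n]^2$, a statement that generalizes to any permutation group), at the cost of appealing to $2$-transitivity rather than exhibiting explicit entry manipulations. Both arguments are complete.
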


\begin{proof}
The matrix $J_n$ is symmetric, hence we can diagonalize it as $J_n = UD\tr{U}$
where $U\in\M_n(\R)$ is orthogonal and $D\in\M_n(\R)$ is diagonal. Furthermore,
$J_n$ has rank $1$, hence its kernel has dimension $n-1$ and $D$ has $n-1$
zeros on its diagonal. Finally, $J_n 1_n = n 1_n$ shows that the remaining
eigenvalue of $J_n$ is $n$, with an associated eigenspace of dimension~$1$. We
can therefore write $D=\diag(0,\dots,0, n)$. Hence, for each $(a,b)\in\R^2$
\begin{equation}\label{eq:diag}
	J_n(a,b) = U\big((a-b)I_n + bD\big)\tr{U}
	=U\begin{bmatrix}
		a-b&0&\cdots&0\\
		0&\ddots&\ddots&\vdots\\
		\vdots&\ddots&a-b&0\\
		0&\cdots&0&a+(n-1)b\\
	\end{bmatrix}\tr{U}
\end{equation}
\begin{enumerate}
	\item We directly compute from \eqref{eq:diag}, $\det
		J_n(a,b)=(a-b)^{n-1}\big(a+(n-1)b\big)$, showing that $J_n(a,b)$ is 
		nonsingular iff $a\neq b$ and $a\neq -(n-1)b$. For such a pair $(a,b)$, we
		look for an inverse of $J_n(a,b)$ of the form $cI_n +d J_n$ for some
		$(c,d)\in\R^2$. Using the identity $J_n^2 = nJ_n$, we get
		\begin{displaymath}
			\big((a-b)I_n + bJ_n\big)(cI_n + dJ_n) = (a-b)c I_n
			+ \big(bc+d(a+(n-1)b)\big) J_n.
		\end{displaymath}
		Hence, $c I_n +  dJ_n$ is an inverse of $J_n(a,b)$ iff
		\begin{displaymath}
			c = \frac{1}{a-b}\quad\text{and}\quad
			d = -\frac{b}{(a-b)\big(a+(n-1)b\big)}
		\end{displaymath}
		yielding the stated expression for $J_n^{-1}(a,b)$.
	\item Let $A$ be a matrix commuting with all permutation matrices. For
		$i\in[n]$ and $j\in[n]\setminus\set{i}$, let $\tau$ be the
		transposition that swaps $i$ and $j$, and let $e_i$ be the $i$th
		standard basis vector. The condition $P_\tau A e_i = AP_\tau e_i$
		implies $a_{\tau(k)i} = a_{kj}$ for all $k\in[n]$. Writing this for
		$k\in\set{i,j}$ and $k\in[n]\setminus\set{i,j}$ (when $n\geq 3$) yields
		\begin{equation}\label{eq:foo-trans}
			a_{ii}=a_{jj},\qquad a_{ij} = a_{ji},\qquad a_{ki} = a_{kj}.
		\end{equation}
		The first equality shows that all diagonal entries are equal and the
		last equality shows that for each row $k$, all off-diagonal entries in
		row $k$ are equal. To compare off-diagonal entries in different rows,
		consider $i'\neq i$ and $j'\neq i'$, then
		\begin{displaymath}
			a_{ij} = a_{ii'} = a_{i'i} = a_{i'j'}
		\end{displaymath}
		where the first and last equalities used the third equality in
		\eqref{eq:foo-trans} and the middle equality used the second equality
		in \eqref{eq:foo-trans}.
		\looseness=-1
	\item By definition, $\spn(I_n, J_n)$ is a subspace of $\M_n(\R)$. The
		fact that it is a commutative algebra follows from the commutativity
		of $I_n$ and $J_n$ combined with the identity $J_n^2 = n J_n$.
	\item The eigenvalues of $J_n(a,b)$ can be read directly from
		\eqref{eq:diag}. The matrix $J_n(a,b)$ is positive
		semidefinite iff all its eigenvalues are nonnegative, that is, $a\geq
		b$ and $a\geq -(n-1)b$. \qedhere
\end{enumerate}
\end{proof}

\section{Symmetry}\label{sec:symmetry-supp}

\begin{definition}\label{def:invariance}
	Let $G$ be a group acting on a set $S$. We say that $x\in S$ is
	\emph{$G$-invariant} if $g\cdot x=x$ for all $g\in G$. Similarly,
	a function $f$ defined on $S$ is \emph{$G$-invariant} if $f(g\cdot x)
	= f(x)$ for all $(x,g)\in S\times G$. Finally, a subset $T\subseteq S$ is
	\emph{$G$-stable} if $g\cdot x\in T$ for all $(x,g)\in T\times G$.
\end{definition}

For $n\geq 1$, the symmetric group $\sg_n$ acts linearly on $\R^n$ by defining for $x\in\R^n$ and $\pi\in\sg_n$, $\pi\cdot x\eqdef \big(x_{\pi^{-1}(1)},\dots, x_{\pi^{-1}(n)}\big)$ . For $\pi\in\sg_n$, define the permutation matrix $P_\pi\in\M_n(\R)$ whose $(i,j)$ entry is $(P_\pi)_{ij} = \ind\set{\pi(j) = i}$, then the group action $(\pi,x)\mapsto \pi\cdot x$ can equivalently be defined as the (left) multiplication by the matrix $P_\pi$. Similarly the symmetric group $\sg_n$ acts linearly on $\M_n(\R)$ with $\pi\cdot A\eqdef P_\pi A \tr{P_\pi}$ for $\pi\in\sg_n$ and $A\in\M_n(\R)$.

The following lemma captures the core of our symmetrization argument below,
showing that restricting our study to symmetric mechanisms is without loss of
generality. Although the lemma is elementary, we were not able to find a
suitable formulation in the literature.

\begin{lemma}[Symmetrization]\label{lemma:symmetrization}
	Let $G$ be a finite group acting linearly on a real vector space $V$, and
	let $\cC$ be a convex and $G$-stable subset of $V$. Consider $x\in\cC$ and
	define
	\begin{displaymath}
		x_G \eqdef \frac{1}{|G|}\sum_{g\in G} g\cdot x.
	\end{displaymath}
	Then $x_G$ is a $G$-invariant element of $\cC$, and $f(x_G)\geq f(x)$ for
	every concave and $G$-invariant function $f:\cC\to\R$. Consequently, for
	such a function $f$, $\sup f(\cC) = \sup f(\cC_G)$, where $\cC_G$ denotes
	the $G$-invariant elements of $\cC$. If $f$ is furthermore affine, we have
	$f(\cC) = f(\cC_G)$.
\end{lemma}

\begin{proof}
	Because $\cC$ is $G$-stable, $g\cdot x\in\cC$ for all $g\in G$, hence
	$x_G\in\cC$ by convexity of $\cC$. Furthermore, $x_G$ is
	$G$-invariant: indeed, for each $h\in G$
	\begin{displaymath}
		h\cdot x_G = \frac{1}{|G|}\sum_{g\in G} (hg)\cdot
		x = \frac{1}{|G|}\sum_{g\in G} g\cdot x = x_G
	\end{displaymath}
	where the first equality uses that $G$ acts linearly on $V$ and the second
	equality uses that $g\mapsto hg$ is a permutation of $G$. Finally,
	\begin{equation}\label{eq:invariant-convex}
		f(x_G) = f\left(\frac{1}{|G|}\sum_{g\in G} g\cdot x\right)
		\geq \frac{1}{|G|}\sum_{g\in G} f(g\cdot x) = f(x)
	\end{equation}
	where the inequality is by concavity of $f$ and the second equality uses
	that $f$ is $G$-invariant.

	The claim $\sup f(\cC) = \sup f(\cC_G)$ then follows immediately when $f$
	is concave and $G$-invariant, and when $f$ is affine,
	\eqref{eq:invariant-convex} becomes an equality, implying that $f(\cC)
	= f(\cC_G)$.
\end{proof}

The following lemma shows that the set of obedient mechanisms is $\sg_n$-stable.

\begin{lemma}\label{lemma:obedience-symmetry}
	Assume that $(u_i)_{i\in[n]}$ defines a symmetric game and let
	$(a,\theta,\omega)$ be an obedient mechanism. Then, for all 
	$\pi\in\sg_n$, the permuted mechanism $(P_\pi a, P_\pi\theta, \omega)$ is
	also obedient.
\end{lemma}

\begin{proof}
	Consider a permutation $\pi\in\sg_n$, and an obedient mechanism
	$(a,\theta,\omega)$: for all $i\in[n]$, and
	$a'\in\R$
	\begin{displaymath}
	\E[u_i(a_i, a_{-i};\theta_i,\omega)\given a_i, \theta_i]
	\geq \E[u_i(a', a_{-i};\theta_i,\omega)\given a_i, \theta_i]\,.
	\end{displaymath}
	By symmetry of the game, the previous inequality is equivalent to
	\begin{displaymath}
		\E[u_{\pi(i)}(a_{j}, a_{-j};\theta_j,\omega)\given a_j, \theta_j]
		\geq \E[u_{\pi(i)}(a_j', a_{-j};\theta_j,\omega)\given a_j, \theta_j]\,
	\end{displaymath}
	where $j=\pi^{-1}(i)$. This inequality is exactly the obedience constraint
	of player $\pi(i)$ for the permuted mechanism $(P_\pi a,P_\pi \theta,
	\omega)$. Since this is true for all $i$, we conclude that the permuted
	mechanism is also obedient.
\end{proof}

\looseness=-1
Let $\cO$ be the set of all pairs $(\mu,\K)$ where $\mu$ and $\K$ denote
respectively the mean vector and covariance matrix of an obedient mechanism. Recall that by \cref{prop:obedience-char} of \cref{sec:char-app} (stated for non-symmetric covariance matrices), $\mu$'s value is fixed by \eqref{eq:obedience-mean}, and the covariance matrix must satisfy the $2n$ linear obedience constraints \eqref{eq:obedience-var} and positive semi-definiteness: $\K\in\S_{2n+1}^+(\R)$. Since $\mu$ is pinned down by obedience, it is effectively not a design parameter. We henceforth drop it from our
notations and simply use $\cO$ to denote the set of covariance matrices of
obedient mechanisms. A general optimization problem over obedient mechanisms
can therefore be written as
\begin{equation}\label{eq:opt-gen}
    \sup_{\K\in\cO} F(\K),
\end{equation}
where $F:\M_{2n+1}(\R)\to\R$. We focus on the case where $F$ is
a concave function, in which case \eqref{eq:opt-gen} defines a convex
optimization problem due to $\S_n^+(\R)$ being convex.

\looseness=-1
A \emph{symmetric} mechanism is one for which the distribution of $(a,\theta,\omega)$ is invariant under a relabeling of the players (cf.\ \cref{def:symmetric}). Let $\cO_s$ be the subset of $\cO$ consisting of covariance matrices of \emph{obedient and symmetric} mechanisms, that is the set of $\sg_n$-invariant elements of $\cO$. 

The following proposition shows that when both the prior distribution and the objective function $F$ in \eqref{eq:opt-gen} are symmetric (that is, $\sg_n$-invariant), then we can restrict to symmetric mechanisms in \eqref{eq:opt-gen} without loss of generality.

\begin{proposition}\label{prop:opt-symmetry}
    Assume that the prior distribution on players' types $\theta$ is symmetric, and let $F:\cO\to\R$ be a concave and symmetric function. Then
    \begin{displaymath}
	\sup_{\K\in\cO} F(\K) =\sup_{\K\in\cO_s} F(\K).
    \end{displaymath}
\end{proposition}

\begin{proof}
    As already observed above, the set $\cO$
    is convex. For $\K\in\M_{2n+1}(\R)$ and $\pi\in\sg_n$, let $\K^\pi$ be the
    matrix obtained by relabeling the players according to $\pi$, as in
    \eqref{eq:perm-params}. It is easy to see that $(\pi,\K)\mapsto \pi\cdot\K$
    defines a linear action of $\sg_n$ on $\M_{2n+1}(\R)$. Furthermore, $\cO$
    is stable under this action: indeed, if $\K$ is the covariance matrix of a
    Gaussian and obedient mechanism $(a,\theta,\omega)$, then $\K^\pi$ is covariance matrix of the permuted mechanism $(P_\pi a,
    P_\pi\theta,\omega)$, which is obedient by \cref{lemma:obedience-symmetry}.
    Applying \cref{lemma:symmetrization} thus concludes the proof.
\end{proof}


Note that the averaged mechanism resulting from \cref{lemma:symmetrization} is \emph{different} from the usual lottery that first draws a permutation of the players uniformly at random and then applies the original mechanism to this permutation. While a lottery would certainly preserve obedience (a mixture of obedient mechanisms is still obedient), the resulting mechanism would no longer be Gaussian, because a nontrivial mixture of normal distributions is not normal. In contrast, our symmetrization argument is expressed directly at the level of the moments of the mechanism and exploits the convexity of the constraints on the covariance matrix $\K$.

\begin{remark}\label{rem:compactness}
    \Cref{cor:obedience-char-bis} shows that the set $\cO_s$ is compact whenever $r\in\big(-1, \frac{1}{n-1}\big)$ which guarantees that the supremum in \cref{prop:opt-symmetry} is reached when $F$ is continuous. In contrast, when $r\notin\big(-1,\frac 1 {n-1}\big)$, the set $\cO_s$ is unbounded as we now show by explicitly constructing unbounded rays. Consider a random perturbation of the prior's Bayes–Nash equilibrium \eqref{eq:bne}:
    \begin{displaymath}
	a_i = \mu_{a_i} + t(\theta_i-\mu_{\theta_i}) + \delta \eps_i,
    \end{displaymath}
    where $\mu_{a_i}$ is given by \eqref{eq:obedience-mean} as in any obedient mechanism\footnote{At the endpoints $r\in\set{-1,\frac 1{n-1}}$, the linear system defining $\mu_{a_i}$ fails to be invertible. This is the limit case where obedient mechanisms, including the complete information Nash equilibrium no longer generically exist.} and with correlated random perturbations $\eps_i\sim\mathcal{N}(0,1)$. Writing $\rho\eqdef\cov(\eps_i,\eps_j)$, by \cref{prop:matrix}, we must have $-\frac 1 {n-1}\leq \rho\leq 1$ for the covariance matrix of $\eps$ to be PSD. Since $\sigma_{a_i\theta_j} = 0$ and $\sigma_{a_i\theta_i}=t\sigma_{\theta_i}^2$, the second covariance constraint is satisfied, and the first obedience constraint immediately implies that
    \begin{displaymath}
	\rho = \frac 1{(n-1)r},
    \end{displaymath}
    which satisfies $-\frac 1{n-1}\leq\rho\leq 1$ when $r\geq \frac 1 {n-1}$ or $r\leq -1$. In other words, the correlation coefficient $\rho$ is determined by obedience, but the variance $\delta^2$ of the noise is a free parameter that we can let grow to infinity. This shows that adding carefully correlated noise to the Bayes–Nash equilibrium lets us create obedient action recommendations with arbitrarily large variance. In this case, the mechanism is not providing the players with any information about the fundamentals but serves purely as a correlation device. Of course, this is only possible in “extreme” complements or substitutes situations. Finally, note that there is nothing specific about the Bayes–Nash equilibrium in the previous construction: the same random perturbation method works the same starting from any obedient mechanism.
\end{remark}

\end{document}